\theoremstyle{definition}
\newtheorem{theorem}{Theorem}
\newtheorem{proposition}{Proposition}
\theoremstyle{remark}
\newtheorem{remark}{Remark}
\def \bR {\mathbb{R}}
\def \bN {\mathbb{N}}
\def \1{{\mathbf{1}}}
\def \cB {\mathcal{B}}
\def \cC {\mathcal{C}}
\def \cD {\mathcal{D}}
\def \cF {\mathcal{F}}
\def \cJ {\mathcal{J}}
\def \cM {\mathcal{M}}
\def \cN {\mathcal{N}}
\def \cO {\mathcal{O}}
\def \cQ {\mathcal{Q}}
\def \cW {\mathcal{W}}
\def \core {\mathrm{core}}
\def \conv {\mathrm{conv}}
\def \cone {\mathrm{cone}}
\def \ext {\mathrm{ext}}
\begin{document}

\title{The core of games on ordered structures and graphs\footnote{This is an
    updated version of the paper that appeared in 4OR, (2009) 7:207-238.}}

\date{}

\author{Michel GRABISCH\thanks{Paris School of Economics, University of Paris I.
 106-112, Bd. de l'H\^opital, 75013 Paris, France. Email:
\texttt{michel.grabisch@univ-paris1.fr}}}

\maketitle

\begin{abstract}
In cooperative games, the core is the most popular solution concept, and its
properties are well known. In the classical setting of cooperative games, it is
generally assumed that all coalitions can form, i.e., they are all feasible. In many
situations, this assumption is too strong and one has to deal with some unfeasible
coalitions. Defining a game on a subcollection of the power set of the set of
players has many implications on the mathematical structure of the core,
depending on the precise structure of the subcollection of feasible
coalitions. Many authors have contributed to this topic, and we give a unified
view of these different results.

MSC Codes: 91A12, 06A06, 90C27
\end{abstract}

{\bf Keywords:} TU-game, solution concept, core, feasible coalition, communication
  graph, partially ordered set

\section{Introduction}\label{sec:intro}
Let $N:=\{1,\ldots,n\}$ be a finite set of players. We consider the situation
where these players can form coalitions, and the profit given by the cooperation
of the players in a coalition can be freely distributed among its members: this
is in general referred to as cooperative profit games with transferable utility,
which we will abbreviate in the sequel as \emph{TU-games} (see, e.g.,
\citet{dri88,pesu03,brditi05}).

Let $v$ be a TU-game, that is, a set function $v:2^N\rightarrow\bR$ such that
$v(\emptyset)=0$, assigning to each coalition $S\subseteq N$ its
worth (profit) $v(S)$. Let us assume that forming the grand coalition $N$ is the
best way to generate profit. One of the main problems in
cooperative game theory is to define a rational sharing among all players of
the total worth $v(N)$ of the game. Any sharing is called a \emph{solution} of
the game, since it solves the (difficult) problem of sharing the cake.

The literature on solutions of TU-games is very abundant, and many concepts of
solution have been proposed. One may distinguish among them two main families,
namely those solutions which are single-valued, and those which are
set-valued. In the first category, to each game is assigned a single solution,
which most of the time exists. Best known examples are the Shapley value
\citep{sha53}, the Banzhaf value \citep{ban65} and any other power index used in
voting theory (see, e.g., \citet{fema98}), the nucleolus \citep{sch69}, the
$\tau$-value \citep{tij81}, etc. In the second category, to each game a set of
solutions is assigned. This is the case of the core \citep{gil53,sha71}, the
bargaining set \citep{auma64,dama63}, the kernel \citep{dama65}, etc.

Among all these solution concepts, the core remains one of the most
appealing concepts, due to its simple and intuitive definition. Roughly speaking,
it is the set of sharing vectors for which ``nobody can complain'', or more
exactly, which are coalitionally rational. This means that no coalition can be
better off by splitting from the grand coalition $N$, i.e., for every
$S\subseteq N$, the payoff $x(S)$ given to $S$ is at least equal to $v(S)$, the
profit that $S$ can make without cooperating with the other players. The core
may be empty, but when nonempty, it ensures in some sense the stability of the
set of players, hence its interest. 

The core is an important notion in economics. In an exchange economy, the core
is defined as the set of situations where no coalition of agents can improve the
utility of its members by reassigning the initial resources of its own members
among them \citep{desc63}. Besides, there are many examples in economics where a
common good or resource has to be shared among several users (e.g., a river
supplying the water of several towns). The problem of sharing the cost among all
the users in a rational way precisely amounts to find a solution like the core
\citep{amsp02,brla05,brlava07,khm09}.  The core is also well known in decision
theory and in the field of imprecise probabilities (see the monograph of
\citet{wal91}, and also \citet{chja89}): given a capacity, i.e., a monotonic
game $v$ such that $v(N)=1$ \citep{cho53} representing the uncertainty on the set of states
of nature, its core is the set of probability measures compatible with the
available information on uncertainty. Conversely, given a family of probability
measures representing some uncertainty on the set of states of nature, its lower
envelope defines a capacity.  
  
The core has been widely studied, and its properties are well known. In
particular, when nonempty it is a bounded convex polyhedron, and the famous
Bondareva theorem tells us when the core is nonempty \citep{bon63}, while
 \citet{sha71} and later \citet{ich81} found the vertices of the
core for convex games. 

The classical view in cooperative game theory is to consider that every
coalition can form, i.e., a game $v$ is a mapping defined on $2^N$, the set of
all subsets of $N$. A view closer to reality reveals that it is not always
possible to assume that every coalition can form, so that one should distinguish
between \emph{feasible} and \emph{unfeasible} coalitions. For example, some
hierarchy may exist on the set of players, and feasible coalitions are those
which respect this hierarchy, in the sense that subordinates should be present
(games with precedence constraints, \citet{fake92}). Another example is when
coalitions are the connected subgraphs of a communication graph, depicting who
can communicate with whom \citep{mye77a}. More simply, when considering political
parties, leftist and rightist parties cannot in general make alliance. In fact,
many authors have studied the case where the set of feasible coalitions is a
subcollection of $2^N$, as this paper will show.

The study of the core under such a general framework becomes much more
difficult. Surprisingly, even if the core, when nonempty, is still a convex
polyhedron, it need not be bounded, and moreover, it need not have vertices.
The structure of the core for convex games, perfectly clear in the classical
case, is complicated by the fact that it is not always possible to speak of
convex games in the usual sense, because the definition of convexity works for a
collection of feasible coalitions closed under union and intersection. The aim
of this survey, which is an updated version of \citep{gra09a}, is precisely to
give a unified view of the scattered results around these questions.

The paper is organized as follows. Section~\ref{sec:prer} introduces the basic
material on partially ordered sets and polyhedra. Then Section~\ref{sec:setsys}
is devoted to a comparative study of various families of set systems
(collections of feasible coalitions). Section~\ref{sec:core} defines the core
and the positive core, and gives the main classical results that are valid when all
coalitions are feasible. Section~\ref{sec:struccore} studies the structure of the
core under various assumptions on the set system, while Section~\ref{sec:poco} does
the same for the positive core. Finally, Section~\ref{sec:graph} studies the case of
communication graphs. 

Throughout the paper, the following notation will be used: we denote by $\bR_+$
the set of nonnegative real numbers; $N=\{1,\ldots,n\}$ is the set of players;
for any subset $S\subseteq N$, $\1_S$ denotes the characteristic function (or
vector) of $S$. For singletons, pairs, etc., we often omit braces and commas to avoid a
heavy notation: we write $S\setminus i$, $123$ instead of $S\setminus \{i\}$ and
$\{1,2,3\}$.

\section{Some prerequisites}\label{sec:prer}
\subsection{Partially ordered sets}\label{sec:poset}
The reader can consult, e.g., \citet{dapr90}, \citet{bir67}, and \citet{gratz98}
for more details. A \emph{partially ordered set} (or \emph{poset} for short)
$(P,\leq)$ (or simply $P$ if no confusion occurs) is a set $P$ endowed with a
partial order $\leq$ (i.e, a reflexive, antisymmetric and transitive binary
relation). As usual, $x<y$ means $x\leq y$ and $x\neq y$, while $x\geq y$ is
equivalent to $y\leq x$. Two elements $x,y\in P$ are \emph{incomparable}, and we
denote this by $x||y$, if neither $x\leq y$ nor $y\leq x$ hold.  A useful
example of poset in this paper is $(2^N,\subseteq)$. We say that $x$ \emph{is
covered by} $y$, and we write $x\prec y$, if $x<y$ and there is no $z\in P$ such
that $x< z< y$. A \emph{chain} in $P$ is a sequence of elements $x_1,\ldots,
x_q$ such that $x_1<\cdots < x_q$, while in an \emph{antichain}, any two
elements are incomparable. A chain from $x_1$ to $x_q$ is \emph{maximal} if no
other chain can contain it, i.e., it is a sequence of elements $x_1,\ldots, x_q$
such that $x_1\prec\cdots \prec x_q$. The \emph{length} of a chain is its number
of elements minus 1.

A subset $Q\subseteq P$ is a \emph{downset} of $P$ if for any $x\in Q$, $y\leq x$ implies
$y\in Q$ (and similarly for an upset)\footnote{Some authors use instead the
words \emph{ideals} and \emph{filters}. This is however incorrect, since in the
standard terminology, an ideal is a downset closed under supremum, and a filter is
an upset closed under infimum.}. The set of all downsets of $P$ is denoted by
$\cO(P)$. For any $x\in P$, $\downarrow\!x:=\{y\in P\mid y\leq x\}$ is the
downset generated by $x$ (often called \emph{principal ideal}). 

An element $x\in P$ is \emph{maximal} if there is no $y\in P$ such that $y>x$ (and
similarly for a \emph{minimal} element). $x\in P$ is the (unique)
\emph{greatest} element (or \emph{top} element) of $P$ if $x\geq y$ for all
$y\in P$ (and similarly for the \emph{least} element, or \emph{bottom}
element). Suppose $P$ has a least element $\bot$. Then $x$ is an \emph{atom} of $P$
if $x\succ \bot$. Let $Q\subseteq P$. The element $x\in P$ is an \emph{upper bound} of $Q$
if $x\geq y$ for all $y\in Q$ (and similarly for a \emph{lower bound}). For
$x,y\in P$, the \emph{supremum} of $x,y$, denoted by $x\vee y$, is the least
upper bound of $\{x,y\}$, if it exists (and similarly for the \emph{infimum} of
$x,y$, denoted by $x\wedge y$).

A poset $L$ is a \emph{lattice} if every pair of elements
$x,y\in L$ has a supremum and an infimum. A lattice $L$ is \emph{distributive} if
$\vee,\wedge$ obey distributivity, that is, $x\vee(y\wedge z)=(x\vee
y)\wedge(x\vee z)$ or equivalently $x\wedge(y\vee z)=(x\wedge y)\vee(x\wedge
z)$, for all $x,y,z\in L$. If $L$ is finite, then it has a least and a greatest
element, which we denote by $\bot,\top$ respectively. An element $x\neq \bot$ is
\emph{join-irreducible} if it cannot be expressed as a supremum of other
elements, or equivalently, if it covers only one element. Atoms are
join-irreducible elements. We denote by $\cJ(L)$ the set of all join-irreducible
elements. An important result which will be useful in the sequel is the theorem
of \citet{bir33}: it says that if the lattice $(L,\leq)$ is distributive,
then it is isomorphic to $\cO(\cJ(L))$, where it is understood that $\cJ(L)$ is
endowed with $\leq$, and that the set of downsets is endowed with
inclusion. Conversely, any poset $P$ generates a distributive lattice
$\cO(P)$. This is illustrated on Figure~\ref{fig:birk}
\begin{figure}[htb]
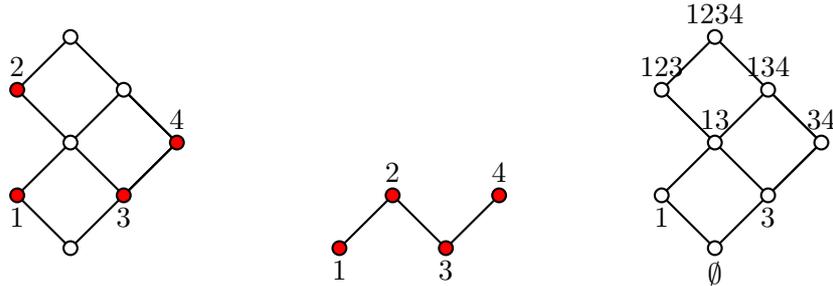

\begin{center}
\psset{unit=0.7cm}
\pspicture(0,0)(3,4)
\pspolygon(1,0)(3,2)(2,3)(0,1)
\pspolygon(2,1)(3,2)(1,4)(0,3)
\pscircle[fillstyle=solid](1,0){0.15}
\pscircle[fillstyle=solid,fillcolor=red](0,1){0.15}
\pscircle[fillstyle=solid](1,2){0.15}
\pscircle[fillstyle=solid,fillcolor=red](2,1){0.15}
\pscircle[fillstyle=solid,fillcolor=red](3,2){0.15}
\pscircle[fillstyle=solid,fillcolor=red](0,3){0.15}
\pscircle[fillstyle=solid](2,3){0.15}
\pscircle[fillstyle=solid](1,4){0.15}
\uput[-90](0,1){\small $1$}
\uput[-90](2,1){\small $3$}
\uput[90](3,2){\small $4$}
\uput[90](0,3){\small $2$}
\endpspicture
\hspace*{2cm}
\psset{unit=0.7cm}
\pspicture(0,0)(3,4)
\psline(0,0)(1,1)(2,0)(3,1)
\pscircle[fillstyle=solid,fillcolor=red](0,0){0.15}
\pscircle[fillstyle=solid,fillcolor=red](1,1){0.15}
\pscircle[fillstyle=solid,fillcolor=red](2,0){0.15}
\pscircle[fillstyle=solid,fillcolor=red](3,1){0.15}
\uput[-90](0,0){\small $1$}
\uput[90](1,1){\small $2$}
\uput[-90](2,0){\small $3$}
\uput[90](3,1){\small $4$}
\endpspicture
\hspace*{2cm}
\psset{unit=0.7cm}
\pspicture(0,0)(3,4)
\pspolygon(1,0)(3,2)(2,3)(0,1)
\pspolygon(2,1)(3,2)(1,4)(0,3)
\pscircle[fillstyle=solid](1,0){0.15}
\pscircle[fillstyle=solid](0,1){0.15}
\pscircle[fillstyle=solid](1,2){0.15}
\pscircle[fillstyle=solid](2,1){0.15}
\pscircle[fillstyle=solid](3,2){0.15}
\pscircle[fillstyle=solid](0,3){0.15}
\pscircle[fillstyle=solid](2,3){0.15}
\pscircle[fillstyle=solid](1,4){0.15}
\uput[-90](1,0){\small $\emptyset$}
\uput[-90](0,1){\small $1$}
\uput[-90](2,1){\small $3$}
\uput[90](0,3){\small $123$}
\uput[90](3,2){\small $34$}
\uput[90](1,2){\small $13$}
\uput[90](2,3){\small $134$}
\uput[90](1,4){\small $1234$}
\endpspicture
\end{center}
\caption{Left: a distributive lattice $L$. Join-irreducible elements are those in
  grey. Middle: the poset $\cJ(L)$ of join-irreducible elements. Right: the set
  $\cO(\cJ(L))$ of all downsets of $\cJ(L)$ ordered by inclusion, which is
  isomorphic to $L$}
\label{fig:birk}
\end{figure}   

Let $P$ be a poset, and $x\in P$. The \emph{height} of $x$ is the length of a
longest chain in $P$ from a minimal element to $x$.  The \emph{height} of a
lattice $L$ is the height of its top element, i.e., it is the length of a
longest chain from bottom to top. When the lattice is distributive, its height
is $|\cJ(L)|$.

\subsection{Inequalities and polyhedra}\label{sec:poly}
We recall only some basic facts useful for the sequel (see, e.g., \citet{zie95},
\citet{fakest02} for details). Our exposition mainly follows \citet[\S 1.2]{fuj05b}

We consider a set of linear equalities and inequalities with real constants
\begin{align}
\sum_{j=1}^n a_{ij}x_j & \leq b_i \quad (i\in I)\label{eq:p1}\\
\sum_{j=1}^n a'_{ij}x_j & = b'_i \quad (i\in E).\label{eq:p2}
\end{align}
This system defines an intersection of halfspaces and hyperplanes, called a (closed convex)
polyhedron.  A set $C\subseteq \bR^n$ is a \emph{convex cone} (or simply a
\emph{cone}) if $x,y\in C$ implies that $\alpha x +\beta y\in C$ for all
$\alpha,\beta\geq 0$ (\emph{conic} combination). The cone is \emph{pointed} if
$C\cap (-C)=\{0\}$ (equivalently, if it has an extreme point, see below).  An
\emph{affine set} $A$ is the translation of a subspace of the vector space
$\bR^n$. Its dimension is the dimension of the subspace. A \emph{line} is a
one-dimensional affine set, and a \emph{ray} is a ``half-line'', i.e., a
translation of a set given by $\{\alpha x\mid \alpha\geq 0\}$ for some
$x\in\bR^n$, $x\neq 0$. An \emph{extreme ray} of a cone is a ray whose supporting vector
cannot be expressed as a convex combination of the supporting vectors of other
rays. Any cone can be expressed as the conic combination of its extreme rays. An
\emph{extreme point} or \emph{vertex} of a polyhedron $P$ is a point in $P$
which cannot be expressed as a convex combination of other points in $P$. A
polyhedron is pointed if it contains an extreme point. The \emph{recession cone}
$C(P)$ of a polyhedron $P$ defined by (\ref{eq:p1}) and (\ref{eq:p2}) is defined
by
\begin{align}
\sum_{j=1}^n a_{ij}x_j & \leq 0 \quad (i\in I)\label{eq:p3}\\
\sum_{j=1}^n a'_{ij}x_j & = 0 \quad (i\in E).\label{eq:p4}
\end{align}
The recession cone is either a pointed cone (possibly reduced to $\{0\}$) or it
contains a line. The following basic properties are fundamental:
\begin{enumerate}
\item $P$ has rays (but no line) if and only if $C(P)$ is a pointed cone
different from $\{0\}$;
\item  $P$ is pointed  if and only if $C(P)$ does not contain a line, or
  equivalently, if the system (\ref{eq:p4}) and
\[
\sum_{j=1}^n a_{ij}x_j  = 0 \quad (i\in I)
\]
has 0 as unique solution.
\item $P$ is a \emph{polytope} (i.e., a bounded polyhedron)  if and only if $C(P)=\{0\}$.
\end{enumerate}
The fundamental theorem of polyhedra asserts that any pointed polyhedron $P$ defined by
a system (\ref{eq:p1}) and (\ref{eq:p2}) is the Minkowski sum of its recession
cone (generated by its extreme rays; this is the conic part of $P$) and the
convex hull of its extreme points (the convex part of $P$):
\[
P = \cone(r_1,\ldots, r_k) + \conv(\ext(P))
\]
where $r_1,\ldots,r_k$ are the extreme rays of $C(P)$, $\cone()$ and $\conv()$
indicate respectively the set of all conic and convex combinations, and $\ext()$ is the set
of extreme points of some convex set.  

If $P$ is not
pointed, then it reduces to its recession cone up to a translation. 

\medskip

Finally, suppose that in the system (\ref{eq:p1}) and (\ref{eq:p2}) defining a
polyhedron $P$, the equalities in (\ref{eq:p2}) are independent (i.e., $P$ is
$(n-|E|)$-dimensional). A \emph{$p$-dimensional face} ($0\leq p\leq n-|E|$) of
$P$ is a set of points in $P$ satisfying in addition $q=n-|E|-p$ independent
equalities in (\ref{eq:p1}). In particular, $P$ itself is a face of $P$ ($q=0$),
a \emph{facet} is a $(n-|E|-1)$-dimensional face ($q=1$), and a vertex is a
0-dimensional face ($q=n-|E|$). Clearly, no vertex can exist (i.e., $P$ is not
pointed) if $|I|<n-|E|$.

\section{Set systems}\label{sec:setsys}
Our study deals with games defined on a collection of feasible coalitions. In this
section, we introduce various possible structures for these collections. The
weakest requirement we introduce is that the collection should include the grand
coalition, and for mathematical convenience, the empty set. There are however
exceptions to this rule.  

A \emph{set system} $\cF$ on $N$ is a subset of $2^N$ containing $\emptyset$ and
$N$. Endowed with inclusion,  $\cF$ is a poset with top and bottom elements
$N,\emptyset$ respectively. The set of maximal chains from $\emptyset$ to $N$ in
$\cF$ is denoted by $\cC(\cF)$. For any $S\subseteq N$, we put $\cF(S):=\{T\in
\cF\mid T\subseteq S\}$.

A set system $\cF$ is \emph{atomistic} if $\{i\}\in\cF$ for all $i\in N$.

For any collection $\cF\subseteq 2^N$, we introduce 
\[
\widetilde{\cF}:=\{S\in 2^N\mid S=F_1\cup\cdots\cup F_k, \quad
F_1,\ldots,F_k\in\cF \text{ pairwise disjoint}\}
\]
the family \emph{generated} by $\cF$ \citep{fai89}.

\medskip

Let $\cF$ be a set system. A \emph{TU-game} (or simply \emph{game}) on $\cF$ is a mapping
$v:\cF\rightarrow \bR$ such that $v(\emptyset)=0$. The game is \emph{monotonic}
if for $S,T\in\cF$ such that $S\subseteq T$, we have $v(S)\leq v(T)$ (and
therefore $v$ is nonnegative).

When $\cF=2^N$, the notion of convexity and superadditivity are well known. A
game is said to be \emph{convex} if for any $S,T\in 2^N$, we have
\[
v(S\cup T) + v(S\cap T) \geq v(S) + v(T).
\]
A game $v$ is \emph{superadditive} if the above inequality holds for disjoint
subsets, i.e., for all $S,T\in 2^N$ such that $S\cap T=\emptyset$, 
\[
v(S\cup T)\geq v(S) + v(T).
\] 
The above notions generalize as follows. 
Assume $\cF$ is a (set) lattice. A game $v$ on $\cF$ is \emph{convex} if for any
$S,T\in\cF$,
\[
v(S\vee T) + v(S\wedge T) \geq v(S) + v(T). 
\]
Superadditivity amounts to the above inequality restricted to subsets $S,T$ such
that $S\wedge T=\emptyset$. 
Obviously, one could not speak of convex game if the set system is not a
lattice. It is however possible to find alternative definitions for weaker
structures, as will be seen in the sequel (see Section~\ref{sec:pocoaug}, and
supermodular games in Section~\ref{sec:fagr}).

The \emph{M\"obius transform} of $v$ on $\cF$ is a real-valued mapping $m^v$ on
$\cF$ given implicitely by the system of equations
\[
v(S) = \sum_{F\subseteq S, F\in \cF}m^v(F),\quad S\in\cF.
\] 
As it is well known, when $\cF=2^N$, we obtain $m^v(S) = \sum_{F\subseteq
  S}(-1)^{|S\setminus F|}v(F)$. The M\"obius transform is known as the
  \emph{Harsanyi dividends} \citep{har63} in game theory.

Given these general definitions, we turn to the study of the main families of
set systems.

\subsection{Regular set systems}
Let $1\leq k\leq n$. A set system is \emph{$k$-regular} if all maximal chains
from $\emptyset$ to $N$ have the same length $k$ \citep{xigr09}.  A $n$-regular
set system is simply called a \emph{regular set system}
\citep{hogr05a,lagr06b}. Equivalently, $\cF$ is regular if and only if for
$S,T\in\cF$ such that $S\succ T$, we have $|S\setminus T|=1$.

Any regular set system satisfies:
\begin{enumerate}
\item {\bf One-point extension:} if $S\in \cF$, $S\neq N$, then $\exists i\in
  N\setminus S$ such that $S\cup i\in \cF$; 
\item {\bf Accessibility:} if $S\in \cF$, $S\neq \emptyset$, then $\exists i\in
  S$ such that $S\setminus i\in \cF$.
\end{enumerate}
The converse is not true (see Figure~\ref{fig:1}).
\begin{figure}[htb]
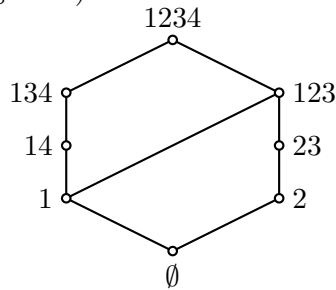

\begin{center}
\psset{unit=0.7cm} 
\pspicture(-0.5,-0.5)(4,4)
\pspolygon(2,0)(0,1)(0,3)(2,4)(4,3)(4,1)
\psline(0,1)(4,3)
\pscircle[fillstyle=solid](2,0){0.1}
\pscircle[fillstyle=solid](0,1){0.1}
\pscircle[fillstyle=solid](0,2){0.1}
\pscircle[fillstyle=solid](0,3){0.1}
\pscircle[fillstyle=solid](2,4){0.1}
\pscircle[fillstyle=solid](4,1){0.1}
\pscircle[fillstyle=solid](4,2){0.1}
\pscircle[fillstyle=solid](4,3){0.1}
\uput[-90](2,0){\small $\emptyset$}
\uput[180](0,1){\small 1}
\uput[180](0,2){\small 14}
\uput[180](0,3){\small 134}
\uput[90](2,4){\small 1234}
\uput[0](4,1){\small 2}
\uput[0](4,2){\small 23}
\uput[0](4,3){\small 123}
\endpspicture
\end{center}
\caption{A set system satisfying one-point extension and accessibility, but
  which is not $k$-regular}
\label{fig:1}
\end{figure}

In a $k$-regular set system $\cF$, for any $S,T\in
\cF$, all maximal chains from $S$ to $T$ have the same
length.
\begin{remark}
Obviously, regular set systems (and to a less extent, $k$-regular set systems)
offer a convenient mathematical framework because all maximal chains have length
$n$, and for this reason many notions (in particular marginal worth vectors, and
therefore the Shapley value \citep{sha53} and the Weber set (see
Section~\ref{sec:core})) can be defined as in the classical case $\cF=2^N$. One
can however find motivations for such structures which are more
game-theoretically oriented:
\begin{enumerate}
\item The set of connected coalitions in a connected communication graph is a
  regular set system (see Section~\ref{sec:graph}). The converse is false: $\{i\}\in\cF$
  for all $i\in N$ is a necessary condition (for necessary and sufficient
  conditions: see augmenting systems in Section~\ref{sec:aug}).
\item Maximal chains correspond to permutations on $N$ (or total orders on
  players). A regular set system forbids some permutations, i.e., some orderings
  of the players to enter the game. With $k$-regular set systems, $k<n$, players
  may enter the game by groups.   
\end{enumerate}
\end{remark}

\subsection{Convex geometries and antimatroids}\label{sec:cogoanti}
A \emph{convex geometry} $\cF$ \citep{edja85} is a collection of subsets of $N$
containing the empty set, closed under intersection, and satisfying the
one-point extension property. Necessarily $N\in \cF$, hence it is a set system,
and moreover a regular set system.

An \emph{antimatroid} $\cF$ \citep{dil40} is a collection of subsets of $N$ containing the
empty set, closed under union, and satisfying the accessibility property. Any
antimatroid satisfies the \emph{augmentation} property:
\[
S,T\in\cF \text{ with } |T|>|S|\Rightarrow \exists i\in T\setminus S \text{
  s.t. } S\cup i\in\cF.
\]
If $\cF$ satisfies $\bigcup \cF=N$, then $N\in\cF$. Such antimatroids are called
\emph{normal} by \citet{bri09}.
\begin{remark}
\citet{albibrji04} relate antimatroids to permission
structures; see Section~\ref{sec:cupcap}. However, the relation is somewhat artificial
since antimatroids do not always correspond to permission structures (this is
the case of systems closed under $\cup,\cap$). The unusual word ``poset
antimatroids'' is used, and means the set of upsets (or downsets) of a
poset. These are antimatroids closed under intersection. But it is well known
that such set systems are distributive lattices $\cO(N)$ (and so could be called poset
convex geometries as well), hence closed under union and
intersection (see Section~\ref{sec:poset}).
\end{remark}

\subsection{Set lattices}\label{sec:lat}
If a set system is a lattice, we call it a \emph{set lattice}. It need not be
closed under $\cap,\cup$, nor be a $k$-regular set system (see for example the
pentagon on Figure~\ref{fig:2a} (ii)).

If the lattice is distributive, then we benefit from Birkhoff's theorem and we
know that it is generated by a poset $P$. However this poset is not always $N$
endowed with some partial order. The following can be easily proved and
clarifies the situation (see \citet{xigr09}):
\begin{proposition}\label{prop:3.3}
Let $\cF$ be a distributive set lattice
on $N$ of height $k$. The following holds.
\begin{enumerate}
\item $\cF$ is a $k$-regular set system, which is generated by a
  poset $P$ of $k$ elements, i.e., $\cF$ is isomorphic to $\cO(P)$.
\item $\cF$ is closed under union and intersection if and only if
  $\cF$ is isomorphic to $\mathcal{O}(P)$, where $P$ can be chosen to be a partition of $N$.
\item $k=n$ if and only if  $\cF$ is isomorphic to $\mathcal{O}(N)$.
\end{enumerate}
\end{proposition}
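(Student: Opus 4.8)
The plan is to derive all three items from Birkhoff's theorem \citep{bir33} (recalled in Section~\ref{sec:poset}), applied systematically with $P:=\cJ(\cF)$. For item~(i) I would first get, via Birkhoff's theorem, an order isomorphism $\cF\cong\cO(\cJ(\cF))$, and then use the fact (also in Section~\ref{sec:poset}) that the height of a distributive lattice equals its number of join-irreducible elements, so that $|\cJ(\cF)|=k$; thus $\cF$ is generated by the $k$-element poset $P=\cJ(\cF)$. For $k$-regularity I would observe that every maximal chain from bottom to top in $\cO(P)$ adjoins, at each step, one minimal element of the complement, hence has length $|P|=k$; since an order isomorphism carries maximal chains to maximal chains of the same length, and $\bot=\emptyset$, $\top=N$ in $\cF$, all maximal chains from $\emptyset$ to $N$ have length $k$.

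For item~(ii), the ``if'' direction is a direct verification: if $\cF=\{\bigcup_{B\in Q}B\mid Q\in\cO(P)\}$ for a partition $P$ of $N$ equipped with some partial order, then pairwise disjointness of the blocks makes the union and the intersection of two such sets equal to the unions over $Q\cup Q'$ and $Q\cap Q'$ respectively, and $Q\cup Q'$, $Q\cap Q'$ are again downsets of $P$. For the converse, closure under $\cup,\cap$ gives $\vee=\cup$ and $\wedge=\cap$ on $\cF$; I would then set $i\sim j$ iff $i$ and $j$ belong to exactly the same members of $\cF$, take $P$ to be the partition of $N$ into $\sim$-classes, ordered by $B\le B'$ iff every member of $\cF$ containing $B'$ contains $B$. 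For each class $B$, closure under $\cap$ provides a least member $F_B:=\bigcap\{S\in\cF\mid B\subseteq S\}\in\cF$, and I would check $F_B=\bigcup\{B'\in P\mid B'\le B\}$. The map $S\mapsto\{B\in P\mid B\subseteq S\}$ is then order-preserving, order-reflecting, injective (each $S$ is the union of the classes it contains), and onto, because for a downset $Q$ of $P$ we have $\bigcup_{B\in Q}B=\bigcup_{B\in Q}F_B\in\cF$ by closure under $\cup$; so $\cF\cong\cO(P)$ with $P$ a partition of $N$.

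For item~(iii), ``if'' is immediate, since $\cO(N)$ has $n$ join-irreducibles, hence height $n$, so $\cF\cong\cO(N)$ forces $k=n$. The content is the converse. Assuming $k=n$, $\cF$ is regular, so every cover in $\cF$ has the form $S\prec S\cup i$, $\cF$ is graded with rank function $S\mapsto|S|$, and (being distributive, hence modular) $\cF$ satisfies $|S\wedge T|+|S\vee T|=|S|+|T|$. By item~(i), $|\cJ(\cF)|=n$; for $J\in\cJ(\cF)$ let $J^{-}$ be the element it covers and $\sigma(J)$ the unique element of $J\setminus J^{-}$. The key claim is that $\sigma:\cJ(\cF)\to N$ is a bijection. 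To establish it I would take a maximal chain $\emptyset=T_0\prec\cdots\prec T_n=N$ of $\cF$, write $T_l=T_{l-1}\cup\{i_l\}$ (so the $i_l$ are distinct and exhaust $N$), and pull it back via the Birkhoff isomorphism $Q\mapsto\bigvee_{J\in Q}J$ to a maximal chain $\emptyset=Q_0\prec\cdots\prec Q_n=\cJ(\cF)$ with $Q_l=Q_{l-1}\cup\{J_l\}$ and $T_l=\bigvee_{J\in Q_l}J=T_{l-1}\vee J_l$. Since $J_l\le T_l$ while $J_l\not\le T_{l-1}$ (else the join-irreducible $J_l$ would lie below a member of the downset $Q_{l-1}$, hence in $Q_{l-1}$), we get $J_l\setminus T_{l-1}=\{i_l\}$ and $J_l\vee T_{l-1}=T_l$; if $\sigma(J_l)\ne i_l$ then $J_l\wedge T_{l-1}\subseteq J_l^{-}\cap(J_l\setminus\{i_l\})=J_l\setminus\{\sigma(J_l),i_l\}$, whence $|J_l\wedge T_{l-1}|\le|J_l|-2$, contradicting $|J_l\wedge T_{l-1}|=|J_l|+|T_{l-1}|-|T_l|=|J_l|-1$. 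So $\sigma(J_l)=i_l$ for every $l$, hence $\sigma$ sends the $n$ distinct elements $J_1,\dots,J_n$ (all of $\cJ(\cF)$) onto the $n$ distinct players $i_1,\dots,i_n$ (all of $N$), and is therefore a bijection. Transporting the order of $\cJ(\cF)$ to $N$ along $\sigma$ yields an order $\preceq$ on $N$ with $(N,\preceq)\cong\cJ(\cF)$, so $\cF\cong\cO(\cJ(\cF))\cong\cO(N)$.

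The bijectivity of $\sigma$ in item~(iii) is the one genuinely delicate point: it is exactly where $n$-regularity (as opposed to mere $k$-regularity) is used, forcing the generating poset to live on $N$ itself, and it is where the modularity/rank identity enters. Everything else — that $\sim$ is an equivalence relation, the formula for $F_B$, stability of downsets under $\cup$ and $\cap$, and that the Birkhoff isomorphism respects covers, ranks and maximal chains — is routine, and I would only sketch it.
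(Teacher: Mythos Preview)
Your proof is correct. The paper itself does not give a proof of this proposition; it only remarks that it ``can be easily proved'' and refers the reader to \citet{xigr09}, so there is nothing in the paper to compare your argument against.

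One comment on item~(iii): your argument is sound, but a shorter route is available using ingredients you already have in hand. You observe that $n$-regularity makes the rank function equal to cardinality, and that distributivity (hence modularity) gives $|S\wedge T|+|S\vee T|=|S|+|T|$. Combining this with the trivial inclusions $S\wedge T\subseteq S\cap T$, $S\vee T\supseteq S\cup T$ and the set-theoretic identity $|S\cap T|+|S\cup T|=|S|+|T|$ forces $S\wedge T=S\cap T$ and $S\vee T=S\cup T$, so $\cF$ is automatically closed under $\cup$ and $\cap$. Item~(ii) then yields $\cF\cong\cO(P)$ for a partition $P$ of $N$; since the height $|P|$ equals $n=|N|$, the blocks of $P$ are singletons, and $P$ is just $N$ equipped with some order. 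This bypasses the chain-pulling and the bijectivity argument for $\sigma$, though your direct construction has the merit of exhibiting the identification $\cJ(\cF)\cong N$ explicitly.
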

Figure~\ref{fig:2} shows the relative situation of set lattices and $k$-regular
set systems.
\begin{figure}[htb]
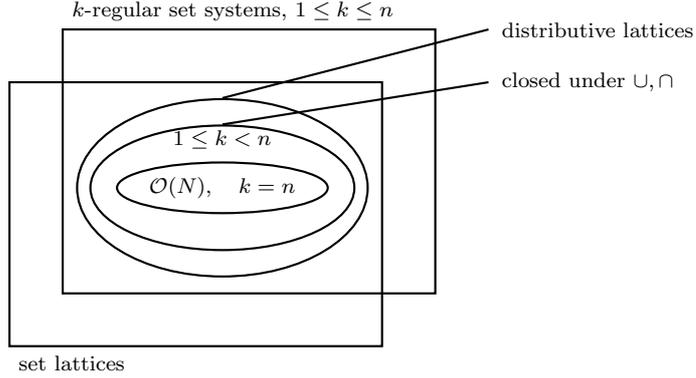

\begin{center}
\psset{unit=0.7cm} 
\pspicture(0,-0.5)(8,7)
\pspolygon(0,0)(0,5)(7,5)(7,0)
\pspolygon(1,1)(1,6)(8,6)(8,1)
\psellipse(4,3)(2.75,1.7)
\psellipse(4,3)(2.5,1.2)
\psellipse(4,3)(2,0.5)
\psline(4,4.7)(9,6)
\psline(4,4.2)(9,5)
\uput[-45](0,0){\scriptsize set lattices}
\uput[45](1,6){\scriptsize $k$-regular set systems, $1\leq k\leq n$}
\uput[0](9,6){\scriptsize distributive lattices}
\uput[0](9,5){\scriptsize closed under $\cup,\cap$}
\uput[90](4,3.5){\scriptsize $1\leq k<n$}
\rput(4,3){\scriptsize $\mathcal{O}(N),\quad k=n$}
\endpspicture
\end{center}
\caption{Set lattices and $k$-regular set systems}
\label{fig:2}
\end{figure}
Figure~\ref{fig:2a} shows that all inclusions are strict.
\begin{figure}[htb]
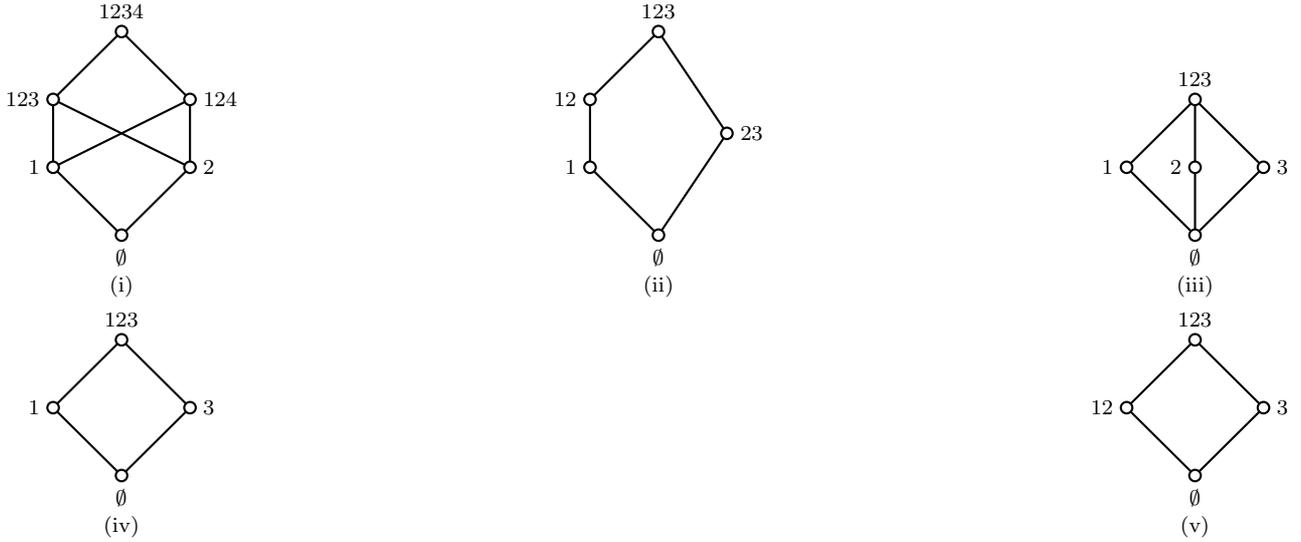

\begin{center}
\psset{unit=0.45cm} \pspicture(0,-2.5)(4,8)
\pspolygon(2,0)(0,2)(0,4)(2,6)(4,4)(4,2) \psline(0,2)(4,4)
\psline(0,4)(4,2) \pscircle[fillstyle=solid](2,0){0.2}
\pscircle[fillstyle=solid](0,2){0.2}
\pscircle[fillstyle=solid](0,4){0.2}
\pscircle[fillstyle=solid](2,6){0.2}
\pscircle[fillstyle=solid](4,4){0.2}
\pscircle[fillstyle=solid](4,2){0.2}
 \uput[-90](2,0){\scriptsize $\emptyset$} 
\uput[-180](0,2){\scriptsize  1} 
\uput[-180](0,4){\scriptsize
123} \uput[90](2,6){\scriptsize 1234}
\uput[0](4,4){\scriptsize 124} \uput[0](4,2){\scriptsize 2}
\rput(2,-1.5){\scriptsize (i)}
\endpspicture
\hfill
\psset{unit=0.45cm} \pspicture(0,-2.5)(4,6)
\pspolygon(2,0)(0,2)(0,4)(2,6)(4,3)
\pscircle[fillstyle=solid](2,0){0.2}
\pscircle[fillstyle=solid](0,2){0.2}
\pscircle[fillstyle=solid](0,4){0.2}
\pscircle[fillstyle=solid](2,6){0.2}
\pscircle[fillstyle=solid](4,3){0.2}
\uput[-90](2,0){\scriptsize $\emptyset$}
\uput[180](0,2){\scriptsize 1}
\uput[180](0,4){\scriptsize 12}
\uput[90](2,6){\scriptsize 123}
\uput[0](4,3){\scriptsize 23}
\rput(2,-1.5){\scriptsize (ii)}
\endpspicture
\hfill
\psset{unit=0.45cm}
\pspicture(0,-2.5)(4,4)
\pspolygon(2,0)(0,2)(2,4)(4,2)
\psline(2,0)(2,4)
\pscircle[fillstyle=solid](2,0){0.2}
\pscircle[fillstyle=solid](0,2){0.2}
\pscircle[fillstyle=solid](2,4){0.2}
\pscircle[fillstyle=solid](4,2){0.2}
\pscircle[fillstyle=solid](2,2){0.2}
\uput[-90](2,0){\scriptsize $\emptyset$}
\uput[-180](0,2){\scriptsize $1$}
\uput[-180](2,2){\scriptsize $2$}
\uput[0](4,2){\scriptsize $3$}
\uput[90](2,4){\scriptsize $123$}
\rput(2,-1.5){\scriptsize (iii)}
\endpspicture

\psset{unit=0.45cm}
\pspicture(0,-2)(4,4.5)
\pspolygon(2,0)(0,2)(2,4)(4,2)
\pscircle[fillstyle=solid](2,0){0.2}
\pscircle[fillstyle=solid](0,2){0.2}
\pscircle[fillstyle=solid](2,4){0.2}
\pscircle[fillstyle=solid](4,2){0.2}
\uput[-90](2,0){\scriptsize $\emptyset$}
\uput[-180](0,2){\scriptsize $1$}
\uput[0](4,2){\scriptsize $3$}
\uput[90](2,4){\scriptsize $123$}
\rput(2,-1.5){\scriptsize (iv)}
\endpspicture
\hfill
\psset{unit=0.45cm}
\pspicture(0,-2)(4,4.5)
\pspolygon(2,0)(0,2)(2,4)(4,2)
\pscircle[fillstyle=solid](2,0){0.2}
\pscircle[fillstyle=solid](0,2){0.2}
\pscircle[fillstyle=solid](2,4){0.2}
\pscircle[fillstyle=solid](4,2){0.2}
\uput[-90](2,0){\scriptsize $\emptyset$}
\uput[-180](0,2){\scriptsize $12$}
\uput[0](4,2){\scriptsize $3$}
\uput[90](2,4){\scriptsize $123$}
\rput(2,-1.5){\scriptsize (v)}
\endpspicture
\end{center}
\caption{(i) $k$-regular but not a lattice; (ii) lattice but not $k$-regular;
  (iii) $k$-regular lattice but not distributive; (iv) distributive lattice but
  not closed under $\cup$; (v) closed under $\cup,\cap$ but not isomorphic to $\cO(N)$}
\label{fig:2a}
\end{figure}

\subsection{Systems closed under union and intersection}\label{sec:cupcap}
As seen in Section~\ref{sec:lat}, these are particular set lattices, which are
distributive and generated by a partition of $N$.

\citet{degi95} proved that they are equivalent to
\emph{(conjunctive) permission structures} of \citet{giowbr92}.
A (conjunctive) permission structure is a mapping $\sigma:N\rightarrow 2^N$ such that
$i\not\in \sigma(i)$. The players in $\sigma(i)$ are the direct subordinates of
$i$. ``Conjunctive'' means that a player $i$ has to get the permission to act of all
his superiors. Consequently, an \emph{autonomous} coalition contains all
superiors of every member of the coalition, i.e., the set of autonomous
coalitions generated by the permission structure $\sigma$ is
\[
\cF_\sigma=\{S\in 2^N\mid S\cap \sigma(N\setminus S)=\emptyset\}.
\]
Then $\cF$ is closed under union and intersection if and only if $\cF=\cF_\sigma$ for some
permission structure $\sigma$.

See also \citet{albibrji04} for similar results related to
antimatroids (see Section~\ref{sec:cogoanti}). They characterize \emph{acyclic}
permission structures (i.e., where, for all $i\in N$, in the set of all
subordinates (not limited to the direct ones) of $i$, $i$ is not present) by
distributive lattices $\mathcal{O}(N)$ (called there poset antimatroids).

\subsection{Weakly union-closed systems}\label{sec:weucl}
A set system $\cF$ is \emph{weakly union-closed} if $F\cup F'\in\cF$ for all $F,F'\in \cF$
  such that $F\cap F'\neq\emptyset$.

An important consequence is that for any $S\subseteq N$, $\cF(S):=\{F\in \cF\mid
F\subseteq S\}$ has pairwise disjoint maximal elements.

The \emph{basis} of $\cF$ is the collection of sets $S$ in $\cF$ which cannot be
written as $S=A\cup B$, with $A,B\in\cF$, $A,B\neq S$, $A\cap B\neq\emptyset$
\citep[Chap. 6]{bil00}. All singletons and pairs of $\cF$ are in the basis.
Clearly, knowing the basis permits to recover $\cF$.

\begin{remark}\label{rem:wuc}
\begin{enumerate}
\item This terminology  is used by  \citet{fagr09}. Weakly union-closed systems have been
  studied under the name \emph{union stable systems} by  \citet{alg98}
  (summarized in \citet[Chap. 6]{bil00}). 
\item They are closely related to communication graphs because if $\cF$ represents
a communication graph (i.e., $\cF$ is the collection of connected coalitions of the graph;
see Section~\ref{sec:graph}),
then the union of two intersecting connected coalitions must be
connected. \citet{bri09} characterized those weakly union-closed collections
which correspond to communication graphs: $\cF\subseteq 2^N$ is the set of
connected coalitions of some comunication graph if and only if
$\emptyset\in\cF$, $\cF$ is normal (i.e., $\bigcup\cF=N$), weakly union-closed,
and satisfies \emph{2-accessibility} (i.e., $S\in\cF$ with $|S|>1$ implies that
there exist distinct $i,j\in S$ such that $S\setminus i$ and $S\setminus j$
belong to $\cF$). Another characterization is due to Bilbao through augmenting
systems (see Section~\ref{sec:aug}).    
\item Similarly, they are also related to the more general notion of
  \emph{conference structures} of \citet{mye80}, which generalize communication
  graphs. A conference structure $\cQ$ is a collection of subsets of $N$ of
  cardinality at least 2. Two players $i,j$ are connected if there is a sequence
  $S_1,\ldots,S_k$ of sets in $\cQ$ such that $i\in S_1,j\in S_k$, and
  $S_\ell\cap S_{\ell+1}\neq\emptyset$ for $\ell=1,\ldots,k-1$. Then,
  $\cF:=\{S\subseteq N\mid \forall i,j\in S,\text{ $i$ and $j$ are connected}\}$
  is a weakly union-closed system. Conversely, given a weakly union-closed
  system $\cF$, the basis of $\cF$ restricted to sets of cardinality at least 2
  can be considered as a conference structure. An equivalent view of this is
  given by van den Nouweland et al. through hypergraphs \citep{noboti92}, since
  in a hypergraph, a (hyper)link joins several nodes (and thus can be viewed as
  a subset of cardinality at least 2). Thus, a path in a hypergraph corresponds
  to a sequence $S_1,\ldots,S_k$ as described above.
\end{enumerate}
\end{remark} 

\subsection{Partition systems}
They were studied by \citet[\S 5.1]{bil00} and \citet{albilo01}.  A \emph{partition system}
is a collection $\cF\subseteq 2^N$ containing the empty set, all singletons, and
such that for every $S\subseteq N$, the maximal subsets of $S$ in $\cF$ form a
partition of $S$ (equivalently, $\cF$ contains $\emptyset$, all singletons and
is weakly union-closed).

Any set system induced by a communication graph is a partition system.
If $\cF$ is a partition system, then $\widetilde{\cF}=2^N$.

\subsection{Augmenting systems}\label{sec:aug} 
An \emph{augmenting system} \citep{bil03,bior08,bior08a} is a
collection $\cF\subseteq 2^N$ containing $\emptyset$, being weakly union-closed,
and satisfying
\[
\forall S,T\in \cF \text{ s.t. } S\subseteq T, \quad \exists i\in T\setminus
S\text{ s.t. } S\cup i\in \cF.
\] 
\begin{remark}\label{rem:augm}
\begin{enumerate}
\item In \citet{bil03}, it is required in addition that $\bigcup \cF=N$ (obviously,
  this property should always be required when dealing with collections of subsets).
\item  $N$ does not necessarily belong to $\cF$. If $N\in\cF$, the above property
implies that all maximal chains from $\emptyset$ to $N$ have the same length
$n$, and thus $\cF$ is a regular set system. The converse is false.

If $N\not\in \cF$, by weak union-closure, all maximal sets in $\cF$, say
$F_1,\ldots,F_k$, are disjoint, and no $F\in \cF$ can intersect two distinct
maximal subsets. Therefore, $\cF$ can be partitioned into augmenting subsystems
$\cF_1,\ldots,\cF_k$ on $F_1,\ldots, F_k$ respectively, which are all
regular. Hence, it is sufficient to study the case where $N\in \cF$. 
\item An augmenting system is an antimatroid (respectively, convex geometry) if
  and only if $\cF$ is closed under union (respectively, intersection).
\item Augmenting systems are of particular importance since they permit to
  characterize communication graphs (see Section~\ref{sec:graph}). Specifically,
  if $G$ is a communication graph, the set of connected coalitions is an
  augmenting system. Conversely, an augmenting system is the collection of
  connected coalitions of a communication graph if $\{i\}\in\cF$ for all $i\in
  N$. Each connected component of the graph corresponds to the augmenting
  subsystems $\cF_1,\ldots,\cF_k$ mentionned in (ii).
\end{enumerate} 
\end{remark}
Augmenting systems are also closely related to the previously introduced
structures, as shown in the next proposition.
\begin{proposition}
$\cF$ is an augmenting system containing $N$ if and only if it is regular and weakly
union-closed.
\end{proposition}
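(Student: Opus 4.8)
The two conditions ``$\cF$ is an augmenting system containing $N$'' and ``$\cF$ is regular and weakly union-closed'' already agree on part of their content: an augmenting system is by definition weakly union-closed and contains $\emptyset$, and a regular set system is a set system, hence contains $\emptyset$ and $N$. So in both directions weak union-closure and the memberships $\emptyset,N\in\cF$ are common, and the real task is to show that, under these shared hypotheses, \emph{regularity is equivalent to the augmentation axiom} $(\forall S,T\in\cF,\ S\subsetneq T,\ \exists i\in T\setminus S,\ S\cup i\in\cF)$. I would prove the two implications separately.

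\textbf{Augmenting $\Rightarrow$ regular.} This is essentially Remark~\ref{rem:augm}(ii), and I would reproduce its argument. Take any maximal chain $\emptyset=S_0\prec S_1\prec\cdots\prec S_m=N$ in $\cF$ (one exists since $\cF$ is a finite poset with bottom $\emptyset$ and top $N$), and fix $j$. Applying the augmentation axiom to the pair $S_j\subsetneq S_{j+1}$ yields $i\in S_{j+1}\setminus S_j$ with $S_j\cup i\in\cF$; since $S_j\subsetneq S_j\cup i\subseteq S_{j+1}$ and $S_j\prec S_{j+1}$ in $\cF$, we must have $S_j\cup i=S_{j+1}$, so $|S_{j+1}\setminus S_j|=1$. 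Summing over $j$ forces $m=n$, so every maximal chain from $\emptyset$ to $N$ has length $n$ and $\cF$ is regular.

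\textbf{Regular (and weakly union-closed) $\Rightarrow$ augmenting.} Here the only thing left to verify is the augmentation axiom. Given $S,T\in\cF$ with $S\subsetneq T$, I would localize to the subposet $\cF_{[S,T]}:=\{Z\in\cF\mid S\subseteq Z\subseteq T\}$, which is finite with bottom $S$ and top $T$, and choose a maximal chain $S=Z_0\prec\cdots\prec Z_r=T$ of $\cF_{[S,T]}$ (it terminates at $T$ because $T$ is the top element, and $r\geq 1$ since $S\neq T$). The key point is that the first covering step $Z_0\prec Z_1$ of $\cF_{[S,T]}$ is also a covering step of $\cF$: any $W\in\cF$ with $Z_0\subsetneq W\subsetneq Z_1$ would satisfy $S\subseteq Z_0\subsetneq W\subsetneq Z_1\subseteq T$, hence $W\in\cF_{[S,T]}$, contradicting $Z_0\prec Z_1$ there. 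Then the characterization of regular set systems ($S'\succ S''$ in $\cF$ implies $|S'\setminus S''|=1$) gives $|Z_1\setminus Z_0|=1$, i.e.\ $Z_1=S\cup i$ with $i\in Z_1\setminus S\subseteq T\setminus S$ and $S\cup i=Z_1\in\cF$ — exactly the augmentation axiom. Combined with the already-noted facts $\emptyset,N\in\cF$ and the hypothesis of weak union-closure, this shows $\cF$ is an augmenting system containing $N$.

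\textbf{Main obstacle and minor points.} The delicate step is the last one: one must resist working with a maximal chain of $\cF$ itself (a maximal chain out of $\emptyset$ through $S$ need not pass through $T$, nor stay below $T$, since one-point extension does not control which element is added), and instead restrict to $\cF_{[S,T]}$, carefully checking that its covering relation coincides with that of $\cF$ on the interval so that regularity can be invoked. Two small remarks are also worth making explicit: the augmentation axiom is to be read with strict inclusion $S\subsetneq T$ (for $S=T$ there is no $i\in T\setminus S$); and weak union-closure is never actually used to \emph{derive} the augmentation axiom from regularity — it is merely carried along because it is part of the definition of an augmenting system, which is precisely why it must appear as a separate hypothesis on the right-hand side of the equivalence.
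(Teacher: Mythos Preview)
Your proof is correct and follows essentially the same approach as the paper. The paper's argument for the ``if'' direction is terser---``Take $S\subseteq T$ in $\cF$, then they lie on some chain. Since the system is regular, there are $t-s-1$ subsets between $S$ and $T$, which implies the augmentation property''---but this is precisely your localization to the interval $\cF_{[S,T]}$ and the observation that covering steps there are covering steps in $\cF$, spelled out with more care; your explicit remark that weak union-closure plays no role in deriving augmentation from regularity is a nice addition that the paper leaves implicit.
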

\begin{proof}
The ``only if'' part has been already noticed above.  Now, suppose it is
regular. Take $S\subseteq T$ in $\cF$, then they lie on some chain. Since
the system is regular, there are $t-s-1$ subsets between $S$ and $T$, which
implies the augmentation property.
\end{proof}
The various relations between regular set systems, weakly union-closed systems
and augmenting systems are illustrated on  Figure~\ref{fig:3}. 
\begin{figure}[htb]
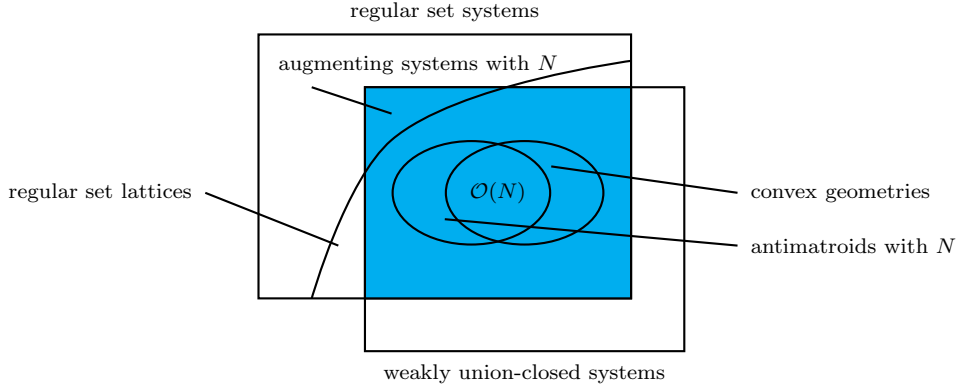

\begin{center}
\psset{unit=0.7cm} 
\pspicture(0,-0.5)(8,7)
\pspolygon(0,1)(0,6)(7,6)(7,1)
\pspolygon(2,0)(2,5)(8,5)(8,0)
\pspolygon[fillstyle=solid,fillcolor=cyan](2,1)(2,5)(7,5)(7,1)
\psellipse(4,3)(1.5,1)
\psellipse(5,3)(1.5,1)
\pscurve(1,1)(2.5,4)(7,5.5)
\uput[90](3.5,6){\scriptsize regular set systems}
\uput[-90](5,0){\scriptsize weakly union-closed systems}
\psline(1,5)(2.5,4.5)
\uput[90](3,5){\scriptsize augmenting systems with $N$}
\psline(1.5,2)(-1,3)
\uput[180](-1,3){\scriptsize regular set lattices}
\psline(3.5,2.5)(9,2)
\uput[0](9,2){\scriptsize antimatroids with $N$}
\psline(5.5,3.5)(9,3)
\uput[0](9,3){\scriptsize convex geometries}
\rput(4.5,3){\scriptsize $\cO(N)$}
\endpspicture
\end{center}
\caption{Regular set systems and weakly union-closed systems}
\label{fig:3}
\end{figure}
Figure~\ref{fig:4} shows that it is possible to have regular set lattices
  which are not weakly union-closed, and weakly union-closed regular systems not
  being a lattice.
\begin{figure}
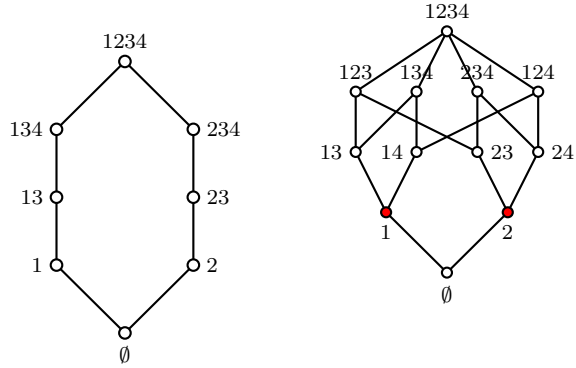

\begin{center}
\psset{unit=0.45cm} \pspicture(0,0)(4,8)
\pspolygon(2,0)(0,2)(0,6)(2,8)(4,6)(4,2)
\pscircle[fillstyle=solid](2,0){0.2}
\pscircle[fillstyle=solid](0,2){0.2}
\pscircle[fillstyle=solid](0,4){0.2}
\pscircle[fillstyle=solid](0,6){0.2}
\pscircle[fillstyle=solid](2,8){0.2}
\pscircle[fillstyle=solid](4,6){0.2}
\pscircle[fillstyle=solid](4,4){0.2}
\pscircle[fillstyle=solid](4,2){0.2}
\uput[-90](2,0){\scriptsize $\emptyset$}
\uput[180](0,2){\scriptsize 1}
\uput[180](0,4){\scriptsize 13}
\uput[180](0,6){\scriptsize 134}
\uput[90](2,8){\scriptsize 1234}
\uput[0](4,6){\scriptsize 234}
\uput[0](4,4){\scriptsize 23}
\uput[0](4,2){\scriptsize 2}
\endpspicture
\hspace*{2cm}
\psset{unit=0.4cm}
\pspicture(-3,-2)(3,10)
\pspolygon(0,0)(-2,2)(-3,4)(-3,6)(0,8)(3,6)(3,4)(2,2)
\psline(-2,2)(-1,4)(3,6)
\psline(2,2)(1,4)(-3,6)
\psline(-3,4)(-1,6)(0,8)
\psline(3,4)(1,6)(0,8)
\psline(-1,4)(-1,6)
\psline(1,4)(1,6)
\pscircle[fillstyle=solid](0,0){0.2}
\pscircle[fillstyle=solid,fillcolor=red](-2,2){0.2}
\pscircle[fillstyle=solid](-3,4){0.2}
\pscircle[fillstyle=solid](-3,6){0.2}
\pscircle[fillstyle=solid](-1,6){0.2}
\pscircle[fillstyle=solid](0,8){0.2}
\pscircle[fillstyle=solid](3,6){0.2}
\pscircle[fillstyle=solid](1,6){0.2}
\pscircle[fillstyle=solid](3,4){0.2}
\pscircle[fillstyle=solid,fillcolor=red](2,2){0.2}
\pscircle[fillstyle=solid](-1,4){0.2}
\pscircle[fillstyle=solid](1,4){0.2}
\uput[-90](0,0){\scriptsize $\emptyset$}
\uput[-90](-2,2){\scriptsize $1$}
\uput[-90](2,2){\scriptsize $2$}
\uput[-180](-3,4){\scriptsize $13$}
\uput[-180](-1,4){\scriptsize $14$}
\uput[0](1,4){\scriptsize $23$}
\uput[0](3,4){\scriptsize $24$}
\uput[90](-3,6){\scriptsize $123$}
\uput[90](-1,6){\scriptsize $134$}
\uput[90](3,6){\scriptsize $124$}
\uput[90](1,6){\scriptsize $234$}
\uput[90](0,8){\scriptsize $1234$}
\endpspicture
\end{center}
\caption{Left: regular set lattice but not weakly union-closed. Right: regular
  and weakly union-closed but not a lattice, since 1 and 2 have no supremum}
\label{fig:4}
\end{figure}

\subsection{Coalition structures}
Our last category is of different nature since it is not a set system in our
sense, and its motivation is very different from the notion of feasible
coalition. Its origin comes from the domain of coalition formation. We
nevertheless mention
it here due to its importance, although the topic of coalition formation
falls outside the scope of this survey (see, e.g., \citet{hoow01}, \citet{dee97}). 

A \emph{coalition structure} on $N$ is a partition of $N$
\citep{audr74}. It is called by \citet{owe77} \emph{a priori union structures}.

\medskip

Let $v$ be a game on $2^N$, and consider a coalition structure
$\cB:=\{B_1,\ldots,B_m\}$.

Given a payoff vector $x$, $B_k\in\cB$, we define the game $v^*_x$ on $2^{B_k}$ by
\[
v_x^*(S) = \begin{cases}
  \max_{T\subseteq N\setminus B_k}(v(S\cup T) - x(T)), & S\subset B_k,
  S\neq\emptyset\\
v(S), & S=B_k\text{ or } \emptyset.
  \end{cases}
\]

\begin{remark}
The above definition considers a game in the classical sense, and not on the
blocks of the partition. Another type of game suited for coalition structures is
global games and games in partition function form. We mention them without
further development since their nature is too far away from coalitional games. A
\emph{global game} \citep{gile91} is a real-valued mapping defined on the set
$\Pi(N)$ of partitions of $N$: it assigns some worth to any coalition
structure. A \emph{game in partition function form} \citep{thlu63} is a mapping
$v$ assigning a real number to a coalition $S$ and a partition $\pi$ containing
$S$: knowing that the coalition structure is some $\pi\in\Pi(N)$ such that
$S\in\pi$, $v(S,\pi)$ is the worth of $S$ in this organization of the society.
\end{remark}

\section{The core and related notions}\label{sec:core}
Let $v$ be a game on a set system $\cF$. A \emph{payoff vector} is any $x\in
\bR^n$. It represents some amount of money given to the players. By commodity we
write $x(S):=\sum_{i\in S}x_i$ for any $S\subseteq N$. A payoff vector is
\emph{efficient} if $x(N)=v(N)$. The \emph{pre-imputation set} of $v$ is the set
of all efficient payoff vectors. We define the \emph{imputation set} of $v$ as
\[
I(v):=\{x\in\bR^n\mid x_i\geq v(\{i\}) \text{ if } \{i\}\in\cF \text{ and } x(N)=v(N)\}.
\]  
The \emph{core} of $v$ is defined by
\[
\core(v):=\{x\in\mathbb{R}^n \mid x(S)\geq v(S)\text{ for all }S\in
\cF\text{ and }x(N)=v(N)\}.
\]
The \emph{positive core} of $v$ is defined by
\citep{fai89}:
\[
\core^+(v):=\{x\in\bR_+^n\mid x(S)\geq v(S) \text{ for all } S\in\cF\text{ and }
x(N)=v(N)\}. 
\]
\begin{remark}
\begin{enumerate}
\item The classical definition of the core \citep{sha71} is recovered with
$\cF=2^N$. It should be noted that the definition is meaningful only if the game
is a profit game. For cost games, the inequalities should be reversed. The core
is the set of payoff vectors which are coalitionally rational: no coalition can
have a better profit if it splits from the grand coalition $N$. 
\item If $\cF$ contains all singletons and the game is monotonic, the
  distinction between the core and the positive core is void since they
  coincide. The positive core retains imputations which are nonnegative, which
  means that in no case the players would have to pay something instead of being
  rewarded. This notion is natural essentially if the game is monotonic, for
  otherwise there would exist players with negative contribution to the game,
  and those players should be penalized; see also Remark~\ref{rem:6.1}.
\item The (positive) core is also well known in decision under risk and uncertainty: it is
  the set of probability measures dominating a given capacity (monotonic game);
  see, e.g., \cite{wal91}. More surprisingly, the  (positive) core with the reversed
  inequalities is a well-known concept in combinatorial optimization, under the
  name of \emph{base polyhedron of a polymatroid} \citep{edm70}, where a
  polymatroid is nothing else than a submodular\footnote{When $\cF$ is closed
  under $\cup,\cap$, a \emph{submodular} game, also
  called \emph{concave}, satisfies the inequality $v(S\cup T) + v(S\cap T) \leq
  v(S) + v(T)$ for all $S,T\in 2^N$.} monotonic game. As we will see, many
  theorems shown in game theory about the core were already known in combinatorial
  optimization; see the excellent monograph of \citet{fuj05b}.
\end{enumerate}
\end{remark}
We recall the classical results on the core when
$\cF=2^N$. 

A first important question is to know whether the core is empty or not. A
collection $\cB$ of nonempty subsets of $2^N$ is
\emph{balanced} if there exist positive coefficients $\lambda_B, B\in\cB$,
such that
\[
\sum_{B\in\cB} \lambda_B\1_{B}=\1_N.
\]
The vector $\lambda:=(\lambda_B)_{B\in\cB}$ is called the \emph{balancing vector}.
Balanced collections generalize the notion of partitions. \citet{depe98} have
shown that a collection is balanced if and only if for all $y\in\bR^n$ such that
$y(N)=0$ (side-payment vector), if $y(S)>0$ for some $S\in\cB$, then it exists
$S'\in\cB$ such that $y(S')<0$. 

A game
$v$ on $2^N$ is \emph{balanced} if for every balanced collection $\cB$ with
balancing vector $\lambda$ it holds
\[
\sum_{B\in \cB}\lambda_Bv(B)\leq v(N).
\]
This could be interpreted by saying that there is no advantage in dividing the
grand coalition into balanced collections. The following well-known result is an
easy consequence of the duality theorem of linear programming (or Farkas Lemma).
\begin{theorem}
\citep{bon63} Let $v$ be a game on $2^N$. Then $\core(v)\neq\emptyset$ if and
only if $v$ is balanced.
\end{theorem}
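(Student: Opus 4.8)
The plan is to prove the Bondareva--Shapley theorem by recognizing that the question "is $\core(v)\neq\emptyset$?" is a linear feasibility problem, and then invoking LP duality (equivalently, Farkas' lemma) to obtain a certificate of infeasibility which, after a suitable normalization, turns out to be exactly a balanced collection violating the balancedness inequality.

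\medskip

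First I would set up the primal program. The core is nonempty iff the system $x(N)=v(N)$, $x(S)\geq v(S)$ for all $S\subsetneq N$, $S\neq\emptyset$, has a solution $x\in\bR^n$. A clean way to reduce this to an optimization is to consider
\[
\min_{x\in\bR^n}\ x(N)\quad\text{subject to}\quad x(S)\geq v(S)\ \text{for all }\emptyset\neq S\subseteq N.
\]
(Note $S=N$ is among the constraints.) The core is nonempty precisely when the optimal value of this program equals $v(N)$: indeed any core element is feasible with objective $v(N)$, and conversely if the minimum is $v(N)$ then the minimizer $x$ satisfies $x(N)=v(N)$ together with all the inequality constraints, hence lies in the core. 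The program is feasible (take $x_i$ large) and its value is bounded below by $v(N)$ via the constraint $x(N)\geq v(N)$, so by LP duality the optimum is attained and equals the dual optimum.

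\medskip

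Next I would write the dual. Assigning a variable $\lambda_S\geq 0$ to each constraint $x(S)\geq v(S)$, the dual is
\[
\max\ \sum_{\emptyset\neq S\subseteq N}\lambda_S\, v(S)\quad\text{subject to}\quad \sum_{S\ni i}\lambda_S=1\ \text{for all }i\in N,\ \ \lambda_S\geq 0.
\]
The constraint $\sum_{S\ni i}\lambda_S=1$ for every $i$ says exactly that $\sum_S\lambda_S\1_S=\1_N$, i.e. the support $\cB:=\{S:\lambda_S>0\}$ is a balanced collection with balancing vector $(\lambda_S)_{S\in\cB}$. Conversely every balanced collection with its balancing vector yields a dual-feasible point (and $S=N$ with $\lambda_N=1$, all others $0$, shows the dual is feasible). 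Hence the dual optimal value is $\max\{\sum_{B\in\cB}\lambda_B v(B)\}$ over all balanced collections $\cB$ with balancing vectors $\lambda$. By strong duality this equals the primal value. Therefore $\core(v)\neq\emptyset$ iff the primal value is $v(N)$ iff $\sum_{B\in\cB}\lambda_B v(B)\leq v(N)$ for every balanced collection $\cB$ with balancing vector $\lambda$, using that $\lambda_N=1$ already achieves $v(N)$ on the dual side; this last equivalence is precisely the statement that $v$ is balanced.

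\medskip

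The step requiring the most care is the bookkeeping around the equality constraint $x(N)=v(N)$: I have folded it into the single inequality $x(N)\geq v(N)$ in the objective-plus-constraint reformulation, and I should check that this does not lose information — it does not, because the objective $\min x(N)$ forces $x(N)=v(N)$ at optimum exactly when the optimum is $v(N)$, and otherwise the core is empty anyway. One should also confirm that only nonempty $S$ need be considered (the empty-set constraint $0\geq v(\emptyset)=0$ is vacuous) and that finiteness of $2^N$ makes all the LPs genuinely finite-dimensional, so strong duality applies without qualification. The rest is the routine translation between the two combinatorial descriptions of dual-feasible points and balanced collections.
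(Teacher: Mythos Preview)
Your proof is correct and is precisely the approach the paper alludes to: the paper does not give a detailed argument but simply states that the result ``is an easy consequence of the duality theorem of linear programming (or Farkas Lemma).'' Your primal/dual formulation is the standard one, and the identification of dual-feasible vectors with balanced collections and their balancing vectors is exactly what makes the equivalence go through.
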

 
A strong version exists where only minimal balnced collections are used.
Obviously, the interest of the theorem is more mathematical than
algorithmical. It cannot reasonably be used for testing the nonemptiness of the
core of a given game.  Since the core is a set of linear inequalities, classical
tools testing the feasability of a set of inequalities, like the Fourier-Motzkin
elimination, or simply a linear programming solver, should be used. 
 
Assuming that the core is nonempty, it is a polytope, and therefore the question
of knowing its vertices arises.  To each maximal chain $C\in\mathcal{C}(2^N)$ with
$C=\{\emptyset,S_1,\ldots,S_n=N\}$, corresponds bijectively a permutation
$\sigma\in\mathfrak{S}(N)$, the set of permutations  on $N$, such that 
\[
S_i=\{\sigma(1),\ldots,\sigma(i)\}, \quad i=1,\ldots,n.
\]
Considering a game $v$ on $2^N$, to each permutation $\sigma$ (or maximal
  chain~$C$) we assign a \emph{marginal worth vector}
  $\phi^\sigma$ (or $\phi^{C}$) in $\mathbb{R}^n$ by:
\[
\phi^\sigma_{\sigma(i)} :=v(S_i)-v(S_{i-1}) = v(S_{i-1}\cup {\sigma(i)}) - v(S_{i-1}).
\]
The \emph{Weber set} is the convex hull of all marginal worth vectors:
\[
\cW(v):=\conv(\phi^C \mid C\in\cC(2^N)).
\]
The following inclusion always holds
\[
\core(v)\subseteq \cW(v).
\]
\begin{theorem}
The following assertions are equivalent.
\begin{enumerate}
\item $v$ is convex
\item All marginal vectors $\phi^C$, $C\in \cC(2^N)$ (or $\phi^\sigma$, $\sigma\in \mathfrak{S}(N)$), belong
to the core
\item $\core(v) = \mathrm{conv}(\{\phi^\sigma\}_{\sigma\in\mathfrak{S}(N)})$
\item $\mathrm{ext}(\core(v)) = \{\phi^\sigma\}_{\sigma\in\mathfrak{S}(N)}$.
\end{enumerate}
\end{theorem}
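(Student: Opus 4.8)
The plan is to prove the cycle of implications $(i)\Rightarrow(ii)\Rightarrow(iii)\Rightarrow(iv)\Rightarrow(i)$, which is the standard Shapley--Ichiishi route. Throughout I would identify a maximal chain $C=\{\emptyset,S_1,\ldots,S_n=N\}$ with its permutation $\sigma$, so that $\phi^\sigma$ is defined by telescoping increments along the chain.

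The implication $(i)\Rightarrow(ii)$ is the heart of the argument. Fix a permutation $\sigma$ and the associated chain $S_0=\emptyset\subset S_1\subset\cdots\subset S_n=N$. Since $\phi^\sigma_{\sigma(i)}=v(S_i)-v(S_{i-1})$, summing telescopically gives $\phi^\sigma(S_k)=v(S_k)$ for every $k$; in particular $\phi^\sigma(N)=v(N)$, so $\phi^\sigma$ is efficient. It remains to show $\phi^\sigma(S)\geq v(S)$ for an arbitrary $S\subseteq N$. The standard trick is induction on $|S|$: write $S$ as $S'\cup\{\sigma(j)\}$ where $\sigma(j)$ is the element of $S$ with the largest index $j$ in the order $\sigma$, so that $S'=S\setminus\sigma(j)\subseteq S_{j-1}$ and $S'\cup\{\sigma(j)\}\subseteq S_j$. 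Convexity applied to the pair $S_{j-1}$ and $S'\cup\{\sigma(j)\}$ (whose union is $S_j$ and whose intersection is $S'$) yields $v(S_j)+v(S')\geq v(S_{j-1})+v(S'\cup\sigma(j))$, i.e. $v(S'\cup\sigma(j))-v(S')\leq v(S_j)-v(S_{j-1})=\phi^\sigma_{\sigma(j)}$. Hence $v(S)\leq v(S')+\phi^\sigma_{\sigma(j)}\leq \phi^\sigma(S')+\phi^\sigma_{\sigma(j)}=\phi^\sigma(S)$, using the induction hypothesis on $S'$. This is exactly the point where the hypothesis $\cF=2^N$ is used in full strength: one needs the intermediate sets $S'$, $S'\cup\sigma(j)$, $S_{j-1}$, $S_j$ all to be feasible, and one needs convexity to be the genuine lattice inequality on $(2^N,\cup,\cap)$. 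I expect this induction step — in particular keeping straight which pair of sets to feed into the convexity inequality — to be the main obstacle, though it is routine once set up correctly.

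The implication $(ii)\Rightarrow(iii)$ uses the always-valid inclusion $\core(v)\subseteq\cW(v)=\conv(\{\phi^\sigma\})$ recalled just above the theorem: if all $\phi^\sigma$ lie in $\core(v)$, then since $\core(v)$ is convex we get $\conv(\{\phi^\sigma\})\subseteq\core(v)$, and combined with $\core(v)\subseteq\cW(v)=\conv(\{\phi^\sigma\})$ this forces equality. For $(iii)\Rightarrow(iv)$ I would argue that the extreme points of a polytope given as the convex hull of a finite set must be among that set, so $\ext(\core(v))\subseteq\{\phi^\sigma\}$; the reverse inclusion requires showing each $\phi^\sigma$ is actually a vertex, which follows because $\phi^\sigma$ is the unique point of the affine subspace $\{x: x(S_i)=v(S_i),\ i=1,\ldots,n\}$ (an $n\times n$ triangular system with respect to the order $\sigma$), hence a $0$-dimensional face of $\core(v)$.

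Finally $(iv)\Rightarrow(i)$: assume $\ext(\core(v))=\{\phi^\sigma\}_\sigma$, so in particular the core is nonempty and every marginal vector belongs to it. Take any $S,T\in 2^N$ and pick a permutation $\sigma$ that lists first the elements of $S\cap T$, then those of $S\setminus T$, then those of $T\setminus S$, then the rest; then $S\cap T$, $S$, and $S\cup T$ all appear as initial segments $S_i$ of the associated chain. From $\phi^\sigma\in\core(v)$ and the telescoping identity $\phi^\sigma(S_i)=v(S_i)$ we get $\phi^\sigma(S\cap T)=v(S\cap T)$, $\phi^\sigma(S\cup T)=v(S\cup T)$, while the core inequality gives $\phi^\sigma(T)\geq v(T)$. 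Then $v(S\cup T)+v(S\cap T)=\phi^\sigma(S\cup T)+\phi^\sigma(S\cap T)=\phi^\sigma(S)+\phi^\sigma(T)\geq v(S)+v(T)$, using $\phi^\sigma(S)=v(S)$ (since $S$ is an initial segment) and $\phi^\sigma(S)+\phi^\sigma(T)=\phi^\sigma(S\cup T)+\phi^\sigma(S\cap T)$ by additivity of $x\mapsto x(\cdot)$. This establishes convexity and closes the cycle.
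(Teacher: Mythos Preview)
Your argument is correct and follows precisely the standard Shapley--Ichiishi route that the paper attributes the result to (the paper itself gives no proof, only the citations ``\citet{sha71} proved (i)$\Rightarrow$(ii) and (i)$\Rightarrow$(iv), while \citet{ich81} proved (ii)$\Rightarrow$(i)''). Your organisation into the cycle $(i)\Rightarrow(ii)\Rightarrow(iii)\Rightarrow(iv)\Rightarrow(i)$ is tidier than the paper's scattered attributions, and each step --- the induction via the convexity inequality for $(i)\Rightarrow(ii)$, the use of the standing inclusion $\core(v)\subseteq\cW(v)$ for $(ii)\Rightarrow(iii)$, the triangular system giving each $\phi^\sigma$ as a $0$-face for $(iii)\Rightarrow(iv)$, and Ichiishi's choice of permutation for $(iv)\Rightarrow(i)$ --- is carried out correctly.
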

\citet{sha71} proved (i)$\Rightarrow$(ii) and (i)$\Rightarrow$(iv), while
\citet{ich81} proved (ii)$\Rightarrow$(i). \citet{edm70} proved the same result
as Shapley. This is also mentionned in  \citep{lov83}.  This result
clearly shows why convexity is an important property for games. Indeed, in this
case, the core is nonempty and its structure is completely known. In the
subsequent sections, we will see that much effort is done for games defined on
set systems in order to preserve these properties as far as possible.

\section{Structure of the core}\label{sec:struccore}
\subsection{General results for arbitrary set systems}\label{sec:gene}
We begin by some simple considerations on the imputation set. If $\cF$ is
atomistic, then $I(v)\neq\emptyset$ if and only if $v(N)\geq \sum_{i\in
N}v(\{i\})$. If $\cF$ is not atomistic, then it is always true that $I(v)\neq
\emptyset$. Indeed, if $\{j\}\not\in\cF$, just take $x_i=v(i)$ if $\{i\}\in\cF$,
$x_j=v(N)-\sum_{\{i\}\in\cF}v(i)$, and $x_i=0$ otherwise.  Similarly, $I(v)$ is
bounded if and only if $\cF$ is atomistic.

\medskip

The first question we address is to know when the core is nonempty. It is easy
to see that the classical definitions and result of Bondareva on balancedness
still work: $\core(v)\neq\emptyset$ if and only if $v$ is balanced, where
balanced collections are understood as collections in $\cF$. Another result is
due to \citet{fai89}, with a different (but equivalent) definition of balancedness. 
A game $v$ on $\cF$ is \emph{balanced} if for all families $A_1,\ldots,A_k$ in
$\cF$ and $m\in\bN$ it holds
\[
\frac{1}{m}\sum_{i=1}^k \1_{A_i}=\1_N\text{ implies }
\frac{1}{m}\sum_{i=1}^k v(A_i)\leq v(N).
\]
In the above, it should be noted that repetitions are allowed in the family and
that the length of a family is arbitrary.

Assuming that $\core(v)$ is nonempty, one can define its \emph{lower envelope}
$v_*$, which is a game on $2^N$:
\[
v_*(S) := \min_{x\in\core(v)}x(S), \quad\forall S\subseteq N.
\]
Note that $v_*(N)=v(N)$, and if $\cF=2^N$, we have $\core(v_*)=\core(v)$. 
\begin{remark}
The lower envelope is an important notion in decision theory (see
\citet{wal91}). In game theory, it is called the \emph{Harsanyi mingame} \citep{delava08}.
\end{remark}
An important question is to know whether the equality $v=v_*$ holds. Such games
are called \emph{exact}. If a game $v$ is exact, by the above mentioned
property, it is the smallest game having the core equal to $\core(v)$.  
\citet{fai89} proved the next result:
\begin{theorem}
A game $v$ is exact if and only if for all families
$A,A_1,\ldots,A_k$ in $\cF\setminus \{\emptyset\}$ and $m,l\in\bN$,
\[
\sum_{i=1}^k\1_{A_i} = m\1_N + l\1_A\text{ implies } \sum_{i=1}^kv(A_i)\leq
mv(N) + lv(A).
\]
\end{theorem}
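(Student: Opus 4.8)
The plan is to prove both directions by relating exactness to the structure of the core as a polyhedron, using linear programming duality. Recall that $v$ is exact iff $v(A) = v_*(A) = \min_{x\in\core(v)} x(A)$ for all $A \in \cF$. Since always $x(A) \geq v(A)$ on the core, exactness fails precisely when for some $A\in\cF$ the minimum of $x(A)$ over the core is strictly larger than $v(A)$; equivalently, $v$ is exact iff for every $A\in\cF$ there is a core element $x$ with $x(A) = v(A)$, i.e.\ the inequality $x(A)\geq v(A)$ is tight somewhere on the core. The first thing I would do is fix $A\in\cF\setminus\{\emptyset\}$ and write down the linear program $\min\, x(A)$ subject to $x(S)\geq v(S)$ for all $S\in\cF$ and $x(N)=v(N)$, assuming (as the theorem does implicitly, via the balancedness hypothesis being subsumed) that $\core(v)\neq\emptyset$ so the LP is feasible and bounded below (bounded because $x(N)=v(N)$ and $x(N\setminus A)\geq v(N\setminus A)$ when $N\setminus A\in\cF$, and in general bounded below by LP-duality considerations).

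Next I would take the dual of this LP. The primal has one variable $x_i$ per player, one inequality constraint per $S\in\cF\setminus\{\emptyset,N\}$ (with dual multiplier, say, $y_S\geq 0$), and one equality constraint $x(N)=v(N)$ (with free dual multiplier $\mu$). The primal objective vector is $\1_A$. The dual reads: maximize $\sum_{S} y_S v(S) + \mu\, v(N)$ subject to, for each player $i$, $\sum_{S\ni i} y_S + \mu = (\1_A)_i$, with $y_S\geq 0$ and $\mu$ free. Writing $m := -\mu$ if $\mu\leq 0$, or absorbing signs appropriately, the dual feasibility constraint becomes $\sum_{S\ni i} y_S = (\1_A)_i + \mu$ for all $i$, i.e.\ $\sum_{S\in\cF\setminus\{\emptyset\}} y_S \1_S = \1_A + \mu\1_N$ where I allow $N$ itself among the $S$. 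By strong duality, $v_*(A) = \min x(A)$ equals the dual optimum. Hence $v$ is exact, i.e.\ $v_*(A)=v(A)$ for all $A$, iff for every $A\in\cF\setminus\{\emptyset\}$ the dual optimum is exactly $v(A)$; since taking $y_A=1$ and all other multipliers zero (with $\mu=0$) is always dual-feasible and gives value $v(A)$, the dual optimum is always $\geq v(A)$, so exactness is equivalent to: the dual optimum never exceeds $v(A)$, i.e.\ for every dual-feasible $(y,\mu)$ we have $\sum_S y_S v(S) + \mu v(N) \leq v(A)$.

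The remaining step is to clear denominators and rename. A dual-feasible solution gives rationals (the LP has rational data if $v$ is rational, but to match the integer-combination statement one argues by a density/scaling argument, or simply notes the extreme optimal solutions are rational when $v$ is, and then scales); multiplying through by a common denominator $m$, writing the multiset of sets $S$ with positive weight as a family $A_1,\ldots,A_k$ with repetitions, and setting $l := m\mu$ appropriately, the dual feasibility constraint $\sum_S y_S\1_S = \1_A + \mu\1_N$ becomes $\sum_{i=1}^k \1_{A_i} = m\1_A + \text{(contribution of $N$-terms)}$; after separating out how many copies of $N$ appear and shifting $\mu$, this is exactly the stated condition $\sum_{i=1}^k \1_{A_i} = m\1_N + l\1_A$, and the inequality $\sum y_S v(S) + \mu v(N)\leq v(A)$ becomes $\sum_{i=1}^k v(A_i)\leq m v(N) + l v(A)$. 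So the equivalence follows by matching the two sides.

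The main obstacle I expect is bookkeeping the sign of the free multiplier $\mu$ and correctly distributing the roles of $\1_N$ and $\1_A$ on the two sides of the clearing-denominators identity --- in particular making sure that the term $m\1_N$ (from the equality constraint $x(N)=v(N)$) and the term $l\1_A$ (which, depending on sign conventions, may come from moving the objective to the other side or from allowing $A$ itself among the $A_i$ with a net multiplicity) end up on the correct sides with the correct signs, and that $m,l$ come out as nonnegative integers as the statement requires. A secondary technical point is justifying the reduction from rational dual solutions to the arbitrary-real LP optimum (exactness is a statement about real-valued $v$), which is handled by the standard fact that an LP with finitely many constraints attains its optimum at a vertex of the dual feasible region, whose coordinates are rational functions of the data, combined with a limiting argument or by simply observing the inequality to be proved is homogeneous and closed. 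Once these sign and integrality conventions are pinned down, the proof is a direct application of LP duality exactly as Bondareva's theorem was.
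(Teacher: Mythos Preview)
The paper does not actually prove this theorem: it is stated as a result of Faigle (1989) and no proof is supplied. So there is no ``paper's own proof'' to compare against.

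That said, your LP-duality approach is the natural and correct one, and is precisely how such statements (Bondareva's theorem being the prototype) are established. A few remarks to tighten the argument. First, the case $l=0$ in the statement recovers the balancedness condition, so the hypothesis already guarantees $\core(v)\neq\emptyset$; you should make this explicit rather than leaving it implicit, since otherwise the LP might be infeasible and the duality argument would not start. Second, your handling of the free multiplier $\mu$ is right in spirit: since $N\in\cF$, any positive part of $\mu$ can be absorbed into $y_N$, so without loss $\mu\leq 0$, and writing $\mu=-m'$ the dual constraint becomes $\sum_S y_S\1_S = m'\1_N + \1_A$; clearing denominators by multiplying through by the common denominator $l$ then yields exactly the displayed identity with $m=lm'$. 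Third, your justification for rationality is correct and can be stated cleanly: the dual feasible region is a pointed polyhedron defined by $0$--$1$ data (independent of $v$), so the optimum is attained at a vertex with rational coordinates, and hence the integer-family reformulation loses nothing. With these points made explicit the proof is complete.
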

As above, repetitions are allowed in the family.
This is similar to a result of \citet{sch72}, proved when $\cF$ is a (possibly
infinite) family closed under union and complementation: $v$
is exact if and only if for all $S\in \cF$
\begin{multline*}
v(S) = \sup\Big\{\sum_i a_iv(S_i) - a|v| \text{ s.t. } \sum_i a_i\1_{S_i} - a\1_N\leq
\1_S,\\ \text{ with } a\in \bR_+, (a_i,S_i) \text{ is a finite sequence in }
\bR_+\times \cF\Big\},
\end{multline*}
and $|v|:=\sup\{\sum_ia_iv(S_i)\mid (a_i,S_i) \text{ is a finite sequence in }
\bR_+\times \cF\text{ s.t.} \sum_ia_i\1_{S_i}\leq \1_N\}$.

\medskip

When nonempty, the core is a polyhedron. Therefore it makes sense to speak of its
recession cone (proposed under the name of \emph{core of the set system $\cF$}
by \citet{dere98}, hence the notation):
\[
\core(\cF) :=\{x\in\mathbb{R}^n
\mid x(S) \geq 0 \text{ for all }S\in
\cF\text{ and }x(N)=0\}.
\]
A direct application of results of Section~\ref{sec:poly} leads to:
\begin{enumerate}
\item $\core(v)$ has rays if and only if $\core(\cF)$ is a pointed cone
different from $\{0\}$.  Then $\core(\cF)$ corresponds to the conic part of
$\core(v)$;
\item $\core(v)$ has no vertices if and only if $\core(\cF)$ contains a line;
\item  $\core(v)$ is a polytope  if and only if $\core(\cF)=\{0\}$.
\end{enumerate}
Therefore, it remains to study the structure of the recession cone.  We introduce
\[
\mathrm{span}(\cF)
:=\big\{S\subseteq N\mid \1_S=\sum_{T\in \cF}\lambda_T
\1_T\text{ for some }\lambda_T\in\mathbb{R}\big\}.
\]
$\cF$ is \emph{non-degenerate} if $\mathrm{span}(\cF) = 2^N$
\footnote{In fact, it is simpler to check the following equivalent condition:
for all $i\in N$, it exists a linear combination of the $\1_T$'s, $T\in \cF$,
giving $\1_i$.}. Non-degeneracy implies the discerning property
(see Section~\ref{sec:un}). The converse holds if $\cF$ is closed under
$\cup,\cap$ (see Theorem~\ref{th:degi}). We give two easy sufficient conditions
for $\cF$ to be non-degenerate:
\begin{enumerate}
\item $\cF$ contains all singletons (obvious from Footnote~3);
\item $\cF$ is regular. Indeed, since any chain has length $n$, all $\1_i$'s can
  be recovered from $\1_{S_j}-\1_{S_{j-1}}$, for two consecutive sets $S_j,S_{j-1}$ in a
  chain. 
\end{enumerate} 
\begin{theorem}\label{th:nond}
\citep{dere98} $\core(\cF)$ is a pointed cone if and only if $\cF$ is non-degenerate.
\end{theorem}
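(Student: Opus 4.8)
$\core(\cF)$ is a pointed cone if and only if $\cF$ is non-degenerate.

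Recall $\core(\cF)=\{x\in\bR^n\mid x(S)\ge 0\text{ for all }S\in\cF,\ x(N)=0\}$. This is the recession cone of $\core(v)$, so by the basic facts recalled in Section~\ref{sec:poly} it is pointed exactly when it contains no line, equivalently when the linear system obtained by turning all inequalities into equalities — $x(S)=0$ for all $S\in\cF$ and $x(N)=0$ — has only $x=0$ as a solution.

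The plan is to identify that "linear system" with the orthogonal complement of $\mathrm{span}(\cF)$ in $\bR^n$ and then read off the equivalence. Concretely, $x(S)=\langle x,\1_S\rangle$, so the set $L:=\{x\in\bR^n\mid x(S)=0\ \forall S\in\cF\}$ is precisely $\{\1_T\mid T\in\cF\}^{\perp}$, which equals $\mathrm{span}(\cF)^{\perp}$ (note $\1_N$ is already among the $\1_T$ since $N\in\cF$, so the extra equation $x(N)=0$ is redundant). Now $\core(\cF)$ is pointed $\iff$ it contains no line $\iff$ $L=\{0\}$ $\iff$ $\mathrm{span}(\cF)^{\perp}=\{0\}$ $\iff$ $\mathrm{span}(\cF)$ (as a vector subspace of $\bR^n$, i.e.\ the linear span of the vectors $\1_T$, $T\in\cF$) equals $\bR^n$. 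It remains only to check that "$\mathrm{span}(\cF)$ spans all of $\bR^n$ as a subspace" is the same as "$\mathrm{span}(\cF)=2^N$" in the set-theoretic sense defined in the excerpt: the latter says every $\1_S$, $S\subseteq N$, is a real linear combination of the $\1_T$, which is exactly the statement that the $\1_T$ span the subspace containing all $\1_S$, and since the $\1_S$ for $S\subseteq N$ already span $\bR^n$ (e.g.\ the singletons give the standard basis), the two conditions coincide. That is precisely non-degeneracy of $\cF$.

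The one point deserving a line of justification is the equivalence "$\core(\cF)$ contains a line $\iff$ $L\neq\{0\}$", where $L$ is the solution space of the associated homogeneous equality system; this is item~2 of the fundamental properties listed in Section~\ref{sec:poly}, applied to the cone $\core(\cF)$ viewed as its own recession cone (a cone defined by homogeneous inequalities is its own recession cone, so $C(\core(\cF))=\core(\cF)$). I do not anticipate a genuine obstacle here — the proof is essentially a dictionary translation between the polyhedral language of Section~\ref{sec:poly} and the linear-algebraic definition of $\mathrm{span}(\cF)$. If one wants to be fully explicit about the "no line" direction without invoking the recession-cone machinery, one can argue directly: if $x\in L\setminus\{0\}$ then both $x$ and $-x$ satisfy all defining (in)equalities of $\core(\cF)$, so the whole line $\bR x$ lies in $\core(\cF)$ and it is not pointed; conversely if $\core(\cF)$ contains a line $\bR x$ with $x\neq 0$, then for every $S\in\cF$ both $x(S)\ge 0$ and $-x(S)\ge 0$, forcing $x(S)=0$, so $x\in L$.
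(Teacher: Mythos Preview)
Your proof is correct and follows essentially the same route as the paper's own justification: both reduce pointedness of $\core(\cF)$ to the statement that the homogeneous system $x(T)=0$, $T\in\cF$, has only the trivial solution (item~(ii) in Section~\ref{sec:poly}), and then identify this with non-degeneracy via the fact that the $\1_T$'s span $\bR^n$ exactly when every $\1_i$ is a linear combination of them. Your phrasing in terms of the orthogonal complement $L=\{\1_T\mid T\in\cF\}^\perp$ is a clean way to package the same linear-algebraic observation the paper makes more informally (``the same linear combinations can be used to express all $x_i$'s''), and your added direct argument for the ``no line'' equivalence is a nice self-contained alternative to invoking the recession-cone fact.
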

This result is easy to see from Footnote~2 and
Section~\ref{sec:poly}. Indeed, non-degeneracy is equivalent to the existence of
linear combinations of the $\1_T$'s, $T\in\cF$, giving all $\1_i$'s, $i\in N$,
and the same linear combinations can  be used to express all $x_i$'s, $i\in N$,
from the system $x(T)=0$, $T\in\cF$, thus proving that this system has a unique
solution (which is 0). But this is equivalent to say that the recession cone is a
pointed cone.  

We recall that $\cF$ is \emph{balanced} if $\exists\lambda_S>0$ for all $S\in\cF$ such that
$\1_N=\sum_{S\in\cF}\lambda_S\1_S$. 
\begin{theorem}
\citep{dere98} $\core(\cF)$ is a linear subspace if and only if $\cF$ is balanced.
\end{theorem}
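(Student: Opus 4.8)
The plan is to read ``linear subspace'' through the polyhedral dictionary of Section~\ref{sec:poly}: a convex cone is a linear subspace if and only if it equals its own negative. So I will prove that the condition $\core(\cF)=-\core(\cF)$ is exactly balancedness of $\cF$. One implication is a one-line computation; the other needs a Farkas argument plus an averaging trick, and that trick carries the real content.

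\emph{Balanced $\Rightarrow$ subspace.} Suppose $\1_N=\sum_{S\in\cF}\lambda_S\1_S$ with every $\lambda_S>0$. For $x\in\core(\cF)$ we have $0=x(N)=\sum_{S\in\cF}\lambda_S\,x(S)$, a sum of nonnegative terms (each $x(S)\geq 0$ since $S\in\cF$, and $\lambda_S>0$), so every term vanishes and $x(S)=0$ for all $S\in\cF$. Then $-x$ again satisfies all the defining (in)equalities of $\core(\cF)$, i.e.\ $-x\in\core(\cF)$. Thus $\core(\cF)=-\core(\cF)$, and a convex cone equal to its negative is closed under all scalar multiples, hence is a linear subspace.

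\emph{Subspace $\Rightarrow$ balanced.} Write $\core(\cF)$ as the polyhedral cone $\{x\in\bR^n\mid \1_S\cdot x\geq 0\ (S\in\cF\setminus\{\emptyset\}),\ -\1_N\cdot x\geq 0\}$, where the last inequality together with $\1_N\cdot x\geq 0$ (already present since $N\in\cF$) encodes $x(N)=0$. Fix $T\in\cF\setminus\{\emptyset\}$. Since the cone equals its negative, $x$ in it implies $-x$ in it, so $\1_T\cdot(-x)\geq 0$; hence $-\1_T\cdot x\geq 0$ is a valid homogeneous inequality for $\core(\cF)$. By the Farkas lemma (the cone spanned by the rows of the defining system is closed and coincides with the set of all valid homogeneous inequalities), there are nonnegative scalars with $-\1_T=\sum_{S\in\cF\setminus\{\emptyset\}}\mu^T_S\1_S-\nu_T\1_N$. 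Rearranging yields $\nu_T\1_N=\1_T+\sum_S\mu^T_S\1_S=\sum_{S\in\cF\setminus\{\emptyset\}}\eta^T_S\1_S$ with $\eta^T_S\geq 0$ and $\eta^T_T\geq 1$; evaluating at a coordinate $i\in T$ gives $\nu_T\geq 1$. Now sum over all $T\in\cF\setminus\{\emptyset\}$: with $\Lambda:=\sum_T\nu_T$ (finite, $\geq 1$) one gets $\Lambda\1_N=\sum_T\nu_T\1_N=\sum_{S\in\cF\setminus\{\emptyset\}}\bigl(\sum_T\eta^T_S\bigr)\1_S$, so $\1_N=\sum_{S\in\cF\setminus\{\emptyset\}}\lambda_S\1_S$ with $\lambda_S:=\Lambda^{-1}\sum_T\eta^T_S\geq\Lambda^{-1}\eta^S_S\geq\Lambda^{-1}>0$ for every $S$. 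Hence $\cF$ is balanced.

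The main obstacle is exactly the last step of the converse: a single application of Farkas only produces \emph{some} conic representation of $\1_N$ by the $\1_S$, with no control over which coefficients are nonzero, whereas balancedness requires all of them strictly positive. Running Farkas once per set $T$ (so that $\eta^T_T\geq 1$) and then taking the unweighted sum over $T$ forces every coefficient to be positive, which is the key idea. Everything else — the equivalence ``convex cone $=$ its negative $\iff$ linear subspace'' and the closedness of the cone generated by the defining rows — is standard and already recorded in Section~\ref{sec:poly}.
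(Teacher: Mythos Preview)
Your proof is correct. The paper does not actually supply a proof of this theorem; it merely records the statement with a citation to \citet{dere98} and moves on to the corollary that $\core(\cF)=\{0\}$ iff $\cF$ is both balanced and non-degenerate. So there is nothing to compare your argument against in the present text.

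Two minor remarks on presentation. First, in the forward direction you in fact prove the stronger fact that $\core(\cF)=\{x\mid x(S)=0,\ \forall S\in\cF\}$; the detour through ``convex cone equal to its negative'' is unnecessary once all the $x(S)$ are shown to vanish. Second, in the converse your key trick --- running Farkas once per $T$ to force $\eta^T_T\geq 1$ and then summing --- is exactly what is needed to upgrade the nonnegative conic combination (which Farkas gives) to one with \emph{all} coefficients strictly positive; this is the genuine content, and you identify it clearly. The side issue with $\emptyset\in\cF$ is harmless since $\1_\emptyset=0$, so any positive $\lambda_\emptyset$ can be appended without affecting the identity $\1_N=\sum_S\lambda_S\1_S$.
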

Therefore, $\core(\cF)=\{0\}$ if and only if $\cF$ is balanced and non-degenerate.

\medskip

Lastly, considering a game $v$ on $\cF$, we
introduce the following extension of $v$ to $\widetilde{F}$ \citep{fai89}
\[
\tilde{v}(S):=\max\Big\{\sum_{i\in I}v(F_i), \quad \{F_i\}_{i\in I} \text{ is a
  $\cF$-partition of $S$}\Big\},
\]
where by a $\cF$-partition we mean a partition whose blocks belong to $\cF$.
The game $\tilde{v}$ is superadditive, and if $\tilde{v}(N)=v(N)$, then
$\core(v)=\core(\tilde{v})$, which is easy to show.
\begin{remark}
If $\cF$ contains all singletons (e.g., a partition system), then
$\widetilde{\cF}=2^N$, and so $\tilde{v}$ is an extension of $v$ on $2^N$: compare
with the extension $\overline{v}$ defined in Section~\ref{sec:ext}. Also,
$\tilde{v}$ is a partitioning game of Kaneko and Wooders (see
Section~\ref{sec:part}). If $v$ is a superadditive game on a partition system,
then $\tilde{v}=\overline{v}$. In \citet[\S 5.3]{bil00} it is shown that if
$\cF$ is a partition system containing $N$ and
$v(N)=\overline{v}(N)=\tilde{v}(N)$, then $\core(\overline{v}) =
\core(\tilde{v})$.
\end{remark}

\subsection{Set systems closed under $\cup,\cap$}\label{sec:un}
Let $\cF$ be a set system closed under $\cup,\cap$ (such systems are
distributive lattices, and correspond to permission
structures; see Section~\ref{sec:cupcap}). For each $i\in N$ we define
\[
D_i:=\bigcap\{S\in\cF\mid S\ni i\} = \text{smallest $S$ in $\cF$ containing $i$}.
\]
\begin{proposition}\label{prop:di}
The set of $D_i$'s coincides with the set of join-irreducible elements of $\cF$,
i.e.,
\[
\{D_i\}_{i\in N} = \cJ(\cF).
\]
Moreover, if the height of
$\cF$ is strictly smaller than $n$, necessarily we have $D_i=D_j$ for some $i,j$
(the height equals the number of distinct $D_i$'s). 
\end{proposition}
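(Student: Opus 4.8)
The plan is to prove the two claims in turn, using Birkhoff's theorem (Section~\ref{sec:poset}) to pass freely between the distributive lattice $\cF$ and the poset $\cJ(\cF)$ of its join-irreducibles, together with Proposition~\ref{prop:3.3}(i), which tells us that $\cF$ is isomorphic to $\cO(\cJ(\cF))$ and is $k$-regular with $k=|\cJ(\cF)|$ equal to its height.

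First I would show $D_i \in \cJ(\cF)$ for every $i$. The set $D_i$ is well defined since $\cF$ is closed under intersection and $N \ni i$; it is the least element of $\cF$ containing $i$. To see it is join-irreducible, suppose $D_i = A \vee B = A \cup B$ with $A, B \in \cF$ and $A, B \subsetneq D_i$ (here $\vee = \cup$ because $\cF$ is closed under union). Then $i \in A$ or $i \in B$, say $i \in A$; but then $A$ is a member of $\cF$ containing $i$ that is strictly smaller than $D_i$, contradicting minimality. Hence $D_i$ covers only one element and is join-irreducible. Conversely, let $J \in \cJ(\cF)$. Since $\cF \cong \cO(\cJ(\cF))$, the join-irreducible $J$ corresponds to a principal downset $\downarrow\! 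J$ in $\cJ(\cF)$, and in particular $J$ covers exactly one element $J^-$ of $\cF$ (with $J^- = J \setminus \{$something$\}$ by regularity, since $\cF$ is $k$-regular so $|J \setminus J^-| = 1$). Pick $i$ to be the unique element of $J \setminus J^-$. Then $i \in J$, so $D_i \subseteq J$. If $D_i$ were strictly smaller than $J$, then $D_i \subseteq J^-$ (any element of $\cF$ strictly below $J$ lies below $J^-$, as $J$ covers only $J^-$), contradicting $i \in D_i$ but $i \notin J^-$. Hence $D_i = J$, proving $\cJ(\cF) \subseteq \{D_i\}_{i \in N}$ and thus equality.

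For the second claim: the height of $\cF$ equals $|\cJ(\cF)|$ by Proposition~\ref{prop:3.3}(i) (or the last sentence of Section~\ref{sec:poset}). By the first part, $|\cJ(\cF)| = |\{D_i \mid i \in N\}|$, the number of \emph{distinct} $D_i$'s, which is the cardinality of the image of the map $i \mapsto D_i$. This number is always at most $n$, and it is strictly less than $n$ precisely when this map is not injective, i.e., when $D_i = D_j$ for some $i \neq j$. So ``height $< n$'' is equivalent to ``$D_i = D_j$ for some $i,j$'', and the height equals the number of distinct $D_i$'s, exactly as stated.

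The argument is essentially bookkeeping once Birkhoff's theorem is invoked; the one point that requires a little care is the converse inclusion $\cJ(\cF) \subseteq \{D_i\}$, where one must correctly identify which player $i$ to associate to a given join-irreducible $J$ and verify that $D_i$ lands back on $J$ rather than on something smaller. This is where I would lean on $k$-regularity (so that $J$ differs from the element it covers by exactly one player) and on the order-theoretic fact that in a distributive lattice an element below a join-irreducible $J$ and distinct from $J$ must lie below the unique lower cover $J^-$. Everything else — that $D_i$ is well defined, that it is join-irreducible, and the counting statement — is immediate.
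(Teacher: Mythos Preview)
Your proof is correct and follows essentially the same route as the paper's: show each $D_i$ is join-irreducible by a minimality contradiction, then for the converse pick $i\in J\setminus J^-$ and argue $D_i=J$, and finally read off the height statement from $|\cJ(\cF)|$. The only superfluous step is your appeal to $k$-regularity to get $|J\setminus J^-|=1$; the argument works for \emph{any} $i\in J\setminus J^-$, so regularity is not needed there (and the paper does not invoke it).
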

\begin{proof}
Suppose there is some $D_i$ which is not a join-irreducible element. Then $D_i$
can be written as the supremum of other elements, which are smaller. Since $\cF$
is closed under $\cup$, one of these elements must contain $i$, a contradiction
with the definition of $D_i$.

Conversely, take a join-irreducible element $S$. If $S=\{i\}$, we are
done. Assume then that $|S|>1$. Since it covers only one subset, say $S'$, for any
$i$ in $S\setminus S'$, $S$ is the smallest subset containing $i$, whence the result. 

Finally, since  $\cF$ is a
distributive lattice, its height is the number of join-irreducible elements,
hence some $D_i$'s must coincide if the height is less than $n$.
\end{proof}
\begin{remark}
The sets $D_i$'s are introduced in \citet{degi95}. They are also known in the
  literature of combinatorial optimization (see \citet[Sec. 3.3]{fuj05b} and
  \citet[Sec. 7.2 (b.1)]{fuj05b} (principal partitions)).
\end{remark} 
\begin{theorem}
\[
\core(\cF) = \mathrm{cone}(\1_j-\1_i\mid i\in N \text{ and } j\in D_i).
\]
\end{theorem}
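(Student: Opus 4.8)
The plan is to prove the two inclusions separately, using the description of $\core(\cF)$ as the recession cone $\{x\in\bR^n\mid x(S)\ge 0\ \forall S\in\cF,\ x(N)=0\}$ together with the lattice structure of $\cF$ (closed under $\cup,\cap$) and the key fact from Proposition~\ref{prop:di} that $D_i$ is the smallest feasible set containing $i$.

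For the easy inclusion $\mathrm{cone}(\1_j-\1_i\mid i\in N,\ j\in D_i)\subseteq\core(\cF)$, I would just check that each generator $\1_j-\1_i$ with $j\in D_i$ lies in $\core(\cF)$, and then invoke that $\core(\cF)$ is a convex cone. Clearly $(\1_j-\1_i)(N)=0$. For feasibility, take any $S\in\cF$ and evaluate $(\1_j-\1_i)(S)=\1_S(j)-\1_S(i)$; this is negative only if $i\in S$ and $j\notin S$. But $i\in S$ and $S\in\cF$ forces $D_i\subseteq S$ by minimality of $D_i$, hence $j\in D_i\subseteq S$, contradiction. So $(\1_j-\1_i)(S)\ge 0$ for all $S\in\cF$, giving $\1_j-\1_i\in\core(\cF)$.

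For the reverse inclusion $\core(\cF)\subseteq\mathrm{cone}(\1_j-\1_i\mid i\in N,\ j\in D_i)$, I would take $x\in\core(\cF)$ and try to decompose it explicitly as a nonnegative combination of the generators. The natural strategy is to peel off contributions along a maximal chain of $\cF$ (or equivalently, use the M\"obius transform of the set function $S\mapsto x(S)$, which is nonnegative on the join-irreducibles): since $\cF$ is a distributive lattice, write each feasible set via the downsets of $\cJ(\cF)=\{D_i\}_{i\in N}$, and observe that the "increments" $x(D)$ as one moves up the lattice must be nonnegative (they equal sums of $x(S)$-differences along covering relations, each $\ge 0$). Concretely, pick representatives so that the distinct join-irreducibles are $D_{i_1},\dots,D_{i_m}$ (the height of $\cF$), and for each such join-irreducible $D=D_{i_r}$ covering a unique element $D^-$ in $\cF$, the difference $x(D)-x(D^-)\ge 0$; moreover $\1_D-\1_{D^-}=\sum_{j\in D\setminus D^-}\1_j$ and each $j\in D\setminus D^-$ satisfies $D_j=D$, so $\1_j$ relative to a chosen base point $i_r\in D\setminus D^-$ gives generators $\1_j-\1_{i_r}$. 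Assembling these with coefficients drawn from the nonnegative increments, plus using $x(N)=0$ to absorb the "leftover" at the top, should exhibit $x$ as the required conic combination; the bookkeeping is routine once the chain/M\"obius framework is set up.

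The main obstacle is the second inclusion: one must be careful that when several players share the same $D_i$, the generators $\1_j-\1_i$ (for $j$ in the same join-irreducible block) exactly span the "internal degrees of freedom" within each block, while the increments along the chain of join-irreducibles handle the "between-block" freedom — and one must verify that the resulting coefficients are genuinely nonnegative. This is where closure under $\cup,\cap$ (distributivity, Birkhoff) is essential: it guarantees the chain of join-irreducibles is well-behaved and that $x(S)$ for arbitrary $S\in\cF$ decomposes additively over the join-irreducibles below it, so nonnegativity of $x$ on all of $\cF$ reduces to nonnegativity of the individual increments. I would also note that this theorem generalizes the classical picture (Figure~\ref{fig:birk}-style reasoning), and when $\cF=2^N$ one has $D_i=\{i\}$ and recovers the standard recession cone $\mathrm{cone}(\1_j-\1_i\mid i\ne j)$.
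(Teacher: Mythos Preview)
The paper does not give its own proof of this theorem; the result is stated and attributed to Derks and Gilles, with the sharper $\cO(N)$ form (identifying the extreme rays) credited to Tomizawa. So there is no in-paper argument to compare against, but your reverse inclusion contains a real gap.

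Your easy inclusion $\mathrm{cone}(\1_j-\1_i\mid j\in D_i)\subseteq\core(\cF)$ is correct and clean. For the reverse inclusion you assert that for each join-irreducible $D$ covering $D^-$ the increment $x(D)-x(D^-)$ is nonnegative whenever $x\in\core(\cF)$, and that nonnegativity of $x$ on $\cF$ ``reduces to nonnegativity of the individual increments.'' This is false. Take $N=\{1,2,3\}$ and the chain $\cF=\{\emptyset,\{1\},\{1,2\},N\}$ (certainly closed under $\cup,\cap$); the vector $x=(1,-1,0)$ lies in $\core(\cF)$, yet with $D=\{1,2\}$, $D^-=\{1\}$ one gets $x(D)-x(D^-)=-1<0$. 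Membership in $\core(\cF)$ constrains only the \emph{values} $x(S)$, $S\in\cF$, not their successive differences, so the M\"obius-increment scheme you sketch cannot produce nonnegative coefficients. A correct decomposition in the chain case comes from Abel summation: $x=\sum_{k=1}^{n-1} x(\{1,\dots,k\})\,(\1_k-\1_{k+1})$, and the coefficients are the values $x(S_k)\ge 0$ themselves, not their increments. For general $\cF$ closed under $\cup,\cap$ one first absorbs the within-block freedom (when $D_i=D_j$, both $\1_j-\1_i$ and $\1_i-\1_j$ are generators, so any zero-sum vector supported on a single block is trivially a conic combination), reducing to the $\cO(N)$ case; there a single maximal chain no longer suffices when the order is not total, and one needs a genuinely more careful argument---e.g.\ an inductive peeling of a maximal element, or the extreme-ray identification cited in the remark following the theorem. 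Incidentally, your closing sanity check is also wrong: for $\cF=2^N$ one has $D_i=\{i\}$, so the only admissible $j\in D_i$ is $j=i$, and the generated cone is $\{0\}$ (consistent with boundedness of the classical core), not $\mathrm{cone}(\1_j-\1_i\mid i\ne j)$.
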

If $\cF$ is not closed under $\cup,\cap$, then any $\1_j-\1_i$ is a ray of $\core(\cF)$.
\begin{remark}
This result is due to \cite{degi95}. It was in fact proved when the system is of
the type $\cO(N)$ in a more precise form by Tomizawa (\citet{tom83}, cited in
\citet[Th. 3.26]{fuj05b}): it says that the extreme rays are those corresponding
to $j\succ i$ in $(N,\leq)$. Note that it could be easily adapted if the lattice
is not generated by $N$, but by a partition of $N$ (see
Proposition~\ref{prop:3.3}). 

Another consequence of this result is that when $\cF$ is closed under
$\cup,\cap$, the core is \textit{always} unbounded, unless $(N,\leq)$ (or the
partition of $N$ endowed with $\leq$) is an antichain.
\end{remark}
The set system $\cF$ is \emph{discerning} if all $D_i$'s are different (equivalently, by
Proposition~\ref{prop:di}, if the height of $\cF$ is $n$, which is a much
simpler condition).
\begin{theorem}\label{th:degi}
\citep{degi95} Consider $ \cF$ to be closed under $\cup,\cap$.
$\core(\cF)$ is a pointed cone if and only if $\cF$ is discerning.
\end{theorem}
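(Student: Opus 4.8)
The plan is to reduce everything to Theorem~\ref{th:nond}, which already tells us that $\core(\cF)$ is pointed if and only if $\cF$ is non-degenerate, i.e.\ $\mathrm{span}(\cF)=2^N$; since $\1_S=\sum_{i\in S}\1_{\{i\}}$, the latter holds precisely when $\1_{\{i\}}\in\mathrm{span}(\cF)$ for every $i\in N$. With this in hand, the ``only if'' direction is short: if $\core(\cF)$ is pointed then $\cF$ is non-degenerate, and I would check that non-degeneracy forces $\cF$ to be discerning. Indeed, if $D_i=D_j$ with $i\neq j$, then for every $T\in\cF$ one has $i\in T\iff j\in T$ (because $D_i$ is the smallest member of $\cF$ containing $i$, so $i\in T$ gives $D_i\subseteq T$, hence $D_j\subseteq T$, hence $j\in T$, and symmetrically); consequently the $i$th and $j$th coordinates of every $\1_T$ coincide, so no linear combination of the $\1_T$ can equal $\1_{\{i\}}$ — contradicting $\mathrm{span}(\cF)=2^N$.

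For the ``if'' direction, assume $\cF$ is discerning; I want to produce, for each $i$, an explicit linear combination of indicators of feasible sets equal to $\1_{\{i\}}$. First I would make the lattice structure concrete by defining a relation on $N$ via $j\leq i\iff j\in D_i$. Reflexivity is $i\in D_i$; transitivity follows because $j\in D_i$ forces $D_j\subseteq D_i$ (minimality of $D_j$ among members of $\cF$ containing $j$); and antisymmetry is exactly the place where ``discerning'' is used: $j\in D_i$ and $i\in D_j$ yield $D_i=D_j$, hence $i=j$. So $(N,\leq)$ is a genuine poset and $D_i=\,\downarrow\! i$ for each $i$. (Alternatively, since discerning means $\cF$ has height $n$, Proposition~\ref{prop:3.3}(iii) already identifies $\cF$ with $\cO(N)$ for this order; I prefer building $\leq$ by hand because it keeps the coordinates of $N$ transparent, which is what the span computation needs.)

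I would then prove $\1_{\{i\}}\in\mathrm{span}(\cF)$ by induction on the height of $i$ in $(N,\leq)$. If $i$ is minimal, $D_i=\{i\}\in\cF$, so $\1_{\{i\}}=\1_{D_i}\in\mathrm{span}(\cF)$. Otherwise $D_i=\{i\}\cup\{j\in N\mid j<i\}$ as a disjoint union, so $\1_{\{i\}}=\1_{D_i}-\sum_{j<i}\1_{\{j\}}$, where $\1_{D_i}\in\mathrm{span}(\cF)$ because $D_i\in\cF$ and each $\1_{\{j\}}$ with $j<i$ belongs to $\mathrm{span}(\cF)$ by the induction hypothesis. Hence $\mathrm{span}(\cF)=2^N$, and Theorem~\ref{th:nond} yields that $\core(\cF)$ is a pointed cone. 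The only step that is not pure bookkeeping is recognizing that ``discerning'' is precisely what turns the preorder ``$j\in D_i$'' into a partial order on $N$, making the principal ideals $\downarrow\! i=D_i$ available inside $\cF$; once that is in place, the inductive span computation is routine.
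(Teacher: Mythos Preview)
Your argument is correct. Both you and the paper reduce the question to Theorem~\ref{th:nond} (pointedness $\Leftrightarrow$ non-degeneracy), and your ``only if'' direction coincides with the paper's (it simply cites the remark that non-degeneracy implies the discerning property; you spell out the reason).

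For the ``if'' direction the routes diverge slightly. The paper argues at a higher level: discerning means the height of $\cF$ is $n$ (Proposition~\ref{prop:di}), hence by Proposition~\ref{prop:3.3} the lattice is $n$-regular, and regularity was already observed in Section~\ref{sec:gene} to imply non-degeneracy (via differences of consecutive sets in any maximal chain). You instead build the partial order $j\leq i\Leftrightarrow j\in D_i$ by hand and give an explicit inductive expression of each $\1_{\{i\}}$ in terms of the $\1_{D_j}$. This is a genuinely self-contained construction that avoids invoking regularity as an intermediate notion; the trade-off is that it reproves, in this special case, a fact the paper had already recorded. Your parenthetical remark about Proposition~\ref{prop:3.3}(iii) shows you are aware that the shortcut through $\cF\cong\cO(N)$ is available.
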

This result is easy to deduce from previous facts. If the recession cone is
pointed, then $\cF$ is non-degenerate by Theorem~\ref{th:nond}, which implies the
discerning property as mentionned above. If $\cF$ is discerning, then its height
is $n$, and so it is regular, which implies that it is non-degenerate, and
therefore, the recession cone is pointed.

When $\cF$ is of the type $\cO(N)$, for any maximal chain $C\in\cC(\cF)$, define
the \emph{marginal vector} associated to $C$ like in the classical case, and
define the \emph{Weber set} as the convex hull of all marginal vectors.
\begin{theorem}\label{th:degial}
Let $\cF$ be of the type $\cO(N)$. Then the
convex part of the core is included in the Weber set.
\end{theorem}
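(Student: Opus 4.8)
The plan is to reduce Theorem~\ref{th:degial} to the classical Weber set inclusion $\core(v)\subseteq\cW(v)$ for games on $2^N$ by a quotient (or "lifting") argument. Since $\cF$ is of the type $\cO(N)$, by Birkhoff's theorem $\cF=\cO(N,\leq)$ for some partial order $\leq$ on $N$, and the sets $D_i$ of Section~\ref{sec:un} are exactly the principal ideals $\downarrow\!i$. The recession cone $\core(\cF)$ is $\cone(\1_j-\1_i\mid j\in D_i)$, which by Theorem~\ref{th:degi} is pointed precisely when $\cF$ is discerning, i.e. when all $D_i$ are distinct; in general it is a pointed cone after quotienting by the lineality space. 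First I would fix a game $v$ on $\cF$ with $\core(v)\neq\emptyset$ and an element $x$ of the convex part of $\core(v)$, meaning $x$ lies in the convex hull of the vertices of $\core(v)$ (equivalently, $x\in\core(v)$ lies in some bounded face, or $x$ is obtained from the decomposition $\core(v)=\core(\cF)+\conv(\ext(\core(v)))$ by taking the recession-cone component to be $0$ in the appropriate complementary subspace).

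Next I would exploit regularity: a distributive set lattice of type $\cO(N)$ is an $n$-regular set system (Proposition~\ref{prop:3.3}(iii)), so every maximal chain $C=\{\emptyset,S_1,\ldots,S_n=N\}$ in $\cC(\cF)$ has length $n$, each successive difference $S_i\setminus S_{i-1}$ is a singleton $\{\sigma(i)\}$, and hence corresponds bijectively to a \emph{linear extension} $\sigma$ of the poset $(N,\leq)$. The marginal vector $\phi^C$ is defined exactly as in the classical case, $\phi^C_{\sigma(i)}=v(S_i)-v(S_{i-1})$, and one checks directly that $\phi^C\in\core(v)$: efficiency $\phi^C(N)=v(N)$ is the telescoping sum, and for any $S\in\cF$, since $\cF$ is closed under $\cup,\cap$, the sets $S\cap S_i$ form (after deleting repetitions) a saturated chain inside $\cF$ from $\emptyset$ to $S$, along which $\sum_i(v(S\cap S_i)-v(S\cap S_{i-1}))=v(S)$, while each increment is a marginal contribution $v(T\cup j)-v(T)$ with $T,T\cup j\in\cF$, which is $\le\phi^C_j$ — but wait, this last inequality needs convexity of $v$, which is \emph{not} assumed here. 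So instead I would argue the inclusion at the level of the polyhedral description directly.

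The cleaner route, and the one I expect to be decisive, is: let $x$ belong to the convex part of $\core(v)$. The convex part is a polytope, hence the convex hull of its vertices; by the face description in Section~\ref{sec:poly}, each such vertex $y$ satisfies $n-1$ independent tight constraints among $\{x(S)=v(S):S\in\cF\}$ together with $x(N)=v(N)$, i.e. $n$ independent equalities $y(S)=v(S)$ for sets $S$ in $\cF$ forming a spanning family; since $\cF=\cO(N)$ is closed under $\cup,\cap$, a maximal nested subfamily of tight sets containing $N$ is a maximal chain $C$ of $\cF$, and solving the resulting triangular system shows $y=\phi^C$. Thus every vertex of the convex part of $\core(v)$ is a marginal vector $\phi^C$ with $C\in\cC(\cF)$, hence lies in $\cW(v):=\conv(\phi^C\mid C\in\cC(\cF))$, and therefore so does $x$. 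The main obstacle is the combinatorial lemma that a maximal collection of \emph{tight} feasibility constraints at a vertex can always be chosen to be a chain in $\cF$: this uses the lattice property — if $S,T$ are both tight at $y\in\core(v)$ then $y(S\vee T)+y(S\wedge T)\le v(S\vee T)+v(S\wedge T)$ would follow from $x\in\core(v)$ giving $y(S\vee T)\ge v(S\vee T)$, $y(S\wedge T)\ge v(S\wedge T)$, combined with $y(S\vee T)+y(S\wedge T)=y(S)+y(T)=v(S)+v(T)$ (here $S\vee T=S\cup T$, $S\wedge T=S\cap T$ since closed under $\cup,\cap$), forcing $S\vee T$ and $S\wedge T$ to be tight as well — this is the standard "tight sets form a lattice" argument, which I would spell out and then invoke to extract the chain, completing the proof.
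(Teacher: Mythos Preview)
Your proposal contains a genuine gap in the final ``tight sets form a lattice'' step. You assert that if $S,T$ are both tight at $y\in\core(v)$ then $S\cup T$ and $S\cap T$ are also tight, but your derivation reverses an inequality: from $y\in\core(v)$ you obtain $y(S\cup T)\geq v(S\cup T)$ and $y(S\cap T)\geq v(S\cap T)$, hence $y(S\cup T)+y(S\cap T)\geq v(S\cup T)+v(S\cap T)$, not $\leq$ as you wrote. Combined with modularity of $y$ this yields only $v(S)+v(T)\geq v(S\cup T)+v(S\cap T)$, which does \emph{not} force the two core inequalities to bind individually. The sublattice property of tight sets genuinely requires supermodularity of $v$; the paper itself records exactly this hypothesis when it proves ``$\cF(x)$ is a sublattice of $\cF$ if $v$ is convex'' later in Section~5.3. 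Since Theorem~\ref{th:degial} carries no convexity assumption, your argument does not close.

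Consequently your stronger claim---that every vertex of the convex part of $\core(v)$ is itself a marginal vector---is false. Take $n=3$, $\cF=2^N=\cO(N)$ for the trivial order, and $v(\{i\})=-1$ for each $i$, $v\equiv 0$ elsewhere. Then $(0,0,0)$ is a vertex of $\core(v)$ (the pairs and $N$ are tight), yet every marginal vector has some coordinate equal to $-1$; the point $(0,0,0)$ lies in $\cW(v)$ only as a proper convex combination, e.g.\ $\tfrac12(-1,1,0)+\tfrac12(1,-1,0)$. The tight sets here are $\{\emptyset,12,13,23,123\}$, which is not closed under intersection, and no maximal chain of $2^N$ consists entirely of tight sets. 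The paper does not supply a self-contained proof of Theorem~\ref{th:degial} but refers to \citet{grxi07} and \citet{degi95}; the underlying argument (as in Derks' proof of $\core(v)\subseteq\cW(v)$ on $2^N$) does not proceed by identifying core vertices with marginal vectors. One workable route is a separation argument: for any $c\in\bR^n$, the minimum of $\langle c,\cdot\rangle$ over $\cW(v)$ is attained at the marginal vector associated with a linear extension of $(N,\leq)$ that orders players by decreasing $c$-value, and every element of the convex part of $\core(v)$ dominates this minimum via a telescoping use of the core constraints along that chain.
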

\begin{theorem}\label{th:futo}
Let $\cF$ be of the type $\cO(N)$. Then $v$ is convex if and only if the
convex part of the core is equal to the Weber set.
\end{theorem}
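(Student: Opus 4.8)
The plan is to mimic the classical Shapley–Ichiishi argument (Theorem~2 in the excerpt), but carried out in the lattice $\cF=\cO(N)$ where maximal chains are in bijection with the linear extensions of the poset $(N,\le)$. Recall from Theorem~\ref{th:nond} and the remarks following Proposition~\ref{prop:3.3} that an $\cO(N)$ set system is regular, hence non-degenerate, so $\core(\cF)$ is a pointed cone; thus for a game $v$ with nonempty core, $\core(v)$ decomposes as $\core(\cF)+\conv(\ext(\core(v)))$, and the ``convex part'' referred to in the statement is well defined. Theorem~\ref{th:degial} already gives that this convex part is contained in $\cW(v)$, so one inclusion in the ``only if'' direction needs no work; the content is: (a) convexity $\Rightarrow$ every marginal vector $\phi^C$ lies in $\core(v)$ (so $\cW(v)\subseteq\core(v)$, and combined with $\core(v)\subseteq \cW(v)+\core(\cF)$ one extracts equality of convex parts); (b) equality of the convex part with $\cW(v)$ $\Rightarrow$ convexity.

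For direction (a), first I would fix a maximal chain $C=\{\emptyset=S_0\prec S_1\prec\cdots\prec S_n=N\}$ in $\cO(N)$, corresponding to a linear extension $\sigma$ of $(N,\le)$, and set $\phi=\phi^C$ with $\phi_{\sigma(i)}=v(S_i)-v(S_{i-1})$. Efficiency $\phi(N)=v(N)$ is the telescoping sum. To check $\phi(S)\ge v(S)$ for an arbitrary $S\in\cF$, I would use the key feature of a distributive lattice: writing $S_i\wedge S = S_i\cap S$, the sets $S_i\cap S$ form a chain from $\emptyset$ to $S$ inside $\cF$, and consecutive terms differ by at most the single element $\sigma(i)$ (they differ by $\{\sigma(i)\}$ exactly when $\sigma(i)\in S$, else they are equal). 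Then
\[
\phi(S)=\sum_{i:\,\sigma(i)\in S}\big(v(S_i)-v(S_{i-1})\big)
=\sum_{i:\,\sigma(i)\in S}\big(v(S_i\wedge S\;\vee\;S_{i-1})-v(S_{i-1})\big)
\ge \sum_{i:\,\sigma(i)\in S}\big(v(S_i\wedge S)-v(S_{i-1}\wedge S)\big)=v(S),
\]
where the inequality is convexity applied to the pair $S_{i-1}$ and $S_i\wedge S$ (whose join is $S_i\vee S_{i-1}=S_i$ when $\sigma(i)\in S$, and whose meet is $S_{i-1}\wedge S$), and the last equality telescopes along the chain $(S_i\wedge S)_i$. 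This is the place where distributivity of $\cF=\cO(N)$ is essential — it guarantees $S_i\wedge S\vee S_{i-1}=S_i$ and keeps all intermediate sets feasible. Having all $\phi^C\in\core(v)$, convexity of $\core(v)$ gives $\cW(v)\subseteq\core(v)$; intersecting with Theorem~\ref{th:degial}'s inclusion $\core(v)\cap(\text{convex part})\subseteq\cW(v)$ and using the Minkowski decomposition yields that the convex part of $\core(v)$ equals $\cW(v)$.

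For direction (b), assume the convex part of $\core(v)$ equals $\cW(v)$; in particular every marginal vector $\phi^C$ belongs to $\core(v)$. I would then recover convexity by the standard pointwise argument: given $S,T\in\cF$, choose a maximal chain $C$ of $\cO(N)$ that passes through $S\wedge T$, then through $S$ (or through $T$), then through $S\vee T$ — such a chain exists because in a distributive lattice the interval structure allows refining $S\wedge T\le S\le S\vee T$ to a maximal chain, and along it one reads off, via the telescoping identity $\phi^C(A)-\phi^C(B)=v(A)-v(B)$ valid whenever $B\prec\cdots\prec A$ lies on $C$, the relations $v(S)\le \phi^C(S)$-type identities; more directly, from $\phi^C(S\vee T)=v(S\vee T)$, $\phi^C(S\wedge T)=v(S\wedge T)$ (chain passes through both) and $\phi^C(S)\ge v(S)$, $\phi^C(T)\ge v(T)$ (membership in the core) together with additivity $\phi^C(S\vee T)+\phi^C(S\wedge T)=\phi^C(S)+\phi^C(T)$ (true because $\1_{S\vee T}+\1_{S\wedge T}=\1_S+\1_T$ in a distributive lattice), one gets $v(S\vee T)+v(S\wedge T)=\phi^C(S)+\phi^C(T)\ge v(S)+v(T)$.

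The main obstacle I anticipate is bookkeeping in direction (a): verifying carefully that for each index $i$ with $\sigma(i)\in S$ one genuinely has $S_i\vee S_{i-1}=S_i$ and $S_i\wedge S_{i-1}=S_{i-1}$ — i.e.\ that the pair $(S_{i-1},\,S_i\wedge S)$ to which convexity is applied has the claimed join and meet — and that the chains $(S_i\cap S)_i$ stay inside $\cF$; all of this is exactly where the hypothesis ``$\cF$ is of type $\cO(N)$'' (distributive, so closed under $\vee,\wedge$ which here are $\cup,\cap$) is used, and it is the only nontrivial point. The rest is the classical Shapley–Ichiishi scheme transported verbatim, using Theorem~\ref{th:degial} to handle the unbounded (recession-cone) part.
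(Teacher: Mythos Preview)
Your approach is essentially correct and follows the classical Shapley--Ichiishi scheme, which is what the paper implicitly has in mind (it gives no detailed proof, deferring in the Remark after the theorem to \citet{futo83}, \citet{degi95} and \citet{grxi07}). Both directions are sound: the telescoping inequality in (a) via convexity applied to the pair $(S_{i-1},\,S_i\cap S)$ is the standard device, and the Ichiishi-style converse in (b) via a maximal chain through $S\cap T$ and $S\cup T$ works because $\cO(N)$ is regular and closed under $\cap,\cup$.

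There is, however, one genuine gap in (a). From $\cW(v)\subseteq\core(v)$ and Theorem~\ref{th:degial}'s $\core^F(v)\subseteq\cW(v)$ you conclude $\core^F(v)=\cW(v)$ ``using the Minkowski decomposition,'' but this inference does not follow on its own: one can have a polytope $Q$ with $\core^F(v)\subsetneq Q\subseteq\core(v)$ (e.g., $Q=\conv\{p,p+r\}$ when $\core^F(v)=\{p\}$ and $r$ is a nonzero extreme ray of $\core(\cF)$). What is missing is the observation that each $\phi^C$ is actually a \emph{vertex} of $\core(v)$: it lies in $\core(v)$ and satisfies the $n$ tight constraints $\phi^C(S_i)=v(S_i)$, $i=1,\ldots,n$, whose normal vectors $\1_{S_1},\ldots,\1_{S_n}$ are linearly independent (since $|S_i|=i$). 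Hence $\phi^C\in\ext(\core(v))\subseteq\core^F(v)$, giving $\cW(v)\subseteq\core^F(v)$ and therefore equality. This is a small but necessary step to close the ``only if'' direction.
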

\begin{remark}
The two last theorems are shown by \citet{grxi07}, but they can
deduced from \cite{degi95}, where they are stated for acyclic permission
structures. Indeed, from \citet{albibrji04}, we know that these systems are
equivalent to distributive lattices of the type $\cO(N)$ (see
Section~\ref{sec:cupcap}).  The ``only if'' part of the latter theorem was
already shown by \citet{futo83}.
\end{remark}

\medskip

Lastly, we address a slightly more general case, where closure under union is
replaced by weak union-closure. The following development is due to \citet{fai89}.
$A,B\subseteq N$ is a \emph{crossing pair} if $A,B$ intersect, $A\cup B\neq N$
and $A\setminus B,B\setminus A$ are nonempty. Then $\cF$ is a \emph{crossing
  family} if $A\cup B,A\cap B\in\cF$ whenever $A,B$ is a crossing pair. $v$ on a
crossing family $\cF$ is
\emph{convex} if for every crossing pair $v(A\cup B) + v(A\cap B)\geq v(A) +
v(B)$.
\begin{theorem}
Suppose $\cF$ is weakly union-closed and closed under intersection. Then
$\tilde{\cF}$ is closed under union and intersection, and $v$ convex on $\cF$
implies $\tilde{v}$ convex on $\tilde{\cF}$. 
\end{theorem}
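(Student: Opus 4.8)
\textbf{Structure of $\widetilde{\cF}$.}
For the intersection, if $S=F_1\sqcup\cdots\sqcup F_p$ and $T=G_1\sqcup\cdots\sqcup G_q$ with all $F_k,G_\ell\in\cF$, then $S\cap T=\bigcup_{k,\ell}(F_k\cap G_\ell)$; each $F_k\cap G_\ell$ lies in $\cF$ by closure under intersection, and two of them meet only if $(k,\ell)=(k',\ell')$, so $S\cap T\in\widetilde{\cF}$. For the union, gather the $F_k$'s and $G_\ell$'s into one family $\mathcal H$ and draw the graph on $\mathcal H$ joining two sets iff they meet. In a connected component list the sets as $H_1,\ldots,H_m$ so that $H_t$ meets $H_1\cup\cdots\cup H_{t-1}$ for $t\ge 2$; weak union-closure gives $H_1\cup\cdots\cup H_t\in\cF$ by induction on $t$, so the union of each component lies in $\cF$. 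Sets in distinct components are disjoint, so $S\cup T$ is the disjoint union of the component-unions, whence $S\cup T\in\widetilde{\cF}$. Thus $\widetilde{\cF}$ is a set system closed under $\cup$ and $\cap$; in it $S\vee T=S\cup T$, $S\wedge T=S\cap T$ (and by Proposition~\ref{prop:3.3} it is distributive, isomorphic to $\cO(P)$ for a partition $P$ of $N$), so convexity of $\widetilde v$ means $\widetilde v(S\cup T)+\widetilde v(S\cap T)\ge\widetilde v(S)+\widetilde v(T)$ for all $S,T\in\widetilde{\cF}$.

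\textbf{Two preliminary reductions for convexity.}
First, $\cF$ itself is a lattice: $S\wedge T=S\cap T$, while $S\vee T$ is the smallest member of $\cF$ containing $S\cup T$ (it exists since $N\in\cF$ and $\cF$ is intersection-closed); so ``$v$ convex'' is supermodularity on $\cF$, and whenever $P,Q\in\cF$ \emph{meet}, weak union-closure forces $P\vee Q=P\cup Q$, hence $v(P\cup Q)+v(P\cap Q)\ge v(P)+v(Q)$. Iterating along a connected listing, a connected family in $\cF$ has its union in $\cF$ with value bounded below by the telescoped inequalities. Second, for $S\in\widetilde{\cF}$ the maximal members $M_1,\ldots,M_r$ of $\cF(S)$ are pairwise disjoint (Section~\ref{sec:weucl}) and cover $S$, every $\cF$-partition of $S$ splits into $\cF$-partitions of the $M_i$, and consequently $\widetilde v(S)=\sum_i\widetilde v(M_i)$; this reduces evaluating $\widetilde v$ on $\widetilde{\cF}$ to evaluating it on members of $\cF$.

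\textbf{Reduction to a covering ``diamond''.}
I would prove $\widetilde v(S\cup T)+\widetilde v(S\cap T)\ge\widetilde v(S)+\widetilde v(T)$ by induction on $|S\triangle T|$, the cases $S\subseteq T$ and $T\subseteq S$ being trivial. Otherwise, viewing $\widetilde{\cF}$ as the downsets of a poset on the blocks of $P$, pick (swapping $S$ and $T$ if necessary) a block $B$ that is maximal in $S\cup T$ and lies in the larger of $S\setminus T$, $T\setminus S$; one checks $S\setminus B$ and $(S\cup T)\setminus B$ lie in $\widetilde{\cF}$, that the induction hypothesis on $(S\setminus B,\,T)$ yields $\widetilde v((S\cup T)\setminus B)+\widetilde v(S\cap T)\ge\widetilde v(S\setminus B)+\widetilde v(T)$, and that what remains, $\widetilde v(S\cup T)+\widetilde v(S\setminus B)\ge\widetilde v(S)+\widetilde v((S\cup T)\setminus B)$, is again the desired inequality with strictly smaller symmetric difference. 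Iterating leaves only the case where $S\setminus T$ and $T\setminus S$ are single blocks $B$, $B'$; with $X:=S\cap T$ this is the covering-diamond inequality $\widetilde v(X\cup B\cup B')+\widetilde v(X)\ge\widetilde v(X\cup B)+\widetilde v(X\cup B')$. For it, take optimal $\cF$-partitions $\pi$ of $X\cup B$ and $\rho$ of $X\cup B'$; blocks of $\pi$ and $\rho$ that meet do so inside $X$, and since a member of $\cF$ through a point $i$ contains $D_i$ — which by Proposition~\ref{prop:di} depends only on the block of $i$ — only one maximal member of $\cF(X\cup B)$ reaches into $B$, and likewise for $B'$, so the bipartite intersection graph of $\pi\sqcup\rho$ is tightly constrained. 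From $\pi,\rho$ one builds an $\cF$-partition of $X\cup B\cup B'$ by merging the connected components of that graph (first reduction above) and an $\cF$-partition of $X$ from the pairwise intersections $P\cap Q$, and it remains to show that the total of $v$ over these two partitions dominates $\sum_\pi v+\sum_\rho v$.

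\textbf{The main obstacle.}
That final domination is the crux, and its naive form is \emph{false}: for a connected bipartite family $\{P_i\}\cup\{Q_j\}$ in $\cF$ with union $W\in\cF$, one does \emph{not} always have $v(W)+\sum_{i,j}v(P_i\cap Q_j)\ge\sum_i v(P_i)+\sum_j v(Q_j)$ — already a four-cycle lets a supermodular $v$ lose precisely the strict superadditivity it carries on the disjoint atoms $P_i\cap Q_j$. Hence the crude canonical partitions of $X\cup B\cup B'$ and of $X$ are not tight enough, and one must exploit the restricted geometry of the diamond (the two partitions can differ substantively only along the single blocks $B$ and $B'$, so the intersection graph is essentially two ``stars'' glued over pieces lying in $X$) together with the identity $\widetilde v(Y)=\sum_i\widetilde v(M_i)$, which lets one pass from $v$ to $\widetilde v$ on the overlapping pieces and thereby recover the slack the crude estimate discards. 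Carrying out this refined accounting is the real content of the theorem; everything preceding it is structural.
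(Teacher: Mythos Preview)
The paper is a survey and does not supply a proof of this theorem; it is quoted from \citet{fai89}. So there is no ``paper's own proof'' to compare against, and your proposal has to be judged on its own.

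Your argument for the closure properties of $\widetilde{\cF}$ is clean and correct. The convexity part, however, is not a proof: by your own account it stops at the ``main obstacle''. You reduce the supermodular inequality on $\widetilde{\cF}$ to the covering-diamond case $\widetilde v(X\cup B\cup B')+\widetilde v(X)\ge\widetilde v(X\cup B)+\widetilde v(X\cup B')$, then sketch a plan (merge connected components of the bipartite intersection graph of two optimal $\cF$-partitions, use the pairwise intersections for $X$), observe that the naive bookkeeping fails on a four-cycle, and end with the sentence ``Carrying out this refined accounting is the real content of the theorem''. That is precisely the gap: nothing in the proposal establishes the diamond inequality, and you have correctly diagnosed why the obvious attempt does not close it. As written this is a programme, not a proof.

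Two smaller points you should also watch. First, in the paper's context ``$v$ convex on $\cF$'' sits immediately after the definition of convexity on a crossing family, i.e.\ the inequality is required only for \emph{crossing} pairs $A,B$ (intersecting, incomparable, $A\cup B\neq N$). Your preliminary reduction uses $v(P\cup Q)+v(P\cap Q)\ge v(P)+v(Q)$ for arbitrary meeting $P,Q\in\cF$, including the case $P\cup Q=N$; make sure the hypothesis you invoke is the one actually assumed. Second, your induction step replaces $(S,T)$ by $(S,(S\cup T)\setminus B)$, whose symmetric difference is $B\cup(T\setminus S)$; this is strictly smaller than $S\triangle T$ only when $S\setminus T$ contains more than one block, so the reduction genuinely bottoms out at the diamond rather than at a trivial case --- which is exactly what you claim, but it means the unfinished diamond case is not a corner case but the heart of the matter.
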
  
\begin{theorem}
Suppose $\cF$ is weakly union-closed and closed under intersection, and $v$ on
$\cF$ is convex. Then $v$ is balanced if and only if for all partitions (with nonempty
blocks, as usual) $\{A_1,\ldots,A_k\}$
of $N$,
\[
v(A_1) + \ldots v(A_k) \leq v(N).
\]
Now, if the $A_i$'s are only pairwise disjoint, this characterizes complete
balancedness (see Section~\ref{sec:poco}). 
\end{theorem}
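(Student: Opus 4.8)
The plan is to prove the two implications separately, putting all the work into the ``if'' direction, which I would route through the nonemptiness of the core. The ``only if'' direction costs nothing: if $v$ is balanced, then every partition $\{A_1,\ldots,A_k\}$ of $N$ with blocks in $\cF$ is in particular a balanced collection (balancing vector identically $1$), so $\sum_i v(A_i)\le v(N)$; neither convexity nor the structure of $\cF$ is used here.

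For the converse, assume the partition inequality. By the Bondareva-type criterion valid for an arbitrary set system (Section~\ref{sec:gene}) it suffices to show $\core(v)\neq\emptyset$. Since $N\in\cF$, the trivial partition $\{N\}$ gives $\tilde{v}(N)\ge v(N)$, while the partition inequality gives $\tilde{v}(N)\le v(N)$; hence $\tilde{v}(N)=v(N)$, and therefore $\core(v)=\core(\tilde{v})$ by the remark in Section~\ref{sec:gene}. Now, by the previous theorem $\widetilde{\cF}$ is closed under $\cup,\cap$ and $\tilde{v}$ is convex on $\widetilde{\cF}$; moreover $\cF\subseteq\widetilde{\cF}$, so $N\in\widetilde{\cF}$, and by Proposition~\ref{prop:3.3}(ii) there is a partition $\pi=\{P_1,\ldots,P_m\}$ of $N$ with $\widetilde{\cF}\cong\cO(\pi)$, the isomorphism being the Birkhoff map $D\mapsto\bigcup_{P\in D}P$ (with $D$ a downset of $\pi$).

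It then remains to exhibit a point of $\core(\tilde{v})$. I would transport $\tilde{v}$ to the game $w$ on $\cO(\pi)$ given by $w(D):=\tilde{v}\bigl(\bigcup_{P\in D}P\bigr)$; since the Birkhoff map turns $\cup,\cap$ into $\cup,\cap$, $w$ is convex on $\cO(\pi)$, which is a set system of type $\cO(N')$ with $N':=\pi$, and $w(N')=\tilde{v}(N)=v(N)$. By Theorem~\ref{th:futo}, the convexity of $w$ forces the convex part of $\core(w)$ to equal the Weber set of $w$, which is nonempty since it contains the marginal vector of any maximal chain of $\cO(N')$; hence $\core(w)\neq\emptyset$, and I fix $y\in\core(w)\subseteq\bR^m$. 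Define $x\in\bR^n$ by $x_i:=y_P/|P|$ for $i\in P\in\pi$. For any $S\in\widetilde{\cF}$, writing $S=\bigcup_{P\in D}P$ for the associated downset $D$, one gets $x(S)=\sum_{P\in D}y_P=y(D)\ge w(D)=\tilde{v}(S)$, while $x(N)=\sum_{P\in\pi}y_P=w(\pi)=v(N)$. Hence $x\in\core(\tilde{v})=\core(v)$, so $v$ is balanced.

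The delicate point is really the detour through $\widetilde{\cF}$. A direct attempt would ``uncross'' a balanced collection inside $\cF$: replace an intersecting, non-nested pair $A,B\in\cF$ by $A\cap B$ and $A\cup B$ (both in $\cF$, by weak union-closure and closure under intersection), which leaves $\sum_S\lambda_S\1_S$ unchanged and, by convexity, does not decrease $\sum_S\lambda_S v(S)$; one iterates until no uncrossing is possible. This stalls exactly at pairs with $A\cup B=N$, which are not crossing pairs, so convexity says nothing — and $\widetilde{\cF}$ is precisely the repair, convexity on it holding for every pair. Finally, the closing assertion for merely pairwise disjoint (not necessarily covering) $A_i$ follows the same route, now using the inequality in the stronger form $\tilde{v}(S)\le v(N)$ for all $S\in\widetilde{\cF}$, so that the lifted marginal vector can be chosen nonnegative; this gives the announced characterization of complete balancedness, whose precise formulation and proof belong to Section~\ref{sec:poco}.
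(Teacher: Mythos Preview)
The paper is a survey and does not supply its own proof of this theorem; it is attributed to \citet{fai89}. So there is no in-paper argument to compare against, and I can only assess correctness.

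Your argument is sound. The ``only if'' direction is immediate, and in the ``if'' direction the chain
\[
\text{partition inequality}\ \Rightarrow\ \tilde v(N)=v(N)\ \Rightarrow\ \core(v)=\core(\tilde v),
\]
together with the preceding theorem giving convexity of $\tilde v$ on the $\cup,\cap$-closed family $\widetilde\cF$, is exactly the right reduction. Two remarks on streamlining. First, your appeal to Theorem~\ref{th:futo} to obtain $\core(w)\neq\emptyset$ is valid (since the Weber set is nonempty, equality with the convex part forces the core to be nonempty), but the same conclusion follows more directly from the Fujishige--Tomizawa/Shapley fact that every marginal vector of a convex game on $\cO(N')$ lies in the core. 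Second, you can avoid the transport through the Birkhoff isomorphism entirely: $\widetilde\cF$ is a $k$-regular lattice closed under $\cup,\cap$, so Theorem~13 in Section~\ref{sec:kreg} gives $\cW(\tilde v)\subseteq\core(\tilde v)$ directly, and any marginal vector (in the sense of that section) already furnishes a core element. Your closing sketch for the complete-balancedness variant is on the right track but leaves the nonnegativity of the exhibited core point unjustified; since you defer the details to Section~\ref{sec:poco}, this is acceptable as a pointer.
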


\subsection{Distributive lattices generated by a poset on $N$}
Consider $\cF=\cO(N)$ for some partial order $\leq$ on $N$. Then $\cF$ is a
regular distributive set lattice, closed under union and intersection, and
previous results give us the properties of the core. Considering a balanced game
$v$, we have seen that:
\begin{enumerate}
\item $\core(v)$ is a pointed polyhedron, since $\cF$ is non-degenerate (see
  Section~\ref{sec:gene});
\item $\core(v)$ is unbounded, unless $(N,\leq)$ is an antichain.
\item $\core^F(v)\subseteq \cW(v)$, where $\core^F(v)$ is the convex part of
$\core(v)$ (see Theorem~\ref{th:degial});
\item $v$ is convex if and only if $\core^F(v)=\cW(v)$ (see Theorem~\ref{th:futo}).
\end{enumerate}
The following result \citep{grsu12} shows that all games can be balanced, provided $(N,\leq)$
is connected (i.e., all players are within a single hierarchy).
\begin{proposition}
If $(N,\leq)$ is connected, then $\core(v)\neq\emptyset$ for any game $v$ on
$\cO(N)$. Conversely, if $(N,\leq)$ is not connected, there exists some game $v$
on $\cO(N)$ such that $\core(v)=\emptyset$. 
\end{proposition}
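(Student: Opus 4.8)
The plan is to treat the two directions separately, exploiting the characterization $\core(v)\neq\emptyset$ iff $v$ is balanced (valid for arbitrary set systems, as recalled in Section~\ref{sec:gene}), together with the structural description of $\core(\cO(N))$ from the theorem attributed to \citet{degi95}, namely $\core(\cO(N))=\cone(\1_j-\1_i\mid i\in N,\ j\in D_i)$, where here $D_i=\{j\in N\mid j\leq i\}=\downarrow\!i$. Recall that $\core(v)\neq\emptyset$ precisely when the polyhedron is nonempty, and since $\cO(N)$ is non-degenerate (it is regular), $\core(\cO(N))$ is always a pointed cone; so the only obstruction to nonemptiness is genuine infeasibility of the system $x(S)\geq v(S)$ for $S\in\cO(N)$, $x(N)=v(N)$.

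For the first (positive) direction, assume $(N,\leq)$ is connected. The cleanest route is to exhibit an explicit point of $\core(v)$, or to verify balancedness directly. I would try the constructive approach: build a payoff vector iteratively along a linear extension of $(N,\leq)$. Order $N$ as $i_1,\ldots,i_n$ so that $i_k\leq i_\ell$ implies $k\leq \ell$; then each initial segment $S_k=\{i_1,\ldots,i_k\}$ is a downset, so these form a maximal chain in $\cO(N)$. Define $x_{i_k}$ so that $x(S_k)=\max\{v(S)\mid S\in\cO(N),\ S\subseteq S_k\}$ (a kind of "greedy" assignment against the monotone cover $\hat v(S_k):=\max_{T\subseteq S_k,T\in\cO(N)} v(T)$). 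One checks $x(S_k)\geq v(S_k)$ for all $k$; the real work is to show $x(S)\geq v(S)$ for every downset $S$, not just those on the chosen chain. Here is where connectedness must enter: if $(N,\leq)$ were disconnected, the components could be assigned arbitrary "offsets" and one component's deficit cannot be compensated by another's surplus because no feasible coalition straddles two components — this is exactly what kills nonemptiness in general. When $(N,\leq)$ is connected, every nontrivial downset $S$ has a nonempty "frontier" connecting it to its complement, and one can push slack around to absorb the constraint $x(N)=v(N)$ while keeping all downset inequalities satisfied. I expect the main obstacle to be precisely this step: verifying that the greedily-built $x$ respects all downset constraints simultaneously, which likely requires an exchange/submodularity-style argument on $\hat v$ or an induction on $|N|$ peeling off a maximal element.

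For the converse, suppose $(N,\leq)$ is not connected, say $N=N_1\sqcup N_2$ with no comparabilities between $N_1$ and $N_2$. Then every downset of $N$ splits as $S=(S\cap N_1)\cup(S\cap N_2)$ with $S\cap N_j$ a downset of $N_j$; in particular $N_1,N_2\in\cO(N)$. Choose a game with $v(N_1)+v(N_2)>v(N)$ — for instance $v(N_1)=v(N_2)=1$, $v(N)=1$, and $v(S)$ defined suitably (say $0$, or the naturally induced values) on smaller downsets. Then any $x$ with $x(N)=v(N)=1$ forces $x(N_1)+x(N_2)=1<2\leq v(N_1)+v(N_2)$, so at least one of $x(N_1)\geq v(N_1)$, $x(N_2)\geq v(N_2)$ fails; hence $\core(v)=\emptyset$. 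Equivalently, $\{N_1,N_2\}$ is a partition of $N$ violating the balancedness inequality $v(N_1)+v(N_2)\leq v(N)$. This direction is short and the only care needed is to make sure the chosen $v$ is a legitimate game on $\cO(N)$ (i.e. $v(\emptyset)=0$), which is immediate.

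Overall, the converse is routine; the crux is the forward direction, and within it the passage from "feasible on one maximal chain" to "feasible on the whole lattice $\cO(N)$", for which connectedness is the indispensable hypothesis. A fallback, if the explicit construction proves awkward, is to verify Faigle's balancedness condition directly: given $\frac1m\sum_i \1_{A_i}=\1_N$ with $A_i\in\cO(N)$, use connectedness to argue $\sum_i v(A_i)\leq m\,v(N)$ by a telescoping/rearrangement over a linear extension — but I would attempt the constructive proof first as it also yields a concrete core element.
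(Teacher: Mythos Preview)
The paper does not actually prove this proposition; it merely states it and cites \citet{grsu12}. So there is no in-paper proof to compare against, and your proposal has to stand on its own.

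Your converse is fine: if $N=N_1\sqcup N_2$ with no comparabilities across, then $N_1,N_2\in\cO(N)$ form a partition, and any $v$ with $v(N_1)+v(N_2)>v(N)$ violates balancedness.

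Your forward direction, however, has a genuine gap. The greedy construction you propose sets $x(S_k)=\hat v(S_k):=\max\{v(T):T\in\cO(N),\,T\subseteq S_k\}$, which forces $x(N)=\hat v(N)\geq v(N)$, typically with strict inequality; so efficiency fails. You acknowledge this (``push slack around to absorb the constraint $x(N)=v(N)$'') but never say how, and this is precisely where connectedness must do nontrivial work. Moreover, even along the chosen chain you only guarantee $x(S_k)\geq v(S_k)$; the extension to all downsets is asserted, not argued.

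The clean route is in fact your fallback, and it is much shorter than you suggest. The key observation is that when $(N,\leq)$ is connected, the \emph{only} balanced collection in $\cO(N)\setminus\{\emptyset\}$ is $\{N\}$ itself. Indeed, suppose $\sum_{S}\lambda_S\1_S=\1_N$ with $\lambda_S\geq 0$ and $S\in\cO(N)$. For any proper nonempty downset $S$, connectedness of the Hasse diagram gives a covering pair $j\prec i$ with $j\in S$, $i\notin S$. Every downset containing $i$ also contains $j$, so
\[
0=\Big(\sum_{T\ni j}\lambda_T\Big)-\Big(\sum_{T\ni i}\lambda_T\Big)=\sum_{\substack{T\ni j\\ T\not\ni i}}\lambda_T\;\geq\;\lambda_S,
\]
whence $\lambda_S=0$. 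Thus only $N$ carries weight, and Bondareva's condition becomes vacuous: every game on $\cO(N)$ is balanced. This both proves the forward direction and explains \emph{why} connectedness is exactly the right hypothesis. I would replace the greedy construction by this two-line argument.
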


\medskip

So far we have mainly studied extremal points and rays.  Although faces of the
core have not drawn the attention of game theorists, they have been deeply
studied in combinatorial optimization when $\cF=\cO(N)$ (see \citet[Ch. 2, \S
  3.3 (d)]{fuj05b} for a detailed account). We restrict here to basic facts.

Assuming $v$ is a balanced game, take any $x\in\core(v)$ and define $\cF(x)
=\{S\in\cF\mid x(S) = v(S)\}$. Then $\cF(x)$ is a sublattice of $\cF$ if $v$ is
convex. Indeed, first remark that $\emptyset,N\in\cF(x)$. Now, take $S,T\in
\cF(x)$ and let us prove that $S\cup T$ and $S\cap T$ belong to
$\cF(x)$. Assuming it is false, we have
\begin{align*}
x(S) + x(T)  & = x(S\cup T) + x(S\cap T)   > \\ & v(S\cup T) + v(S\cap T) \geq v(S) +
v(T)  = x(S) + x(T),
\end{align*}
a contradiction.

Assuming $v$ is convex, define for any subsystem $\cD\subseteq \cF$
\begin{align*}
F(\cD) & := \{x\mid x(S) = v(S), \forall S\in \cD, \quad x(S)\geq v(S) \text{
  otherwise}\}\\
F^\circ(\cD) & := \{x\mid x(S) = v(S), \forall S\in \cD, \quad x(S)> v(S) \text{
  otherwise}\}.
\end{align*}
Note that $F(\cD)$ is either empty or a face of the core provided $\cD\ni N$,
and that $F^\circ(\cD)$ is an open face. Define
\[
\mathbf{D} := \{\cD\in\cF\mid \cD\text{ is a sublattice of $\cF$ containing }\emptyset,N,F^\circ(\cD)\neq\emptyset\}.
\]
Observe that any such $\cD$ is necessarily distributive, and therefore is
generated by a poset.
It is easy to see that $\mathbf{D}=\{\cF(x)\mid x\in\core(v)\}$. It follows that
any face of the core is defined by a distributive sublattice of $\cF$. Moreover, the
dimension of a face $F(\cD)$ is $|N|-|h(\cD)|$, where $h(\cD)$ is the height of
the lattice $\cD$.

\subsection{Convex Geometries}
The core of games on convex geometries has been studied by \citet{bileji99}.
\begin{theorem}
Let $v$ be a game on a convex geometry $\cF$.
\begin{enumerate}
\item $\core(v)$ is either empty or a pointed polyhedron
  (i.e., having vertices).
\item Assume that $\core(v)\neq\emptyset$ and that $v$ is nonnegative. Then
  $\core(v)$ is a polytope if and only if $\cF$ is atomistic if and only if $\core(v)=\core^+(v)$.
\end{enumerate} 
\end{theorem}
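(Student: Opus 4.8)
The plan is to reduce everything to the general machinery of Section~\ref{sec:gene}. For part~(i), the key observation is that a convex geometry is a regular set system (this is stated right after the definition in Section~\ref{sec:cogoanti}). By the second sufficient condition for non-degeneracy in Section~\ref{sec:gene}, regularity implies $\cF$ is non-degenerate, and then Theorem~\ref{th:nond} tells us $\core(\cF)$ is a pointed cone. Invoking the first of the three bullet points following the definition of $\core(\cF)$ (equivalently, property~2 of pointed polyhedra in Section~\ref{sec:poly}), $\core(v)$ either is empty or has a vertex, i.e.\ is a pointed polyhedron. So part~(i) is essentially immediate once one recalls that convex geometries are regular.

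For part~(ii), assume $\core(v)\neq\emptyset$ and $v$ nonnegative. I would prove the chain of equivalences in a cycle. First, $\core(v)$ polytope $\Longleftrightarrow$ $\cF$ atomistic: by bullet~3 after the definition of $\core(\cF)$, $\core(v)$ is a polytope iff $\core(\cF)=\{0\}$. If $\cF$ is atomistic, each $\1_i\in\cF$, so $x(\{i\})\ge 0$ for all $i$ together with $x(N)=0$ forces $x=0$; hence $\core(\cF)=\{0\}$. Conversely, if some singleton $\{i\}\notin\cF$, I would use the one-point extension property (going up from $\emptyset$) and accessibility to exhibit an explicit nonzero recession direction: one wants a vector $x$ with $x(N)=0$ and $x(S)\ge 0$ for all $S\in\cF$ but $x\ne 0$. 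A natural candidate is $x=\1_N-n\,\1_i$ scaled suitably, or more carefully a direction $\1_j-\1_i$ where $i$ is a player never appearing alone; the point is that since $\{i\}\notin\cF$, every $S\in\cF$ containing $i$ has $|S|\ge 2$, giving slack, so one can push mass negatively onto $i$. I expect this construction to be the main obstacle: one must check that the chosen direction genuinely satisfies $x(S)\ge 0$ for every feasible $S$, using the combinatorial structure (closure under intersection plus one-point extension) of a convex geometry, not merely regularity.

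Next, $\cF$ atomistic $\Longleftrightarrow$ $\core(v)=\core^+(v)$: the inclusion $\core^+(v)\subseteq\core(v)$ is trivial. If $\cF$ is atomistic, then for $x\in\core(v)$ we have $x_i=x(\{i\})\ge v(\{i\})\ge 0$ by nonnegativity of $v$, so $x\in\core^+(v)$. Conversely, if $\{i\}\notin\cF$ for some $i$, I would take any $x\in\core(v)$ (nonempty by hypothesis) and show one can decrease $x_i$ below $0$ while staying in $\core(v)$ — again because no feasible coalition is the singleton $\{i\}$, only coalitions of size $\ge 2$ constrain $x_i$, and one compensates on a partner $j$; this produces a point of $\core(v)\setminus\core^+(v)$. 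Finally, to close the cycle, $\core(v)=\core^+(v)$ $\Longrightarrow$ $\core(v)$ polytope: $\core^+(v)$ lives in $\bR_+^n$ and satisfies $x(N)=v(N)$, so it is bounded, hence $\core(v)$ is a bounded polyhedron, i.e.\ a polytope. The only genuinely delicate steps are the two converse directions that build an explicit unbounded/negative direction from the failure of atomisticity; everything else is bookkeeping with the results already established.
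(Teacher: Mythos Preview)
Your plan is correct and matches the paper's treatment: for~(i) the paper simply notes that a convex geometry is $n$-regular, hence non-degenerate, hence Theorem~\ref{th:nond} gives a pointed recession cone; for~(ii) the paper just invokes Faigle's Theorem~\ref{th:atom}, which applies because convex geometries are closed under intersection --- and your direct argument is essentially the content of that theorem.

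Two small cleanups. First, convex geometries do \emph{not} enjoy accessibility (that is an antimatroid property); the relevant structural tools are one-point extension and, crucially, closure under intersection, which you do invoke later. Second, your first candidate ray $\1_N-n\1_i$ fails: any $S\in\cF$ with $i\in S$ and $|S|<n$ (such $S$ exist by regularity) gives $x(S)=|S|-n<0$. Your second candidate $\1_j-\1_i$ is the right one, and the clean way to choose $j$ is via $D_i:=\bigcap\{S\in\cF: i\in S\}\in\cF$ (this is where $\cap$-closure enters); pick any $j\in D_i\setminus\{i\}$, so that every feasible $S\ni i$ automatically contains $j$ and hence $(\1_j-\1_i)(S)=0$. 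The same ray, added to any $x\in\core(v)$, also furnishes the point of $\core(v)\setminus\core^+(v)$ you need for the other converse.
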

\begin{remark}
(i) is clear from Theorem~\ref{th:nond} since a convex geometry is
  non-degenerate (since $n$-regular). (ii) was already remarked by
  \citet{fai89} (see Theorem~\ref{th:atom}).  
\end{remark}
A game $v$ is \emph{quasi-convex} if convexity holds only for pairs $A,B\in\cF$
such that $A\cup B\in\cF$. Marginal vectors are defined as usual, considering
all maximal chains in $\cF$ (all of length $n$). 
\begin{theorem}
A game $v$ on $\cF$ is quasi-convex if and only if all marginal vectors belong to the
core. 
\end{theorem}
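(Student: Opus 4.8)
The plan is to prove the equivalence ``$v$ quasi-convex $\iff$ all marginal vectors $\phi^C$ ($C\in\cC(\cF)$) belong to $\core(v)$'' directly, mimicking the classical argument of Shapley and Ichiishi adapted to convex geometries, where the key structural fact (Section~\ref{sec:cogoanti}) is that $\cF$ is an $n$-regular set system, so every maximal chain $C=\{\emptyset=S_0\prec S_1\prec\cdots\prec S_n=N\}$ has length $n$, $|S_i\setminus S_{i-1}|=1$, and hence defines a marginal vector $\phi^C$ exactly as in the classical case via $\phi^C_{\sigma(i)}=v(S_i)-v(S_{i-1})$, where $S_i=\{\sigma(1),\dots,\sigma(i)\}$. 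Note first that every $\phi^C$ is efficient by telescoping: $\phi^C(N)=\sum_{i=1}^n\bigl(v(S_i)-v(S_{i-1})\bigr)=v(N)$. So the whole question reduces to: when does every $\phi^C$ satisfy $\phi^C(S)\ge v(S)$ for all $S\in\cF$?

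For the ``if'' direction (all $\phi^C$ in the core $\Rightarrow$ quasi-convex), I would take $A,B\in\cF$ with $A\cup B\in\cF$ and must produce $v(A\cup B)+v(A\cap B)\ge v(A)+v(B)$. The standard trick: since $\cF$ is a convex geometry it is closed under intersection, so $A\cap B\in\cF$ too; then build a maximal chain $C$ of $\cF$ that passes through $A\cap B$, then through $A$, then through $A\cup B$ — such a chain exists by repeatedly applying the one-point extension property inside $\cF(A\cap B)$, then from $A\cap B$ to $A$, then from $A$ to $A\cup B$, then up to $N$. Along the segment of $C$ from $A\cap B$ to $A$ the players added are exactly those in $A\setminus B$; along the segment from $A$ to $A\cup B$ the players added form some set $U\subseteq B\setminus A$ with $A\cup U\supseteq$ ... here one has to be slightly careful because the players added from $A$ up to $A\cup B$ need not be precisely $B\setminus A$ if intermediate sets are forced, but since $A\cup B\in\cF$ the chain can be routed so that at $A\cup B$ exactly $A\setminus B\cup(B\setminus A)$ plus $A\cap B$ have been accumulated. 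Then $\phi^C(A)=v(A)$ exactly (it is the value at the point of the chain equal to $A$), and $\phi^C(A\cup B)=v(A\cup B)$; using $\phi^C\in\core(v)$ applied to the feasible set... actually the cleanest route is: $\phi^C(B)\ge v(B)$, and $\phi^C(B)=\phi^C(A\cap B)+\phi^C(B\setminus A)=v(A\cap B)+\bigl(\phi^C(A\cup B)-\phi^C(A)\bigr)=v(A\cap B)+v(A\cup B)-v(A)$, giving $v(A\cap B)+v(A\cup B)-v(A)\ge v(B)$, which is exactly quasi-convexity. The one subtlety to nail down is that a maximal chain of $\cF$ can indeed be chosen through the nested feasible sets $A\cap B\subseteq A\subseteq A\cup B$ with the prescribed increments; this follows from one-point extension applied within each interval $[\emptyset,A\cap B]$, $[A\cap B,A]$, $[A,A\cup B]$, $[A\cup B,N]$ of the convex geometry (each interval of a convex geometry is again a convex geometry, or at least admits a maximal chain).

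For the ``only if'' direction (quasi-convex $\Rightarrow$ every $\phi^C$ in the core), fix a maximal chain $C=\{S_0\prec\cdots\prec S_n\}$ with associated ordering $\sigma$, and fix $T\in\cF$; I must show $\phi^C(T)\ge v(T)$. The plan is induction on the ``position'' of $T$ along the chain — more precisely, let $i$ be the largest index with $S_i\subseteq T$ (so the element $\sigma(i+1)$ added at step $i+1$ is not in $T$) — and compare $T$ with $S_i\cup(T\cap S_{\text{something}})$... The cleanest induction: among all $U\in\cF$ with $U\subseteq T$ pick one maximizing $|U|$ such that $U\in\{S_0,\dots,S_n\}\cap\cF(T)$... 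Actually the standard Ichiishi-style argument: show $\phi^C(T)\ge v(T)$ by induction on $|T\setminus(T\cap S_j)|$ where $S_j$ is the largest chain-set inside $T$, peeling off elements one at a time and applying quasi-convexity to the pair $(S_k, T')$ where $S_k\cup T'=$ something feasible — and here I use that $S_k\cup T = S_{k+1}\cup(\text{smaller feasible set})$, which stays in $\cF$ because $\cF$ is closed under intersection (so $S_{k+1}\cap T\in\cF$) and one-point-extension can be invoked. I expect \textbf{this is the main obstacle}: in the classical case $\cF=2^N$ every intermediate union is automatically feasible, so quasi-convexity (= full convexity) applies freely; on a convex geometry one must verify at each inductive step that the pair of sets to which quasi-convexity is applied actually has a feasible union. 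The key enabling facts are closure under intersection and one-point extension, and the right formulation is: for $S_{i}\subsetneq T$ (chain set strictly inside $T$) with $\sigma(i+1)\notin T$, one has $S_i \subseteq S_i\cup(T\cap S_{i+1}) = S_{i+1}$ ... this needs rethinking, but the engine is: write $T = (T\cap S_i)\cup(T\setminus S_i)$, add elements of $T\setminus S_i$ in the chain order, at each stage the partial union with $S_k$ is $S_{k}\cup(\text{feasible})$, and since $(\text{that feasible set})\cup S_k\in\cF$ whenever it equals some $S_{k'}$ or can be reached — ultimately one shows by downward induction along the chain that $v(T)\le v(T\cup S_i)-\phi^C(S_i\setminus T)+\dots$ collapsing to $\phi^C(T)\ge v(T)$. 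I would state the inductive claim as $v(T\cap S_i)\ge \phi^C(T\cap S_i)$ trivially and $v(T\cup S_i)+v(T\cap S_{i})\le$ ... and drive the induction on $i$ from $n$ down to $0$ using quasi-convexity on pairs whose union is a chain-set (hence feasible). Once the feasibility of each applied pair is secured, the arithmetic is the same telescoping as in Shapley's proof.
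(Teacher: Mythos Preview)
The paper itself gives no proof of this theorem --- it is a survey and merely states the result, attributing it to Bilbao, Lebr\'on and Jim\'enez. So there is nothing to compare against; what matters is whether your argument stands on its own.

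Your ``if'' direction is correct. Since a convex geometry is closed under intersection and is $n$-regular, the nested feasible sets $\emptyset\subseteq A\cap B\subseteq A\subseteq A\cup B\subseteq N$ can indeed be threaded by a maximal chain $C$ (each interval $[S,T]$ in an $n$-regular system has a saturated chain of length $|T|-|S|$), and then $\phi^C(B)=v(A\cap B)+v(A\cup B)-v(A)\ge v(B)$ gives quasi-convexity exactly as you wrote.

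Your ``only if'' direction, however, never crystallizes, and the fragments you write down are partly wrong (you state the inductive claim with the inequality reversed, and you bring in $T\cup S_i$, which need not be feasible). The clean argument is the one you almost reach in your last sentence: induct \emph{upwards} on $i=0,\ldots,n$ with the claim
\[
\phi^C(T\cap S_i)\;\ge\; v(T\cap S_i),
\]
which makes sense because $T\cap S_i\in\cF$ by closure under intersection. The base case $i=0$ is trivial. For the step, if $\sigma(i)\notin T$ nothing changes; if $\sigma(i)\in T$, apply quasi-convexity to the pair $A:=T\cap S_i$ and $B:=S_{i-1}$. Then $A\cup B=S_i$ (feasible, a chain-set --- this is your ``pairs whose union is a chain-set'' observation, now with the correct pair) and $A\cap B=T\cap S_{i-1}$, so quasi-convexity gives
\[
v(S_i)+v(T\cap S_{i-1})\;\ge\; v(T\cap S_i)+v(S_{i-1}),
\]
which combined with the inductive hypothesis and $\phi^C_{\sigma(i)}=v(S_i)-v(S_{i-1})$ yields the step. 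At $i=n$ you obtain $\phi^C(T)\ge v(T)$. This is the missing concrete identification of the pair $(A,B)$; once you have it, no further ``peeling off elements'' or appeals to one-point extension are needed in this direction.
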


\subsection{Partition systems}\label{sec:part}
Let $\cF$ be a partition system, $v$ be a game on $\cF$, and $\overline{v}$ its
extension on $2^N$ (see Section~\ref{sec:ext}). If $N\in\cF$, it is
easy to establish that $\core(v)\subseteq\core(\overline{v})$. 

\citet{kawo82} deal with a weaker definition of
partition systems. A partition system only needs to contain all singletons. Then
a \emph{partitioning game} $v$ is a game on $2^N$ defined from some game
$v'$ on $\cF$ by
\[
v(S) = \max\Big\{\sum_{i\in I}v'(F_i), \quad \{F_i\}_{i\in I} \text{ is a
  $\cF$-partition of $S$}\Big\}.
\]
Then $v$ is superadditive and  $\core(v)=\core(v')$ when $N\in\cF$.

\subsection{$k$-regular set systems}\label{sec:kreg}
The core of games on $k$-regular set systems has been studied by \citet{xigr09}.
We mentioned in Section~\ref{sec:gene} that a $n$-regular set system is
non-degenerate, hence Theorem~\ref{th:nond} applies and the core is a pointed
polyhedron, unbounded in general. However, in many cases, $\cF$ could be
degenerate, and in this case the core has no vertices. This is the case for the
2-regular set system given in Figure~\ref{fig:2reg}.
\begin{figure}[htb]
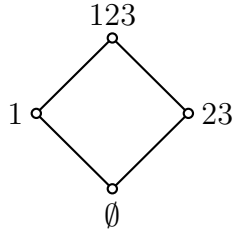

\begin{center}
\psset{unit=0.50cm}
\pspicture(-1,-1)(5,5)
\pspolygon(2,0)(0,2)(2,4)(4,2)
\pscircle[fillstyle=solid](2,0){0.15}
\pscircle[fillstyle=solid](0,2){0.15}
\pscircle[fillstyle=solid](2,4){0.15}
\pscircle[fillstyle=solid](4,2){0.15}
\uput[-90](2,0){$\emptyset$}
\uput[180](0,2){$1$}
\uput[90](2,4){$123$}
\uput[0](4,2){$23$}
\endpspicture
\end{center}
\caption{Example of a degenerate 2-regular set system with 3 players}
\label{fig:2reg}
\end{figure}

Let $\cF$ be a $k$-regular set system, and a
maximal chain $C:=\{\emptyset=S_0,S_1,\ldots,S_k=N\}$. Since $|S_i\setminus
S_{i-1}|>1$ may occur, the classical definition of a marginal worth vector does
not work. Instead, from a given maximal chain, several marginal worth vectors
can be derived. Choose an element $r_i$
in $S_i\setminus S_{i-1}$, $i=1,\ldots,k$. The \emph{marginal worth vector}
associated to $C$ and $r_1,\ldots,r_k$ is defined by
\[
\psi^{C(r_1,\ldots,r_k)} = \sum_{i=1}^k (v(S_i)  - v(S_{i-1}))\1_{r_i}.
\]
We have $\psi^{C(r_1,\ldots,r_k)}(S_i) = v(S_i)$ for all $S_i\in C$.
Denote by $\cM(v,\cF)$ the set of all marginal worth vectors, for all maximal
chains and possible choices of elements. We define the \emph{Weber set} as
\[
\cW(v) := \mathrm{conv}(\cM(v,\cF)).
\]
\begin{theorem}
Let $\cF$ be a $k$-regular lattice closed under union and
intersection. If $v$ is convex,
then $\mathcal{W}(v)\subseteq\core(v)$.
\end{theorem}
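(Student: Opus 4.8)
The plan is to show that every marginal worth vector $\psi^{C(r_1,\ldots,r_k)}$ lies in $\core(v)$; since $\core(v)$ is convex and $\cW(v)$ is by definition the convex hull of all such vectors, this immediately gives $\cW(v)\subseteq\core(v)$. Fix a maximal chain $C=\{\emptyset=S_0,S_1,\ldots,S_k=N\}$ and a choice $r_i\in S_i\setminus S_{i-1}$, and write $\psi:=\psi^{C(r_1,\ldots,r_k)}$. By the identity already noted in the excerpt, $\psi(S_i)=v(S_i)$ for every $S_i\in C$; in particular $\psi(N)=v(N)$, so $\psi$ is efficient. It remains to prove $\psi(T)\ge v(T)$ for every $T\in\cF$.

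The core of the argument is the classical telescoping estimate adapted to the lattice $\cF$. First I would note that $S_0=\emptyset\le T$ and $S_k=N\ge T$, and consider the indices where the chain "crosses" $T$, i.e. look at the sets $S_{i-1}\wedge T$ and $S_i\wedge T$ inside $\cF$ (here weak-union-closure, closure under intersection, and lattice structure are what let us stay inside $\cF$ and speak of $\vee,\wedge$). The key step is to show, using convexity of $v$ applied to the pair $S_i$ and $S_{i-1}\vee T$ (or symmetrically $S_{i-1}$ and $S_i\wedge T$), the "one-step" inequality
\[
v(S_i\wedge T)-v(S_{i-1}\wedge T)\ \le\ \psi(S_i\wedge T)-\psi(S_{i-1}\wedge T),
\]
equivalently that the increment of $v$ along the $T$-trace of the chain is dominated by the corresponding increment of the linear functional $\psi$. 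Concretely: convexity gives $v(S_i\vee(S_{i-1}\wedge T)) + v(S_i\wedge(S_{i-1}\wedge T))\ge v(S_i)+v(S_{i-1}\wedge T)$, and one checks $S_i\vee(S_{i-1}\wedge T)\supseteq S_i$ with $S_i\wedge(S_{i-1}\wedge T)=S_{i-1}\wedge T$ (using distributivity of the set lattice together with $S_{i-1}\subseteq S_i$), so that $v(S_i\wedge T) - v(S_{i-1}\wedge T)\le v(S_i)-v(S_{i-1}) = \psi(S_i)-\psi(S_{i-1})$, and the right-hand side is exactly $\big(v(S_i)-v(S_{i-1})\big)$ times $\1_{r_i}$ evaluated on the appropriate set, which one arranges to equal $\psi(S_i\wedge T)-\psi(S_{i-1}\wedge T)$ since $r_i\in S_i\setminus S_{i-1}$ contributes to $T$ exactly when $r_i\in T$. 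Summing over $i=1,\ldots,k$ telescopes the left side to $v(N\wedge T)-v(\emptyset\wedge T)=v(T)$ and the right side to $\psi(T)$, yielding $v(T)\le\psi(T)$.

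The main obstacle I anticipate is the bookkeeping in that one-step inequality: one must verify carefully that the sets $S_i\vee(S_{i-1}\wedge T)$, $S_i\wedge(S_{i-1}\wedge T)$ really simplify as claimed — this is where the hypotheses "lattice", "closed under union and intersection", and the resulting distributivity (Proposition~\ref{prop:3.3}) are all genuinely used — and that the linear functional $\psi$ really does telescope along the trace $\{S_i\wedge T\}_i$, which requires that consecutive traces either coincide or differ by whether $r_i$ lies in $T$. A clean way to organize this is to observe that $\psi(U)=\sum_{i:\,r_i\in U}\big(v(S_i)-v(S_{i-1})\big)$ for any $U$, so $\psi(S_i\wedge T)-\psi(S_{i-1}\wedge T)$ is $v(S_i)-v(S_{i-1})$ if $r_i\in T$ and $0$ otherwise; correspondingly $v(S_i\wedge T)-v(S_{i-1}\wedge T)$ is $0$ when $S_i\wedge T=S_{i-1}\wedge T$ (which happens precisely when no element of $S_i\setminus S_{i-1}$ lies in $T$) and otherwise is bounded by $v(S_i)-v(S_{i-1})$ via the convexity step above. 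Assembling these cases and summing completes the proof.
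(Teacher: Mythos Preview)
The paper states this theorem without proof (it is attributed to Xie--Grabisch), so there is nothing to compare your approach \emph{to}; but your telescoping strategy along the trace $\{S_i\cap T\}_{i=0}^k$ is exactly the right one. Two points need fixing before it becomes a proof.

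\medskip

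\textbf{The convexity step is applied to the wrong pair.} You write that convexity gives
\[
v\big(S_i\vee(S_{i-1}\wedge T)\big) + v\big(S_i\wedge(S_{i-1}\wedge T)\big)\ \ge\ v(S_i)+v(S_{i-1}\wedge T),
\]
but $S_{i-1}\wedge T\subseteq S_{i-1}\subseteq S_i$, so the left-hand side is $v(S_i)+v(S_{i-1}\wedge T)$ and the inequality is vacuous; nothing about $S_i\wedge T$ follows. The correct move is to apply convexity to the pair $S_{i-1}$ and $S_i\wedge T$:
\[
v\big(S_{i-1}\cup(S_i\cap T)\big) + v(S_{i-1}\cap T)\ \ge\ v(S_{i-1}) + v(S_i\cap T).
\]
Since $S_{i-1}\subseteq S_{i-1}\cup(S_i\cap T)\subseteq S_i$, closure under $\cup,\cap$ forces this intermediate set to lie in $\cF$, and because $S_{i-1}\prec S_i$ is a covering in the lattice it equals either $S_{i-1}$ or $S_i$. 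In the latter case you get $v(S_i\cap T)-v(S_{i-1}\cap T)\le v(S_i)-v(S_{i-1})$; in the former, $S_i\cap T=S_{i-1}\cap T$.

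\medskip

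\textbf{The case-matching is the real gap.} You correctly compute that the $\psi$-increment is $v(S_i)-v(S_{i-1})$ when $r_i\in T$ and $0$ otherwise, and that the $v$-increment is $0$ when $S_i\cap T=S_{i-1}\cap T$ and $\le v(S_i)-v(S_{i-1})$ otherwise. But these two dichotomies must \emph{coincide} for the term-by-term comparison to work: if $S_i\cap T\ne S_{i-1}\cap T$ while $r_i\notin T$, the $\psi$-increment is $0$ yet the $v$-increment is only bounded by $v(S_i)-v(S_{i-1})$, which need not be $\le 0$ (no monotonicity is assumed). The missing observation is that $S_i\cap T\ne S_{i-1}\cap T$ forces $S_{i-1}\cup(S_i\cap T)=S_i$ (by the covering argument above), hence $S_i\setminus S_{i-1}\subseteq T$, hence $r_i\in T$. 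Equivalently, by Proposition~\ref{prop:3.3} the blocks $S_i\setminus S_{i-1}$ are the atoms of a partition generating $\cF$, so every $T\in\cF$ either contains or avoids each block entirely. Once you insert this, the telescoping sum gives $v(T)\le\psi(T)$ and the proof is complete.
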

\begin{theorem}
Let $\cF$ be a $n$-regular lattice. If $v$ is monotone
  and convex, then $\mathcal{W}(v)\subseteq\core(v)$. 
\end{theorem}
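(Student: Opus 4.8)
The plan is to prove that every marginal worth vector of an $n$-regular set lattice lies in $\core(v)$; since $\cW(v)=\conv(\cM(v,\cF))$ and $\core(v)$ is a convex polyhedron, the inclusion $\cW(v)\subseteq\core(v)$ then follows at once. Because $\cF$ is $n$-regular, every maximal chain $C:\emptyset=S_0\prec S_1\prec\cdots\prec S_n=N$ satisfies $|S_i\setminus S_{i-1}|=1$, say $S_i=S_{i-1}\cup r_i$, so $C$ carries a single marginal vector $\phi=\phi^C$, given by $\phi_{r_i}=v(S_i)-v(S_{i-1})$ for $i=1,\dots,n$. Efficiency $\phi(N)=v(N)$ is immediate by telescoping, so everything reduces to proving $\phi(T)\geq v(T)$ for each $T\in\cF$.

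In the classical case $\cF=2^N$ one compares $v(S_i)-v(S_{i-1})$ with $v(T\cap S_i)-v(T\cap S_{i-1})$ and sums over $i$. The difficulty here is that $T\cap S_i$ need not lie in $\cF$, so supermodularity cannot be invoked along the trace of $T$ on $C$. The remedy is to replace the intersection by the lattice meet: put $T_i:=T\wedge S_i\in\cF$ for $i=0,\dots,n$, so that $T_0=\emptyset$, $T_n=T$, and $T_0\leq T_1\leq\cdots\leq T_n$. Two observations drive the argument. First, $T_i\wedge S_{i-1}=T\wedge S_i\wedge S_{i-1}=T\wedge S_{i-1}=T_{i-1}$ (using $S_{i-1}\leq S_i$), and since $S_{i-1}\prec S_i$ one has $S_{i-1}\leq T_i\vee S_{i-1}\leq S_i$, hence $T_i\vee S_{i-1}\in\{S_{i-1},S_i\}$. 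Second, as a set $T_i\subseteq T\cap S_i$; so if $r_i\notin T$ then $T_i\subseteq T\cap S_{i-1}\subseteq S_{i-1}$, which forces $T_i\leq S_{i-1}$ and thus (with $T_i\leq T$) $T_i\leq T\wedge S_{i-1}=T_{i-1}$, i.e. $T_{i-1}=T_i$. Equivalently: $T_{i-1}<T_i$ implies $r_i\in T$.

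Now I claim that $\phi_{r_i}\geq v(T_i)-v(T_{i-1})$ whenever $r_i\in T$, and $v(T_i)=v(T_{i-1})$ whenever $r_i\notin T$. The second assertion is exactly the last observation above. For the first, suppose $r_i\in T$. If $T_{i-1}<T_i$, then $T_i\vee S_{i-1}=S_i$ (the alternative $T_i\vee S_{i-1}=S_{i-1}$ would give $T_i\leq S_{i-1}$, hence $T_{i-1}=T_i$, a contradiction); convexity of $v$ applied to the pair $(T_i,S_{i-1})$ then gives $v(S_i)+v(T_{i-1})\geq v(T_i)+v(S_{i-1})$, i.e. $\phi_{r_i}\geq v(T_i)-v(T_{i-1})$. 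If instead $T_{i-1}=T_i$, the right-hand side is $0$ while $\phi_{r_i}=v(S_i)-v(S_{i-1})\geq0$ by monotonicity of $v$. This sub-case is the only place monotonicity is used, and it is precisely the case that cannot occur when $\cF$ is closed under $\cup,\cap$ (there $r_i\in T$ forces $r_i\in T\cap S_i=T_i$ and $r_i\notin T\cap S_{i-1}=T_{i-1}$), which explains why the preceding theorem needed no monotonicity. Summing over $i=1,\dots,n$ and telescoping, $\phi(T)=\sum_{i:\,r_i\in T}\phi_{r_i}\geq\sum_{i:\,r_i\in T}(v(T_i)-v(T_{i-1}))=\sum_{i=1}^n(v(T_i)-v(T_{i-1}))=v(T)$, so $\phi\in\core(v)$, and the theorem follows by convexity of the core.

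The main obstacle is exactly this passage from the set-intersection trace to the lattice meet: one must verify that $T_i=T\wedge S_i$ still telescopes to $T$, that the cover relation $S_{i-1}\prec S_i$ together with $n$-regularity confines $T_i\vee S_{i-1}$ to the two values $S_{i-1},S_i$ so that convexity can be applied one step at a time, and — the genuinely delicate point — that the gap between the conditions ``$r_i\in T$'' and ``$T_{i-1}<T_i$'' contributes nonnegatively, which is what monotonicity supplies.
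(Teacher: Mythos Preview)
Your argument is correct. The survey states this theorem without proof (it is attributed to Xie and Grabisch), so there is no in-paper argument to compare against directly; that said, the route you take --- replacing the set-theoretic trace $T\cap S_i$ by the lattice meet $T_i=T\wedge S_i$ and then using the cover relation $S_{i-1}\prec S_i$ to force $T_i\vee S_{i-1}\in\{S_{i-1},S_i\}$ --- is exactly the natural adaptation of the Shapley--Ichiishi proof to a lattice not closed under $\cap$, and is the standard one. Your isolation of the sub-case ``$r_i\in T$ but $T_{i-1}=T_i$'' as the unique place where monotonicity intervenes is on point; this case does occur (e.g.\ on $N=\{1,2,3\}$ with $\cF=\{\emptyset,1,2,13,23,123\}$, chain $\emptyset\prec 1\prec 13\prec 123$ and $T=23$, where $T\wedge 13=\emptyset=T\wedge 1$ although $3\in T$), and it is precisely what disappears when $\wedge=\cap$, which is the content of your remark about the preceding theorem.
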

The classical inclusion of the convex part of the core into the Weber set does
not hold in general, as shown by the following counterexample\footnote{This
  example was communicated by J. Derks.}.

Consider the following sets system (regular set lattice but not distributive,
  since it contains a pentagon, figured by the grey circles), with the values  of the game $v$ given into parentheses.
\begin{center}
\psset{unit=0.50cm}
\pspicture(-1,-1)(4,7)
\pspolygon(2,0)(0,2)(0,4)(2,6)(4,4)(4,2)
\psline(0,2)(2,4)(2,6)
\psline(0,4)(4,2)
\pscircle[fillstyle=solid](2,0){0.15}
\pscircle[fillstyle=solid](0,2){0.15}
\pscircle[fillstyle=solid](0,4){0.15}
\pscircle[fillstyle=solid](2,6){0.15}
\pscircle[fillstyle=solid](4,4){0.15}
\pscircle[fillstyle=solid](4,2){0.15}
\pscircle[fillstyle=solid](2,4){0.15}
\pscircle[fillstyle=solid,fillcolor=red](2,0){0.15}
\pscircle[fillstyle=solid,fillcolor=red](0,2){0.15}
\pscircle[fillstyle=solid,fillcolor=red](2,4){0.15}
\pscircle[fillstyle=solid,fillcolor=red](2,6){0.15}
\pscircle[fillstyle=solid,fillcolor=red](4,4){0.15}
\uput[-90](2,0){\scriptsize $\emptyset$}
\uput[180](0,2){\scriptsize (-1) 1 }
\uput[180](0,4){\scriptsize (-1) 12 }
\uput[90](2,6){\scriptsize 123 (0)}
\uput[0](4,4){\scriptsize 23  (0)}
\uput[0](4,2){\scriptsize 2  (-1)}
\uput[0](2,4){\scriptsize 13 (0)}
\endpspicture
\end{center}
The core is defined by
\begin{align*}
x_1 & \geq -1\\
x_2 & \geq -1\\
x_1 +x_2 &\geq -1\\
x_1 + x_3 &\geq 0\\
x_2 + x_3 &\geq 0\\
x_1 + x_2 + x_3 & = 0
\end{align*}
The core is bounded and vertices are $(0,0,0), (-1,0,+1), (0,-1,+1)$.
The Weber set is generated by the marginal vectors associated to the 4 maximal
chains (only 2 differents):
\[
(0,-1,+1), \quad (-1,0,+1).
\]
Clearly, $\cW(v)\not \ni (0,0,0)$.
\begin{remark}\label{rem:web}
As noted in Section~\ref{sec:pocoaug}, the same phenomenon occurs for the
positive core of games on augmenting systems. However, one should be careful
that this kind of (negative) result heavily depends on the definition given to
the marginal vectors: the framework given in Section~\ref{sec:fagr}, which is
more general than augmenting systems, does not exhibit this drawback.    
\end{remark}

\subsection{Coalition structures}
Let $\cB$ be a coalition structure and $v$ on $2^N$ being
zero-normalized (i.e., $v(\{i\})=0$ for all $i\in N$). The core is defined as
follows \citep{audr74}:
\[
\core(v,\cB):=\{x\in\bR^n\mid x(S)\geq v(S),\quad \forall S\in 2^N,\text{ and }
x(B_k)=v(B_k),k=1,\ldots,m\}. 
\]
\begin{theorem}
 Let $\cB$ be a coalition structure and $v$ on $2^N$ being zero-normalized, and
 let $x\in\core(v,\cB)$. Then 
\[
\{y\in\bR^{B_k}\mid (y,x_{|N\setminus B_k})\in\core(v,\cB)\} = \core(B_k,v^*_x,X_k),
\]
with $\core(B_k,v^*_x,X_k) := \{y\in\bR^{B_k}_+\mid y(B_k)=v_x^*(B_k), y(S)\geq
v_x^*(S),\forall S\subseteq B_k,\}$.
\end{theorem}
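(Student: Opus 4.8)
The plan is to prove the two inclusions separately, unwinding the definitions of $\core(v,\cB)$ and of $v^*_x$ as given just above the statement. Fix $x\in\core(v,\cB)$ and a block $B_k\in\cB$. For brevity write $B:=B_k$ and $z:=x_{|N\setminus B}$. Recall that $z(N\setminus B) = v(N)-x(B) = v(N)-v(B)$ since $x(B)=v(B)$, and that $v^*_x(S) = \max_{T\subseteq N\setminus B}\bigl(v(S\cup T)-z(T)\bigr)$ for $\emptyset\neq S\subset B$, while $v^*_x(B)=v(B)$. Note first that $v^*_x(\{i\})=\max_{T\subseteq N\setminus B}(v(\{i\}\cup T)-z(T))\geq v(\{i\})-z(\emptyset)=0$ because $v$ is zero-normalized; this is what makes the nonnegativity constraint $y\in\bR^{B}_+$ in $\core(B,v^*_x,X_k)$ consistent, and it is also automatically implied by the coalitional constraints once we have the equivalence, since $y_i = y(\{i\})\geq v^*_x(\{i\})\geq 0$.

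For the inclusion ``$\subseteq$'': suppose $y\in\bR^{B}$ is such that $(y,z)\in\core(v,\cB)$. Efficiency on $B$ gives $y(B)=v(B)=v^*_x(B)$. For $\emptyset\neq S\subset B$ and any $T\subseteq N\setminus B$, the coalitional inequality of $\core(v,\cB)$ applied to the coalition $S\cup T$ gives $y(S)+z(T)=(y,z)(S\cup T)\geq v(S\cup T)$, i.e. $y(S)\geq v(S\cup T)-z(T)$; taking the maximum over $T$ yields $y(S)\geq v^*_x(S)$. Together with $y_i\geq v^*_x(\{i\})\geq 0$ (the case $S=\{i\}$), this shows $y\in\core(B,v^*_x,X_k)$.

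For the reverse inclusion ``$\supseteq$'': suppose $y\in\bR^{B}_+$ with $y(B)=v^*_x(B)=v(B)$ and $y(S)\geq v^*_x(S)$ for all $S\subseteq B$. We must check $(y,z)\in\core(v,\cB)$. The block-efficiency conditions hold: on $B$ we have $y(B)=v(B)$, and on every other block $B_j$ ($j\neq k$) we have $(y,z)(B_j)=z(B_j)=x(B_j)=v(B_j)$ since $x\in\core(v,\cB)$. It remains to verify $(y,z)(R)\geq v(R)$ for an arbitrary $R\subseteq N$. Write $R=S\cup T$ with $S:=R\cap B$ and $T:=R\setminus B\subseteq N\setminus B$. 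If $S=\emptyset$ then $(y,z)(R)=z(T)=x(T)\geq v(T)=v(R)$ by the core inequality for $x$. If $\emptyset\neq S\subseteq B$, then $y(S)\geq v^*_x(S)\geq v(S\cup T)-z(T)$ by definition of $v^*_x$ (this covers both $S\subset B$ and $S=B$, using in the latter case $v^*_x(B)=v(B)$ and $v(B\cup T)-z(T)\leq v(B)$ which follows from $x\in\core(v,\cB)$ applied to $B\cup T$: $v(B\cup T)\leq x(B)+x(T)=v(B)+z(T)$), hence $(y,z)(R)=y(S)+z(T)\geq v(S\cup T)=v(R)$.

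The main obstacle — though a mild one — is the careful treatment of the boundary case $S=B$, where $v^*_x$ is defined by the second branch rather than the max formula; one has to observe that $v^*_x(B)=v(B)$ nonetheless dominates $v(B\cup T)-z(T)$ for every $T\subseteq N\setminus B$, which is precisely the content of $x\in\core(v,\cB)$ on the coalitions $B\cup T$. Apart from that, the proof is a routine double-inclusion argument matching the coalitional inequalities for $(y,z)$ with those for $y$ via the $\max$ defining $v^*_x$, together with the bookkeeping of the efficiency constraints on all blocks; nonnegativity of $y$ on the right-hand side and of $v^*_x$ on singletons are both consequences of zero-normalization of $v$ and are mutually consistent.
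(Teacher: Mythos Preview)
The paper is a survey and states this result (due to \citet{audr74}) without proof, so there is no paper-proof to compare against. Your double-inclusion argument is correct and is the natural approach: matching the coalitional constraints for $(y,z)$ against those for $y$ via the $\max$ in the definition of $v^*_x$, together with block efficiency. Your handling of the boundary case $S=B$ is exactly right --- the inequality $v(B\cup T)\leq v(B)+z(T)$ is nothing but coalitional rationality of the fixed $x$ on $B\cup T$ combined with $x(B)=v(B)$ --- and your observation that the nonnegativity constraint $y\in\bR^{B_k}_+$ is in fact redundant (being implied by $y_i\geq v^*_x(\{i\})\geq 0$ under zero-normalization) is a useful remark.
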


\subsection{Bounded faces of the core}
We have seen that in many cases, especially if $\cF=\cO(N)$, the core is
unbounded. Since infinite payoffs cannot be used in practice, it is necessary to
find bounded parts of the core and to select a payoff vector in this part. A
simple idea to do this is to impose further restrictions on the definition of
the core. One of them is to impose nonnegativity of the payoff vectors: this
leads to the positive core, studied in the next section. Another one is to
select some inequalities $x(S)\geq v(S), S\in\cF$ of the core and to turn them
into equalities. It is easy to prove that this procedure will eventually lead to
a bounded subset of the core. Equalities $x(S)=v(S)$ can be interpreted as
additional binding constraints: the members of coalition $S$ must share $v(S)$
among them. On the geometrical side, this is nothing else than defining a face
of the core, which has the property to be bounded (see Section~\ref{sec:poly}).

A first attempt in this direction was done by \citet{grxi07} for $\cF=\cO(N)$,
based on an interpretation of $(N,\leq)$ as a hierarchy. Later, the problem was
further elucidated for the case $\cF=\cO(N)$ \citep{gra10a}, and some other
cases (regular set systems and weakly-union closed set systems, although much
less general results can be obtained in these cases). We give here the main results
for the case $\cF=\cO(N)$. 

The basic idea is to suppress all extremal rays of the recession cone
$\core(\cF)$ (recall that the extremal rays do not depend on $v$, but only on
$\cF$), which are of the form $\1_j-\1_i$ for any $i,j\in N$ such that
$j\prec i$ (in other words, each link of the Hasse diagram of $(N,\leq)$
corresponds to an extremal ray and vice versa). We say that an extremal ray
$r\in\core(\cF)$ is \emph{deleted} by equality $x(S)=0$ in $\core(\cF)$ if
$r\not\in \core_{\{S\}}(\cF)$, where
\[
\core_{\{S\}}(\cF) := \{x\in\core(\cF)\mid x(S) = 0\}.
\]
 A first fundamental result is the following.
\begin{proposition}\label{lem:1}
Let $\cF=\cO(N)$, and suppose that $h(N)>0$ (height of $(N,\leq)$). For $i,j\in
N$ such that $j\prec i$, the extremal ray $\1_j-\1_i$ is deleted by equality
$x(S)=0$ if and only if $S\ni j$ and $S\not\ni i$.
\end{proposition}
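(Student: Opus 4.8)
The plan is to analyze directly when the extremal ray $r=\1_j-\1_i$ (with $j\prec i$ in $(N,\leq)$) survives or fails to survive the additional constraint $x(S)=0$, for a given downset $S\in\cF=\cO(N)$. Recall that $r$ lies in $\core(\cF)$ because $r(N)=0$ and for any downset $T$ we have $r(T)=\1_T(j)-\1_T(i)\geq 0$: indeed if $i\in T$ then, since $T$ is a downset and $j<i$, also $j\in T$, so $r(T)=0$; if $i\notin T$ then $r(T)=\1_T(j)\in\{0,1\}\geq 0$. Thus $r(S)=0$ unless $j\in S$ and $i\notin S$, in which case $r(S)=1>0$. This immediately gives one direction: if $S\ni j$ and $S\not\ni i$, then $r\notin\core_{\{S\}}(\cF)$, so $r$ is deleted.

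For the converse, I would show that if it is \emph{not} the case that ($S\ni j$ and $S\not\ni i$), then $r$ is \emph{not} deleted by $x(S)=0$, i.e., $r\in\core_{\{S\}}(\cF)$. By the computation above, in all the remaining cases we already have $r(S)=0$, so $r$ automatically satisfies the extra equality. Since $r$ was already in $\core(\cF)$, it stays in $\core_{\{S\}}(\cF)$, hence is not deleted. So the proof essentially reduces to the case analysis on whether $i$ and $j$ belong to the downset $S$, using only the defining property of a downset (closure downward) together with $j<i$.

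The one subtlety worth spelling out is that ``deleted'' is about the ray $\1_j-\1_i$ being removed from the cone as a \emph{set element}, not about it ceasing to be extremal; the statement of Proposition~\ref{lem:1} is phrased precisely as $r\notin\core_{\{S\}}(\cF)$, so no facial/extremality bookkeeping is needed here — that is handled later. Consequently there is no real obstacle: the argument is a two-line sign computation for $r(S)$ plus the observation that membership in $\core(\cF)$ is preserved by adding any equality the point already satisfies. I would therefore present it as: (1) verify $r\in\core(\cF)$ and compute $r(S)$; (2) note $r(S)>0$ exactly when $j\in S,i\notin S$; (3) conclude in both directions. The only ``care'' point is making explicit that when $i\in S$ we must have $j\in S$ by the downset property, which is why the case $j\in S$, $i\in S$ gives $r(S)=0$ rather than a problem.
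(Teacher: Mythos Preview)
Your argument is correct. The paper, being a survey, states this proposition without proof, so there is no ``paper's own proof'' to compare against; your direct computation is exactly the natural way to establish the result. The only substantive ingredient is the observation that for $r=\1_j-\1_i$ and any downset $S$ one has $r(S)=\1_S(j)-\1_S(i)\in\{0,1\}$, with value $1$ precisely when $j\in S$ and $i\notin S$ (the case $j\notin S$, $i\in S$ being excluded by $j<i$ and the downset property), and you have spelled this out cleanly. Your remark that ``deleted'' here means $r\notin\core_{\{S\}}(\cF)$ as a set element, not a change in extremality status, matches the paper's own definition verbatim, so no further justification is needed.
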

A collection $\cN\subseteq \cF\setminus\{N\}$ of nonempty sets is said to be
\emph{normal} for $\cF$ if
\[
\core_\cN(\cF):=\{x\in\core(\cF)\mid x(S)=0,\forall S\in \cN\}
\]
reduces to $\{0\}$. We call \emph{restricted core} of $v$ w.r.t. $\cN$ the set
$\core_\cN(v)=\{x\in\core(v)\mid x(S)=v(S),\forall S\in\cN\}$. If nonempty, it
corresponds to a bounded face of $\core(v)$, and the union of all bounded faces
of the core is called the \emph{bounded core}, studied in \citet{grsu12}.

A nonempty normal collection is \emph{nested} if any two sets $S,T$ in the
collection satisfy $S\subseteq T$ or $T\subseteq S$. 

We introduce three examples of normal collections. The first one is introduced in
\citep{gra10a}, and we call it the \emph{upwards collection}. It is obtained as
follows: discard all disconnected elements from $(N,\leq)$, then consider $M_1$
the set of all minimal elements of $(N,\leq)$, and put $N_1:=\downarrow\!M_1$,
the principal ideal generated by $M_1$ (see Section~\ref{sec:poset}). Then on
$N':=N\setminus M_1$, perform the same operations (discard disconnected elements,
consider $M_2$ the set of minimal elements, etc.), till reaching the empty
set. Then $\cN_u:=\{N_1,\ldots,N_q\}$ is a normal collection, minimal in the sense
that no subcollection is normal. The second example comes from \citet{grxi07}
and is built as follows.  First, elements in
$(N,\leq)$ of same height $i$ are put into the \emph{level set}
$Q_{i+1}$. Hence, $N$ is partitioned into level sets $Q_1,\ldots,Q_k$ , and
$Q_1$ contains all minimal elements of $N$. Then 
\[
\cN_{GX}:=\{Q_1,Q_1\cup Q_2,\ldots,Q_1\cup\cdots\cup Q_{k-1}\}
\]
is a nested normal collection. Lastly, a nested collection can be built from
$\cN_u$ (called the \emph{nested closure} of $\cN_u$):
\[
\cN_{ncu} := \{N_1,N_1\cup N_2,\ldots,N_1\cup\cdots\cup N_q\}
\]
(see Figure~\ref{fig:nc} for illustration).
\begin{figure}[htb]
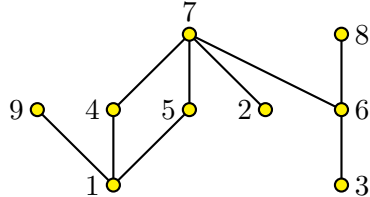

\begin{center}
\psset{unit=0.5cm}
\pspicture(0,-0.5)(8,4.5)
\pspolygon(2,0)(2,2)(4,4)(4,2)
\psline(0,2)(2,0)
\psline(6,2)(4,4)(8,2)
\psline(8,0)(8,4)
\pscircle[fillstyle=solid,fillcolor=yellow](0,2){0.2}
\pscircle[fillstyle=solid,fillcolor=yellow](2,0){0.2}
\pscircle[fillstyle=solid,fillcolor=yellow](2,2){0.2}
\pscircle[fillstyle=solid,fillcolor=yellow](4,2){0.2}
\pscircle[fillstyle=solid,fillcolor=yellow](4,4){0.2}
\pscircle[fillstyle=solid,fillcolor=yellow](6,2){0.2}
\pscircle[fillstyle=solid,fillcolor=yellow](8,0){0.2}
\pscircle[fillstyle=solid,fillcolor=yellow](8,2){0.2}
\pscircle[fillstyle=solid,fillcolor=yellow](8,4){0.2}
\uput[180](0,2){\small 9}
\uput[180](2,0){\small 1}
\uput[180](2,2){\small 4}
\uput[180](4,2){\small 5}
\uput[90](4,4){\small 7}
\uput[180](6,2){\small 2}
\uput[0](8,0){\small 3}
\uput[0](8,2){\small 6}
\uput[0](8,4){\small 8}
\endpspicture
\end{center}
\caption{$(N,\leq)$ has 9 elements. Level 1 is $\{1,2,3\}$, level 2 is
  $\{4,5,6,9\}$ and level 3 is $\{7,8\}$.  The upwards collection is
  $\{123,13456\}$, its nested closure is $\{123,123456\}$, and the Grabisch-Xie
  collection is $\{123, 1234569\}$.}
\label{fig:nc}
\end{figure}

Given a normal collection $\cN=\{S_1,\ldots,S_q\}$, \emph{restricted} maximal
chains are those passing through all sets $S_1,\ldots, S_q$. Marginal vectors
induced by restricted maximal chains are called \emph{restricted marginal
  vectors}, and the \emph{restricted Weber set} w.r.t. $\cN$, denoted by
$\cW_\cN$, is the convex hull of all restricted marginal vectors.
\begin{theorem}
Consider $\cN$ a nested normal collection on $\cF=\cO(N)$. Then for every game
$v$ on $\cF$, $\core_\cN(v)\subseteq \cW_\cN(v)$. In addition, if $v$ is convex,
equality holds.
\end{theorem}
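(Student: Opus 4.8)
The plan is to reduce the statement to the classical result (Theorem on convex games with $\cF = 2^N$, which identifies the core with the convex hull of the marginal vectors) by ``contracting'' the nested collection $\cN = \{S_1 \subsetneq \cdots \subsetneq S_q\}$. First I would record the geometric meaning of the restricted core: adding the equalities $x(S_j) = v(S_j)$ for $S_j \in \cN$ to the definition of $\core(v)$ intersects it with a face, and by Proposition~\ref{lem:1} together with the characterization of normal collections, these equalities delete \emph{all} extremal rays of the recession cone $\core(\cF) = \cone(\1_j - \1_i \mid j \prec i)$; hence $\core_\cN(\cF) = \{0\}$ and $\core_\cN(v)$, when nonempty, is a bounded face of $\core(v)$, i.e.\ a polytope. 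So the inclusion $\core_\cN(v) \subseteq \cW_\cN(v)$ is an inclusion of polytopes, and equality under convexity is an equality of polytopes; it suffices to compare vertices (or, for the inclusion, to exhibit $\core_\cN(v)$ as contained in a convex hull).

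For the inclusion $\core_\cN(v) \subseteq \cW_\cN(v)$, set $S_0 := \emptyset$, $S_{q+1} := N$, and write $B_\ell := S_\ell \setminus S_{\ell-1}$ for $\ell = 1,\ldots,q+1$, so that $(B_1,\ldots,B_{q+1})$ is an ordered partition of $N$. Any $x \in \core_\cN(v)$ satisfies $x(S_\ell) = v(S_\ell)$ for all $\ell$, hence $x(B_\ell) = v(S_\ell) - v(S_{\ell-1})$ is fixed. The key step is to show that the restriction of $x$ to each block $B_\ell$ lies in the core of an induced game on $B_\ell$: for $T \subseteq B_\ell$, note $S_{\ell-1} \cup T \in \cF = \cO(N)$ (since $S_{\ell-1}$ is a downset and $T$ sits ``above'' it at the same level — this needs the nestedness and the downset structure, and is where one must be slightly careful), and $x(S_{\ell-1} \cup T) = x(S_{\ell-1}) + x(T) \geq v(S_{\ell-1} \cup T)$, so $x(T) \geq v(S_{\ell-1} \cup T) - v(S_{\ell-1}) =: v_\ell(T)$. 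Thus $x_{|B_\ell} \in \core(v_\ell)$ on the lattice $\cO(N)$ restricted to $B_\ell$. One then checks that restricted maximal chains of $\cF$ through $\cN$ correspond bijectively to tuples of maximal chains, one in each block, and that the associated restricted marginal vector decomposes blockwise into the marginal vectors of the $v_\ell$. Applying the inclusion $\core \subseteq \cW$ (valid on $\cO(N)$, Theorem~\ref{th:degial}) to each $v_\ell$ separately and recombining gives $x \in \cW_\cN(v)$.

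For the reverse inclusion under the hypothesis that $v$ is convex, I would first verify that each induced game $v_\ell$ on $B_\ell$ (with its order inherited from $(N,\leq)$) is again convex: for $A, A' \subseteq B_\ell$, the identity $v_\ell(A \cup A') + v_\ell(A \cap A') - v_\ell(A) - v_\ell(A')$ equals the analogous expression for $v$ evaluated at $S_{\ell-1} \cup A$ and $S_{\ell-1} \cup A'$ (whose meet and join in $\cO(N)$ are $S_{\ell-1} \cup (A\cap A')$ and $S_{\ell-1}\cup(A\cup A')$), which is $\geq 0$ by convexity of $v$. Then Theorem~\ref{th:futo} gives $\core(v_\ell) = \cW(v_\ell)$ for each block (the convex part equals the Weber set, and here the core is already bounded on each block so there is no conic part to worry about once we have fixed $x(B_\ell)$). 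Finally, the blockwise decomposition of the previous paragraph, run in reverse, shows that any restricted marginal vector lies in $\core_\cN(v)$ — indeed its restriction to $B_\ell$ is a marginal vector of $v_\ell$, hence in $\core(v_\ell)$, and gluing these back satisfies all the inequalities $x(S) \geq v(S)$ for $S \in \cF$ because any $S \in \cF$ interacts with the blocks in a controlled way — together with convexity of $\core_\cN(v)$ this yields $\cW_\cN(v) \subseteq \core_\cN(v)$.

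The main obstacle I anticipate is the bookkeeping in the ``gluing'' direction: verifying that a vector built blockwise from marginal vectors of the $v_\ell$ actually satisfies $x(S) \geq v(S)$ for \emph{every} $S \in \cO(N)$, not just for $S$ that are unions of blocks or subsets of a single block. This requires a telescoping argument, writing $x(S) = \sum_\ell x(S \cap B_\ell)$ and bounding each term by $v_\ell(S \cap B_\ell) = v(S_{\ell-1} \cup (S \cap B_\ell)) - v(S_{\ell-1})$, then using convexity (submodularity-type inequalities, or equivalently the fact that within each block the marginal vector dominates $v_\ell$) to sum the pieces back up to $v(S)$; one must also use that $S \cap B_\ell$ is a downset of $B_\ell$ whenever $S$ is a downset of $N$, so that it is a legitimate coalition in the block game. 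This is routine given the classical theory but is the step where the nestedness of $\cN$ and the distributive-lattice structure $\cF = \cO(N)$ are genuinely used, and where care is needed to get a clean proof rather than a messy one.
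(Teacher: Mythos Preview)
The paper is a survey and does not prove this theorem; it only states it and cites \citep{gra10a}. So there is no ``paper's own proof'' to compare against, and I will assess your argument on its merits.

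Your blockwise reduction is the right idea, and most of it goes through, but there is one genuine gap and one place where you are working much harder than necessary.

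\textbf{The gap.} You assert that for every $T\subseteq B_\ell$ the set $S_{\ell-1}\cup T$ lies in $\cO(N)$, and you then invoke Theorem~\ref{th:degial} to get $\core(v_\ell)\subseteq\cW(v_\ell)$ on the block. Both steps silently require that each block $B_\ell=S_\ell\setminus S_{\ell-1}$ is an \emph{antichain} in $(N,\leq)$: the first because otherwise $S_{\ell-1}\cup T$ need not be a downset, and the second because Theorem~\ref{th:degial} only gives $\core^F\subseteq\cW$ (the \emph{convex part}), which coincides with the full core exactly when the underlying poset is an antichain. This antichain property is true, but it is the one nontrivial structural fact that makes the whole reduction work, and you have not stated or proved it. The argument is short: by Proposition~\ref{lem:1} and normality of $\cN$, every covering pair $j\prec i$ is separated by some $S_\ell$ (i.e., $j\in S_\ell$, $i\notin S_\ell$), so $j$ lies in a strictly earlier block than $i$; iterating along a chain shows that no two comparable elements can share a block. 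Once this is recorded, $\cO(B_\ell)=2^{B_\ell}$, the block games are classical, and your decomposition (together with $\conv(A_1\times\cdots\times A_k)=\prod_\ell\conv(A_\ell)$) gives the inclusion $\core_\cN(v)\subseteq\cW_\cN(v)$ cleanly.

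\textbf{The unnecessary work.} For the reverse inclusion under convexity you set up blockwise convexity, invoke Theorem~\ref{th:futo} on each $v_\ell$, and then worry about a telescoping ``gluing'' argument to recover all inequalities $x(S)\geq v(S)$. This telescoping does work (it is a direct supermodularity computation with $X=S_{\ell-1}$ and $Y=(S\cap S_{\ell-1})\cup T_\ell$), but it is entirely avoidable. A restricted marginal vector is in particular a marginal vector of $v$ on $\cO(N)$; by Theorem~\ref{th:futo}, convexity of $v$ forces every marginal vector into $\core(v)$. Since a restricted marginal vector satisfies $x(S_\ell)=v(S_\ell)$ for each $\ell$ by construction (the defining chain passes through each $S_\ell$), it lies in $\core_\cN(v)$. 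Taking convex hulls gives $\cW_\cN(v)\subseteq\core_\cN(v)$ in one line.
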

The converse ($\core_\cN(v)=\cW_\cN(v)$ implies convexity of $v$) does not hold
in general. Note that the theorem shows that $\core_\cN(v)$ is nonempty as soon
as $v$ is convex and $\cN$ is nested.

\section{Structure of the positive core}\label{sec:poco}
We first address the question of nonemptiness. Adapting the result of
Bondareva, we say that a collection $\cB$ is \emph{completely balanced} if there
exist positive coefficients $\lambda_B$, $B\in\cB$ such that
\[
\sum_{B\in\cB}\lambda_B\1_B\leq \1_N.
\]
Then a game is \emph{completely balanced} if $\sum_{B\in\cB}\lambda_Bv(B)\leq
v(N)$ holds for every completely balanced collection, and
$\core^+(v)\neq\emptyset$ if and only if $v$ is completely balanced.

An equivalent definition of a completely balanced game is given by \citet{fai89}. 
A game $v$ is completely balanced if and only if  for all families $A_1,\ldots,A_k$ in
$\cF$ and $m\in\bN$ it holds
\[
\frac{1}{m}\sum_{i=1}^k \1_{A_i}\leq\1_N\text{ implies }
\frac{1}{m}\sum_{i=1}^k v(A_i)\leq v(N).
\]
Above, $\cF\ni\emptyset$ is assumed, in order to get the condition $v(N)\geq 0$
by considering the family reduced to $\emptyset$.  
If $\cF$ is closed under intersection and complementation, a nonnegative
balanced game is completely balanced.
\begin{remark}\label{rem:6.1}
The positive core is in general much smaller than the core, and could be empty
even if the core is nonempty. In particular, if $v$ is not monotone, the
positive core is likely to be empty. See also the discussion below on
the equality between the core and the positive core.
\end{remark}

An important question is to know when the core and the positive core coincide. 
\begin{theorem}\label{th:atom}
\citep{fai89} Let $v$ be a nonnegative balanced game on $\cF$ closed under intersection. Then
$\core(v) = \core^+(v)$ if and only if $\cF$ is atomistic. Moreover, $\core(v)$ is
unbounded unless $\core(v)=\core^+(v)$.
\end{theorem}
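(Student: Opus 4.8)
The plan is to prove the two directions of the equivalence and then the final unboundedness claim, exploiting the recession cone machinery from Section~\ref{sec:poly} and the notion of non-degeneracy. First I would recall that, since $v$ is balanced, $\core(v)\neq\emptyset$, so $\core(v)$ is a nonempty polyhedron whose recession cone is $\core(\cF)$, and $\core(v)$ is a polytope if and only if $\core(\cF)=\{0\}$. The key observation is that if $\cF$ is atomistic, then every inequality $x_i\geq v(\{i\})$ with $\{i\}\in\cF$ is present in the description of $\core(v)$, and since $v$ is nonnegative these give $x_i\geq v(\{i\})\geq 0$ for all $i$; hence $\core(v)=\core^+(v)$ directly, with no need for the balancedness or intersection-closure hypotheses in this direction. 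So the ``if'' part is essentially immediate.

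For the ``only if'' direction, suppose $\cF$ is not atomistic, say $\{i_0\}\notin\cF$; I want to produce $x\in\core(v)$ with $x_{i_0}<0$, showing $\core(v)\neq\core^+(v)$. The idea is to start from any $x^0\in\core(v)$ and move along a suitable direction $r$ in the recession cone $\core(\cF)$ that strictly decreases the $i_0$-th coordinate: then $x^0+tr\in\core(v)$ for all $t\geq 0$, and for $t$ large enough $x_{i_0}^0+tr_{i_0}<0$. What I need, therefore, is a vector $r$ with $r(S)\geq 0$ for all $S\in\cF$, $r(N)=0$, and $r_{i_0}<0$. Here is where closure under intersection enters: consider the smallest member of $\cF$ containing $i_0$ — but $\cF$ need not be closed under $\cup$, so instead I would work with $\cF(N\setminus i_0)=\{T\in\cF\mid i_0\notin T\}$ and its maximal elements. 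Because $\cF$ is closed under intersection, one can choose $r$ supported on $\{i_0\}$ and some $j$ lying ``above'' $i_0$ in the sense that $j$ belongs to every $S\in\cF$ containing $i_0$; such a $j$ exists precisely when $\{i_0\}\notin\cF$ and $\cF$ is $\cap$-closed (this is the $D_{i_0}\supsetneq\{i_0\}$ phenomenon of Proposition~\ref{prop:di}, adapted here), and then $r=\1_{i_0}-\1_j$ satisfies $r(S)\geq 0$ for all $S\in\cF$: if $i_0\in S$ then $j\in S$ too so $r(S)=0$, and if $i_0\notin S$ then $r(S)\in\{0,-1,\dots\}$... wait, $r(S)=-\1_j(S)\le 0$, which is the wrong sign — so in fact the correct ray is $r=\1_j-\1_{i_0}$, giving $r(S)=0$ when $i_0\in S$ and $r(S)=\1_j(S)\ge 0$ otherwise, with $r(N)=0$ and $r_{i_0}=-1<0$. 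Adding large multiples of this $r$ to $x^0$ exits the positive orthant, proving the contrapositive and simultaneously showing $\core(v)$ is unbounded whenever $\core(v)\neq\core^+(v)$.

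The main obstacle I anticipate is pinning down the existence of the element $j$ above, i.e., showing that when $\{i_0\}\notin\cF$ and $\cF$ is closed under intersection, the set $\bigcap\{S\in\cF\mid i_0\in S\}$ strictly contains $\{i_0\}$ — one must make sure this intersection is taken over a nonempty family (true since $N\in\cF$), lies in $\cF$ (true by $\cap$-closure, at least for finite $\cF$), and is not the singleton (true since otherwise $\{i_0\}\in\cF$). This is exactly the content already extracted in Proposition~\ref{prop:di}, so I would simply invoke $D_{i_0}$ and note $|D_{i_0}|>1$. The only remaining subtlety is the final ``unless'': the unboundedness statement $\core(v)$ unbounded unless $\core(v)=\core^+(v)$ is just the contrapositive reformulation of what we proved, combined with the trivial fact that $\core^+(v)$, being contained in the bounded set $\{x\ge 0: x(N)=v(N)\}\cap$ (the hyperplane), is itself bounded — so if $\core(v)=\core^+(v)$ it is automatically a polytope. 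Putting these pieces together yields the theorem; essentially all the work is in the contrapositive of the ``only if'' direction, and that reduces to exhibiting one explicit ray of $\core(\cF)$.
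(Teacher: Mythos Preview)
Your proposal is correct and follows essentially the same approach as the paper's sketch: the ``if'' direction is immediate from nonnegativity, and for the ``only if'' (and the unboundedness) you construct the ray $\1_j-\1_{i_0}\in\core(\cF)$ with $j\in D_{i_0}\setminus\{i_0\}$, which is exactly what the paper means by ``$x_j$ can be taken arbitrarily negatively large'' when $\{i_0\}\notin\cF$. One small remark: Proposition~\ref{prop:di} is stated for systems closed under both $\cup$ and $\cap$, but you only need the elementary fact that $\cap$-closure alone gives $D_{i_0}\in\cF$ and hence $|D_{i_0}|\geq 2$ when $\{i_0\}\notin\cF$---so just argue that directly rather than invoking the proposition.
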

$\cF$ atomistic implies $\core(v)=\core^+(v)$ is obvious by nonnegativity of
$v$. Also, if $\core(v)\neq\core^+(v)$ then $\cF$ is not atomistic, and for
$\{j\}\not\in\cF$, $x_j$ can be taken arbitrarily negatively large, hence unboundedness.

\subsection{The positive core for augmenting systems}\label{sec:pocoaug}
This has been studied by \citet{bior08a}. 
Given a (nonnegative) game $v$ on $\cF$, we consider its extension $\hat{v}$ on
$2^N$:
\[
\overline{v}(S):=\sum_{T \text{ maximal in } \cF(S)} v(T).
\]
(see Section~\ref{sec:ext}).
Recall that maximal sets in $\cF(S)$ are pairwise disjoint.

Since $N$ does not necessarily belong to $\cF$, the definition of the core is
slightly modified as follows:
\[
\core^+(v):=\{x\in\bR_+^n\mid x(S)\geq v(S) \text{ for all } S\in\cF\text{ and }
x(N)=\overline{v}(N)\}. 
\]
A fundamental (but obvious) property is that $\core^+(v) = \core^+(\overline{v})$ (see
Section~\ref{sec:graph} for a close result, as well as Remark~\ref{rem:cupcap}(v) for a
more general result), and it is a polytope.

Suppose that $N\in\cF$ (hence it is a regular set system). Then any maximal chain
in $\cF$ corresponds to an ordering on $N$ (\emph{compatible}
orderings or permutations). For a maximal chain $C$, denote by $\phi^C$ the
corresponding marginal worth vector. Then the Weber set is naturally defined by
\[
\cW(v) := \mathrm{conv}\{\phi^C\mid C\in\cC(\cF)\}.
\]
Define $v$ to be \emph{convex} if for all $S,T\in\cF$ such that $S\cup T\in\cF$,
\[
v(S\cup T) + \sum_{F\text{ maximal in } \cF(S\cap T)}v(F) \geq v(S) + v(T)
\]
(this is identical to supermodular games on weakly union-closed systems in
Section~\ref{sec:fagr}).
\begin{theorem}
\citep{bior08a} If $v$ is monotone and convex, then $\cW(v)\subseteq \core^+(v)$, and any
marginal vector is a vertex of $\core^+(v)$.
\end{theorem}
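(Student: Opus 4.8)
The plan is to prove the theorem in two parts: first the inclusion $\cW(v)\subseteq\core^+(v)$, and then the stronger claim that each marginal vector $\phi^C$ is actually a vertex of $\core^+(v)$. Throughout I would use the fact recalled just before the statement that $\core^+(v)=\core^+(\overline{v})$ and that it is a polytope, so I have to check membership of each $\phi^C$ in the set $\{x\in\bR_+^n\mid x(S)\geq v(S)\ \forall S\in\cF,\ x(N)=\overline{v}(N)\}$.

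**First I would** fix a maximal chain $C=\{\emptyset=S_0\subset S_1\subset\cdots\subset S_n=N\}$ in $\cF$ (recall $N\in\cF$ here, so $\cF$ is a regular set system and $|S_i\setminus S_{i-1}|=1$), with associated permutation $\sigma$, and verify the three defining conditions for $\phi^C$. Efficiency $\phi^C(N)=v(N)=\overline{v}(N)$ is a telescoping sum and is immediate. Nonnegativity $\phi^C\geq 0$ follows from monotonicity of $v$, since each coordinate is $v(S_i)-v(S_{i-1})\geq 0$. The heart of the argument is the coalitional inequality $\phi^C(S)\geq v(S)$ for every $S\in\cF$. Here I would proceed by induction along the chain: writing $S\in\cF$ and letting $i$ be chosen so that $S_i$ is a ``good'' neighbour of $S$ along the chain, I want to peel off one element at a time. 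Concretely, I would look for the largest index $j$ with $S_j\subseteq S$ fails to work directly (since $\cF$ need not be closed under intersection), so instead I would use the augmenting/one-point extension structure: since $S,S_i\in\cF$ with, say, $S\subseteq S_i$ or with $S\cap S_i$ handled through its maximal $\cF$-subsets, I apply the convexity inequality $v(S\cup T)+\sum_{F\text{ maximal in }\cF(S\cap T)}v(F)\geq v(S)+v(T)$ with $T$ running over the sets $S_0,S_1,\ldots$ of the chain, and combine these telescopically. The standard classical proof (for $\cF=2^N$) rewrites $\phi^C(S)=\sum_{i:\sigma(i)\in S}\bigl(v(S_{i-1}\cup\sigma(i))-v(S_{i-1})\bigr)$ and lower-bounds each summand by $v((S_{i-1}\cap S)\cup\sigma(i))-v(S_{i-1}\cap S)$ using convexity; the task here is to carry out that same comparison while respecting feasibility, replacing each $S_{i-1}\cap S$ (which may leave $\cF$) by the disjoint union of the maximal $\cF$-subsets of $S_{i-1}\cap S$ and using the modified convexity inequality above, summing the resulting chain of inequalities so that the correction terms $\sum_F v(F)$ cancel telescopically and leave exactly $v(S)$ on the right.

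**The hard part will be** exactly this bookkeeping: controlling the ``maximal $\cF$-subsets'' correction terms along the induction so that they telescope cleanly. One needs that at each step, passing from $S_{i-1}\cap S$ to $S_i\cap S$ either adds nothing or adds the single element $\sigma(i)$, and that the partition of $S_{i-1}\cap S$ into maximal $\cF$-blocks refines compatibly — this is where the augmenting-system axiom and weak union-closure (maximal sets of $\cF(T)$ are pairwise disjoint) do the real work, guaranteeing the intermediate objects stay inside $\cF$ or decompose controllably. I would expect to need a small lemma saying that for $S\in\cF$ and a chain prefix $S_{i-1}$, the family of maximal $\cF$-subsets of $S\cap S_{i-1}$ behaves monotonically as $i$ increases, so that the sum $\sum_i\bigl[v(\text{block after})-v(\text{block before})\bigr]$ collapses to $v(S)-0$.

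**Finally**, for the vertex claim, I would argue that $\phi^C$ satisfies with equality the $n$ constraints $x(S_i)=v(S_i)$, $i=1,\ldots,n$ (true by the telescoping identity), and that these $n$ equalities are linearly independent because the vectors $\1_{S_1},\ldots,\1_{S_n}$ form a triangular (hence nonsingular) system relative to the order $\sigma$ — each $S_i\setminus S_{i-1}$ is a single new player. Since $\core^+(v)\subseteq\bR^n$ lies in the hyperplane $x(N)=\overline v(N)$, having $n$ independent tight constraints pins down a $0$-dimensional face, i.e.\ a vertex; and $\phi^C$ lies in $\core^+(v)$ by the first part, so it is a vertex. I would close by remarking (cf.\ Remark~\ref{rem:web}) that distinct chains may give the same marginal vector, so ``any marginal vector is a vertex'' is the correct phrasing rather than a bijection with chains.
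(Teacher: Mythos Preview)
The paper is a survey and does not give a proof of this theorem; it simply attributes the result to \citet{bior08a}. So there is no ``paper's own proof'' to compare against, and I evaluate your plan on its merits.

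Your vertex argument is correct and complete: the $n$ equalities $x(S_i)=v(S_i)$ along the chain are triangular in the ordering $\sigma$, hence independent, and together with membership in $\core^+(v)$ (established in the first part) this pins $\phi^C$ down as a $0$-dimensional face.

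For the inclusion $\phi^C(S)\geq v(S)$, your instinct to replace $S_{i-1}\cap S$ by its maximal $\cF$-subsets and telescope is right, but you stop short of the key computation. Here it is. Write $p=\sigma(i)\in S$, let $F_1,\ldots,F_k$ be the maximal $\cF$-subsets of $S\cap S_{i-1}$ (pairwise disjoint by weak union-closure), and let $F^*$ be the maximal $\cF$-subset of $S\cap S_i$ containing $p$. Weak union-closure forces every $F_j$ meeting $F^*$ to be contained in $F^*$, so $\overline{v}(S\cap S_i)-\overline{v}(S\cap S_{i-1})=v(F^*)-\sum_{F_j\subseteq F^*}v(F_j)$. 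Now apply the convexity inequality to the pair $(S_{i-1},F^*)$: since $p\in F^*\subseteq S_i$ and $p\notin S_{i-1}$ one has $S_{i-1}\cup F^*=S_i\in\cF$, and one checks (again by weak union-closure) that the maximal $\cF$-subsets of $S_{i-1}\cap F^*=F^*\setminus p$ are exactly the $F_j$ contained in $F^*$. This gives $v(S_i)-v(S_{i-1})\geq v(F^*)-\sum_{F_j\subseteq F^*}v(F_j)$, i.e.\ $v(S_i)-v(S_{i-1})\geq \overline{v}(S\cap S_i)-\overline{v}(S\cap S_{i-1})$. Summing over $i$ with $\sigma(i)\in S$ telescopes the right side to $\overline{v}(S)-\overline{v}(\emptyset)=v(S)$. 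So your ``hard part'' does go through; you just needed to identify $F^*$ as the right object to which convexity is applied.

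A shorter alternative, available from the results stated immediately after this theorem in the survey, is to pass to $\overline{v}$: by the quoted result of \citet{bior08a} (Theorem~20(ii) in the paper), monotone convex $v$ yields $\overline{v}$ convex on $2^N$; each $\phi^C$ is then a classical marginal vector of $\overline{v}$ (the chain $C$ lives in $2^N$ and $\overline{v}|_{\cF}=v$), hence a vertex of $\core(\overline{v})$ by Shapley--Ichiishi; nonnegativity (from monotonicity of $v$) places it in $\core^+(\overline{v})=\core^+(v)$, and an extreme point of the larger polytope lying in the smaller one is extreme there too.
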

The classical inclusion of the core in the Weber set does not hold in general: a
counterexample is given in \citet{bior08a} (see also Remark~\ref{rem:web}).

A game $v$ is superadditive if for all disjoint $S,T\in\cF$ such that $S\cup T\in\cF$,
$v(S\cup T)\geq v(S) + v(T)$.
\begin{theorem}
\citep{bior08a} Let $v$ be a game on $\cF$.
\begin{enumerate}
\item If $v$ is superadditive and monotone, then $\overline{v}$ is superadditive and
  monotone.
\item If $v$ is convex and monotone, then $\overline{v}$ is convex.
\item Suppose $v$ is monotone. Then $v$ is convex if and only if $\overline{v}$ is convex if and only if
  $\core(v)=\cW(\overline{v})$.  
\end{enumerate}
\end{theorem}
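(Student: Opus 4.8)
The plan is to establish the three statements of the theorem by combining the extension machinery $\overline{v}$ with the known characterizations of convexity for augmenting systems containing $N$ (in particular the fact, recalled in the previous theorem, that for monotone $v$ convexity is equivalent to $\cW(v)\subseteq\core^+(v)$ with each marginal vector a vertex), and with the classical result on $2^N$ that a game $w$ on $2^N$ is convex if and only if $\core(w)=\cW(w)$. The underlying tool throughout is the identity $\core^+(v)=\core^+(\overline{v})$ and the fact that $\overline{v}(N)=v(N)$ when $N\in\cF$, so that the feasibility constraint $x(N)=\overline{v}(N)$ is the same on both sides.

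First I would prove (i). Take $S,T\subseteq N$ disjoint; the maximal sets of $\cF(S)$ and of $\cF(T)$ are pairwise disjoint (weak union-closure), and since $S\cap T=\emptyset$ their union is a subfamily of pairwise disjoint sets of $\cF$ contained in $S\cup T$, hence dominated by the maximal sets of $\cF(S\cup T)$. Superadditivity of $\overline{v}$ then follows from superadditivity and monotonicity of $v$ applied along a refinement/augmentation argument: each maximal set of $\cF(S)$ sits inside a maximal set of $\cF(S\cup T)$, and monotonicity plus superadditivity of $v$ lets one bound $\sum v(F_i)+\sum v(G_j)$ by $\sum v(H_\ell)$ where the $H_\ell$ are maximal in $\cF(S\cup T)$. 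Monotonicity of $\overline{v}$ is immediate: enlarging $S$ can only enlarge the maximal members of $\cF(S)$, so by monotonicity of $v$ each term grows (one must check no term is lost, which again uses that maximal sets of $\cF(S)$ are contained in maximal sets of $\cF(S')$ for $S\subseteq S'$).

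For (ii), I would argue directly from the definition of convexity for $\overline{v}$ on $2^N$. Fix $S,T\subseteq N$ and let $\{F_i\}$, $\{G_j\}$, $\{H_\ell\}$, $\{K_m\}$ be the maximal members in $\cF$ of $\cF(S)$, $\cF(T)$, $\cF(S\cup T)$, $\cF(S\cap T)$ respectively. One wants $\sum_\ell v(H_\ell)+\sum_m v(K_m)\ge \sum_i v(F_i)+\sum_j v(G_j)$. The natural route is to match up the blocks: each $F_i$ and each $G_j$ lies in a unique $H_\ell$, and their pairwise intersections are governed by the $K_m$; then one applies the augmenting-system convexity inequality $v(A\cup B)+\sum_{F\text{ max in }\cF(A\cap B)}v(F)\ge v(A)+v(B)$ (valid whenever $A\cup B\in\cF$) iteratively, together with monotonicity to absorb the leftover pieces that are not themselves in $\cF$. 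This telescoping is the main obstacle: one has to organize the induction on the blocks of $\cF(S\cup T)$ carefully so that every application of the convexity inequality is legitimate (i.e. the relevant unions lie in $\cF$), and monotonicity must be invoked precisely when a block of $\cF(S)\cup\cF(T)$ is strictly smaller than the corresponding maximal block of $\cF$. I expect this bookkeeping, rather than any deep idea, to be where the real work lies.

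Finally (iii). Assume $v$ monotone. If $v$ is convex, then by (ii) $\overline{v}$ is convex on $2^N$, and by the classical theorem on $2^N$ this is equivalent to $\core(\overline{v})=\cW(\overline{v})$; note that since $v$ and hence $\overline{v}$ is monotone and nonnegative, $\core(\overline{v})=\core^+(\overline{v})=\core^+(v)$, so we get $\core(v)=\cW(\overline{v})$ in the sense intended (the core of $v$, which equals $\core^+(v)$ under monotonicity). Conversely, if $\overline{v}$ is convex then so is $v$: restricting the convexity inequality of $\overline{v}$ to $S,T\in\cF$ with $S\cup T\in\cF$ gives $\overline{v}(S\cup T)=v(S\cup T)$, $\overline{v}(S)=v(S)$, $\overline{v}(T)=v(T)$, and $\overline{v}(S\cap T)=\sum_{F\text{ max in }\cF(S\cap T)}v(F)$, which is exactly the convexity condition for $v$ on the augmenting system. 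And if $\core(v)=\cW(\overline{v})$ then, since $\cW(\overline{v})$ is the convex hull of the marginal vectors of $\overline{v}$ on $2^N$ and these coincide with the marginal vectors $\phi^C$ for compatible chains $C\in\cC(\cF)$ (every maximal chain of $\cF$ extends to one of $2^N$ and the extra steps contribute zero increments because $\overline{v}$ is additive over the blocks picked up outside $\cF$), one reads off $\cW(v)\subseteq\core^+(v)$, which by the preceding theorem forces $v$ convex. Thus all three conditions are equivalent, closing the loop.
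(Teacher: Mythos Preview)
The paper itself gives no proof of this theorem; it is quoted as a result of \citet{bior08a} in a survey. So there is no ``paper's proof'' to compare against, and the issue is whether your sketch stands on its own.

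Parts (i) and (ii) are reasonable outlines; the bookkeeping you flag in (ii) is indeed the only real work. The problem is in the last step of (iii). To close the loop you argue: from $\core(v)=\cW(\overline{v})$ deduce $\cW(v)\subseteq\core^+(v)$, and then ``by the preceding theorem'' conclude that $v$ is convex. But the preceding theorem only says that monotone convexity of $v$ \emph{implies} $\cW(v)\subseteq\core^+(v)$; it nowhere asserts the converse, so you cannot read convexity off from that inclusion. (Your intermediate claim that the marginal vectors of $\overline{v}$ on $2^N$ ``coincide with'' those for compatible chains is also too strong: every compatible chain of $\cF$ is a maximal chain of $2^N$ and gives the same marginal vector for $v$ and $\overline{v}$, hence $\cW(v)\subseteq\cW(\overline{v})$, but the reverse inclusion is false in general.)

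The clean route is the one you already set up in the forward direction. Since $v$ is monotone, $\overline{v}(\{i\})\geq 0$ for every $i$ (either $\{i\}\in\cF$ and $\overline{v}(\{i\})=v(\{i\})\geq 0$, or $\{i\}\notin\cF$ and $\overline{v}(\{i\})=0$), hence $\core(\overline{v})=\core^+(\overline{v})=\core^+(v)$. Therefore $\core(v)=\cW(\overline{v})$ reads as $\core(\overline{v})=\cW(\overline{v})$, and the \emph{classical} equivalence on $2^N$ gives $\overline{v}$ convex; your restriction argument then yields convexity of $v$. This repairs the gap without needing any converse of the preceding theorem.
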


\subsection{The positive core and Monge extensions} \label{sec:fagr}
It is possible to get more general results, valid for an arbitrary set system or
a weakly-union closed system, by considering an approach closer to combinatorial
optimization, through the so-called Monge algorithm\footnote{The original idea of the Monge algorithm goes back to \cite{mon1781}. Monge
studied a geometric transportation problem in which a set of locations
$s_1,\ldots,s_n$ of mass points has to be matched optimally (in the sense of
minimizing the total cost) with another set of
locations $t_1,\ldots,t_n$, and proved that optimality was reached if the
transportation lines do not cross. This geometric fact can be expressed as
follows: if the costs $c_{ij}$ of matching objects $s_i$ with $t_j$  have the
``uncrossing'' property:
\[
c_{ij} + c_{k\ell} \geq c_{\max(i,k),\max(j,\ell)} + c_{\min(i,k),\min(j,\ell)}
\]
then the optimal matching is $(s_1,t_1), \ldots,(s_n,t_n)$. This is also called
the ``north-west corner rule''. Translated into the language of set functions, the
uncrossing property is in fact submodularity:
\[
v(A) + v(B) \geq v(A\cup B) + v(A\cap B).
\] 
}. We refer the reader to \citet{fagrhe09} for details and proofs.  

Consider an arbitrary set system $\cF$, and a vector $c\in\bR^n$, which will be
the input vector of the Monge algorithm (MA). The idea of the algorithm is to take at
each iteration the largest subset $F$ of $\cF$ contained in the current set $X$,
and to select in $F$ the first element $p$ corresponding to the smallest
component of a vector $\gamma\in \bR^n$. At initialization, $X=N$ and
$\gamma=c$, and at each iteration, $p$ is discarded from $X$, and $\gamma_p$ is
subtracted from $\gamma_i$, for all $i\in F$. 

The output of the algorithm is the sequence of all selected subsets $F$, the
sequence of all selected elements $p$, and a vector $y\in\bR^{\cF}$ recording at
index $F$ the quantity $\gamma_p$. 

We define $\Gamma(y):=\langle v,y\rangle$.  Letting for any input $c\in\bR^n$
\[
\hat{v}(c) := \Gamma(y),
\]
$\hat{v}$ is an extension of $v$ since it can be proven that $\hat{v}(\1_F) =
v(F)$ for any $F\in \cF$. Moreover,
\[
\core^+(v) = \{x\in\bR^n\mid \langle c,x\rangle\geq \hat{v}(c),\forall c\in\bR^n\}.
\]

The next step is to define marginal vectors, usually defined through
permutations on $N$. The idea here is to take instead the sequence of selected
elements $p$ produced by MA, which is not necessarily a permutation, because
some elements of $N$ may be absent. Let us denote by $\Pi$ the set of all
possible sequences produced by MA, and consider a sequence $\pi\in \Pi$. Then
the marginal vector $x^\pi$ associated to $\pi$ is computed as follows: for each
$p\in \pi$, $x_p^\pi$ is the difference between $v(F)$ (where $F$ is the
smallest selected subset containing $p$) and $\sum_G v(G)$, where the sum is
running over all maximal subsets of $F$ belonging to the sequence. For each
$p\not\in \pi$, we put $x_p^\pi=0$. Clearly, the classical definition is
recovered if $\cF$ is regular, since in this case, the sequence of selected
subsets will form a maximal chain. 

We define the Weber set as
\[
   \cW(v) := \mathrm{conv} \{x^{\pi}\mid \pi\in \Pi\}.
\]
Then it is proved in \citet{fagrhe09} that 
$\core^+(v)\subseteq \cW(v)$. 

\medskip

The last step is to relate equality of the Weber set and the core to
convexity. This is done through the following definition.
A game $v$ on $\cF$ is \emph{convex} if $\hat{v}$ is concave, i.e., it satisfies
for all parameter vectors $c,d\in \bR^N$ and real scalars $0<t<1$,
$$
 t\hat{v}(c) +(1-t) \hat{v}(d)\;\leq\;  \hat{v}(t c + (1-t)d).
$$
\begin{theorem}
Assume $v$ is monotone. Then $v$ is convex if and only if $\core^+(v)=\cW(v)$.
\end{theorem}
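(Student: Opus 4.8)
The plan is to prove the equivalence by exploiting the two facts already established for the Monge setup: that $\hat v$ is the "concavification-type" extension satisfying $\hat v(\1_F)=v(F)$ and
\[
\core^+(v)=\{x\in\bR^n\mid \langle c,x\rangle\ge\hat v(c),\ \forall c\in\bR^n\},
\]
and that the marginal vectors $x^\pi$, $\pi\in\Pi$, produced by MA always satisfy $\core^+(v)\subseteq\cW(v)=\conv\{x^\pi\}$. The key structural observation I would start from is that, by construction of MA, for each input $c\in\bR^n$ the output value is $\hat v(c)=\langle c,x^{\pi(c)}\rangle$ where $\pi(c)$ is the sequence MA produces on input $c$; that is, $\hat v$ is the pointwise minimum over $\pi\in\Pi$ of the linear functions $c\mapsto\langle c,x^\pi\rangle$ on each region where MA makes a fixed set of choices, and more usefully, $\hat v(c)\le\langle c,x^\pi\rangle$ for every $\pi\in\Pi$ while equality holds for $\pi=\pi(c)$. (This last point — that $\hat v(c)=\min_\pi\langle c,x^\pi\rangle$ as a global identity, not merely along the MA-selected branch — is the step I expect to require the most care, and it is presumably the content of the results quoted from \citet{fagrhe09}; I would invoke it as such.)

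Granting that, the argument is short. For the direction "convex $\Rightarrow$ equality": convexity means $\hat v$ is concave; since $\hat v(c)=\min_{\pi\in\Pi}\langle c,x^\pi\rangle$, $\hat v$ is in any case a concave (in fact piecewise-linear concave) function, so concavity is automatic and the real content is elsewhere — I must show equality of sets. The inclusion $\core^+(v)\subseteq\cW(v)$ is already given. For the reverse, take a marginal vector $x^\pi$; I must check $\langle c,x^\pi\rangle\ge\hat v(c)$ for all $c$, which is exactly the inequality $\hat v(c)\le\langle c,x^\pi\rangle$ noted above, valid for every $\pi$. Hence $x^\pi\in\core^+(v)$ for every $\pi\in\Pi$, and since $\core^+(v)$ is convex (it is a polyhedron), $\cW(v)=\conv\{x^\pi\}\subseteq\core^+(v)$. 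So every marginal vector lying in the core already forces $\cW(v)\subseteq\core^+(v)$; combined with the reverse inclusion this gives equality. Here monotonicity of $v$ is used to guarantee that the $x^\pi$ are well-defined nonnegative vectors and that $\hat v(\1_F)=v(F)$ with the "$\ge$" half of the core inequalities behaving correctly — I would cite the relevant lemma of \citet{fagrhe09} rather than re-derive it.

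For the converse "equality $\Rightarrow$ convex": assume $\core^+(v)=\cW(v)$. I want to show $\hat v$ is concave. Using $\core^+(v)=\{x\mid\langle c,x\rangle\ge\hat v(c)\ \forall c\}$ together with the identity $\hat v(c)=\min_{\pi\in\Pi}\langle c,x^\pi\rangle$, one gets that $\hat v(c)=\min\{\langle c,x\rangle\mid x\in\cW(v)\}$: indeed "$\le$" holds because each $x^\pi\in\cW(v)$ and $\hat v(c)\le\langle c,x^\pi\rangle$, and taking the minimizing $\pi=\pi(c)$ gives a point of $\cW(v)$ attaining the value; "$\ge$" holds because every $x\in\cW(v)=\core^+(v)$ satisfies $\langle c,x\rangle\ge\hat v(c)$ by definition of the core. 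Thus $\hat v$ is the support-type function $c\mapsto\min_{x\in\cW(v)}\langle c,x\rangle$ of a polytope, hence concave and positively homogeneous, which is precisely the stated convexity condition ($t\hat v(c)+(1-t)\hat v(d)\le\hat v(tc+(1-t)d)$). The only delicate point is justifying $\hat v(c)=\min_{x\in\cW(v)}\langle c,x\rangle$ without already assuming concavity — but that follows purely from the two displayed facts (the core/Weber inclusion and the dual description of $\core^+$) plus the equality hypothesis, as just sketched. I would therefore present the proof in this order: recall $\hat v(c)=\min_\pi\langle c,x^\pi\rangle$ and the dual description of $\core^+(v)$; derive $x^\pi\in\core^+(v)$ for all $\pi$, giving the forward direction via convexity of the core; then for the converse derive $\hat v=\min_{x\in\cW(v)}\langle c,\cdot\rangle$ from the equality hypothesis and read off concavity. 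The main obstacle, to reiterate, is the global minimax identity for $\hat v$ in terms of the marginal vectors, which underpins everything and which I would take from \citet{fagrhe09}.
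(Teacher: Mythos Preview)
The paper does not prove this theorem; it merely cites \citet{fagrhe09}. So there is no ``paper's own proof'' to compare against, and the question reduces to whether your argument is sound.

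There is a genuine gap. Your whole proof rests on the claim that
\[
\hat v(c)=\min_{\pi\in\Pi}\langle c,x^\pi\rangle\qquad\text{for every }c\in\bR^n,
\]
i.e.\ $\hat v(c)\le\langle c,x^\pi\rangle$ for \emph{all} $\pi$, with equality at $\pi=\pi(c)$. You correctly flag this as the delicate step, but then you assume it holds unconditionally and even remark that ``concavity is automatic''. That should have been the alarm bell: if the identity held for every game, then $\hat v$ would always be a minimum of linear forms, hence concave, and \emph{every} monotone game would be convex in the sense of this section --- which is absurd (already in the classical case $\cF=2^N$ the Lov\'asz extension is concave iff the game is supermodular). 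What the Monge algorithm actually gives is only
\[
\hat v(c)=\langle c,x^{\pi(c)}\rangle,
\]
so $\hat v(c)$ is one of the values $\langle c,x^\pi\rangle$, not necessarily the smallest. Your forward direction therefore collapses: you cannot conclude $x^\pi\in\core^+(v)$ from an inequality you do not have.

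The repair is standard but uses the convexity hypothesis in an essential way. For the forward direction, observe that $\hat v$ is positively homogeneous, so concavity of $\hat v$ means that for each $c_0$ the vector $x^{\pi(c_0)}$ (the gradient of $\hat v$ on the cell containing $c_0$) is a supergradient of $\hat v$ at $0$: for every $c$,
\[
\hat v(c)\le \hat v(c_0)+\langle c-c_0,x^{\pi(c_0)}\rangle=\langle c,x^{\pi(c_0)}\rangle.
\]
This is exactly $x^{\pi(c_0)}\in\core^+(v)$; since every $\pi\in\Pi$ arises as $\pi(c_0)$ for some $c_0$ by definition of $\Pi$, all marginal vectors lie in the core and $\cW(v)\subseteq\core^+(v)$. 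For the converse your idea is essentially right, but the order of deductions must change: from $\cW(v)=\core^+(v)$ you first get $x^\pi\in\core^+(v)$ for all $\pi$, hence $\langle c,x^\pi\rangle\ge\hat v(c)$ for all $c,\pi$; combined with $\hat v(c)=\langle c,x^{\pi(c)}\rangle$ this \emph{now} yields the min-identity, and concavity follows. In short, the min-identity is a \emph{consequence} of convexity (or of the equality hypothesis), not a tool you may invoke beforehand.
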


The above definition of convexity is done through the extension
$\hat{v}$. However, it is possible to relate it directly to $v$.  
A game $v$ on $\cF$ is \emph{strongly monotone} if for any $F\in \cF$ and
pairwise disjoint feasible sets $G_1, \ldots, G_f\in \cF(F)$ we have
$$
 \sum_{\ell=1}^f v(G_{\ell}) \;\leq\; v(F).
$$
For any intersecting $F, F'\in\cF$ we put
\[
v(F\cap F')  := \sum\{v(G)\mid G\in \cF(F\cap F')\text{ maximal}\}
\]
A game $v$ on $\cF$ is \emph{supermodular} if for all intersecting $F,F'\in\cF$,
we have
\[
v(F\cup F') + v(F\cap F') \geq v(F) + (F').
\] 
\begin{theorem}
A game $v$ is convex if and only if $v$ is strongly monotone and supermodular.
\end{theorem}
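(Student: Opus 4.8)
The plan is to prove the equivalence ``$v$ convex $\iff$ $v$ strongly monotone and supermodular'' by relating both sides to the concavity of the Monge extension $\hat v$, using the structural facts about the Monge algorithm (MA) already recalled in the excerpt. Recall that $v$ is called convex precisely when $\hat v$ is concave, and that $\hat v(\1_F)=v(F)$ for $F\in\cF$; moreover $\core^+(v)=\{x\mid\langle c,x\rangle\ge\hat v(c)\ \forall c\}$, so $\hat v$ is the support function of $\core^+(v)$ and is automatically positively homogeneous. The two defining inequalities on the right-hand side are nothing but $\hat v$-concavity tested on very special pairs of input vectors: strong monotonicity is concavity (together with homogeneity) applied to indicator-type inputs $\1_{G_1}+\cdots+\1_{G_f}$ versus $\1_F$, and supermodularity is concavity applied to $\1_F$ and $\1_{F'}$ for intersecting $F,F'$ — here one uses that, by weak-union structure, $\1_F+\1_{F'}=\1_{F\cup F'}+\1_{F\cap F'}$ on the support, with $\hat v(\1_{F\cap F'})=v(F\cap F')$ by definition of the latter as $\sum\{v(G)\}$ over maximal $G\in\cF(F\cap F')$. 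So the direction ``convex $\Rightarrow$ strongly monotone and supermodular'' is immediate once one checks that MA evaluates $\hat v$ on these particular inputs to the claimed combinations of $v$-values; that is a matter of running the algorithm on $0/1$ and small integer vectors and is routine.

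The substantive direction is the converse: assuming $v$ strongly monotone and supermodular, show $\hat v$ is concave on all of $\bR^n$. The plan is to exploit the piecewise-linear structure of $\hat v$: the Monge algorithm partitions $\bR^n$ into finitely many polyhedral cones (determined by the ordering pattern of the coordinates of the running vector $\gamma$ and the tie-breaking rule), on each of which $\hat v$ is linear, equal to $\langle x^\pi,\cdot\rangle$ for the marginal vector $x^\pi$ of the sequence $\pi$ produced on that cone. A positively homogeneous function that is linear on each cone of a complete fan is concave if and only if it is ``superadditive across adjacent cones'', i.e. for adjacent full-dimensional cones with marginal vectors $x^\pi,x^{\pi'}$ one has $\langle x^\pi - x^{\pi'}, c\rangle$ of the correct sign on the common wall, equivalently $x^\pi$ is a subgradient of $\hat v$ everywhere. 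Concretely I would show $\langle x^\pi,c\rangle\ge\hat v(c)$ for every sequence $\pi\in\Pi$ and every $c\in\bR^n$; since $\hat v(c)=\langle x^{\pi(c)},c\rangle$ for the sequence actually produced on $c$, this gives $\hat v(c)=\min_\pi\langle x^\pi,c\rangle$, a minimum of linear functions, hence concave. Establishing $\langle x^\pi,c\rangle\ge\hat v(c)$ is exactly the statement that each $x^\pi$ lies in $\core^+(v)$ — which in turn is what the inclusion $\core^+(v)\supseteq\cW(v)$ would give, but here we must prove membership directly from strong monotonicity and supermodularity rather than assume convexity.

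Hence the heart of the argument is: \emph{strong monotonicity and supermodularity imply every marginal vector $x^\pi$ belongs to $\core^+(v)$}, i.e. $x^\pi_i\ge 0$ for all $i$, $x^\pi(N)=\hat v(\1_N)$, and $x^\pi(S)\ge v(S)$ for all $S\in\cF$. Nonnegativity and the efficiency-type normalization follow from strong monotonicity (comparing $v(F)$ to the sum of $v$ over the maximal feasible subsets selected below $F$). The inequality $x^\pi(S)\ge v(S)$ for $S\in\cF$ is the crux: I would argue by induction along the sequence of subsets selected by MA when run on input $c=\1_S$ (or on a perturbation placing $S$ among the early selected sets), repeatedly applying supermodularity in the form $v(F\cup F')+v(F\cap F')\ge v(F)+v(F')$ to ``uncross'' $S$ against the selected chain of feasible sets, and strong monotonicity to handle the branching of $\cF(F\cap F')$ into its pairwise disjoint maximal members. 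This uncrossing induction — keeping track of which selected subsets intersect $S$ and telescoping the resulting inequalities — is exactly the classical greedy-algorithm / Monge argument specialized to the weakly-union-closed set-system setting, and it is where all the work lies; the rest of the proof is bookkeeping about MA and the elementary fan-concavity criterion above.
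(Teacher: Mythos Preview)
The paper does not actually contain a proof of this theorem: the surrounding section explicitly says ``We refer the reader to \citet{fagrhe09} for details and proofs,'' and the theorem is stated without argument. So there is no in-paper proof to compare your proposal against.

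That said, your overall architecture is the right one and matches the approach of the cited source: the forward direction extracts strong monotonicity and supermodularity by evaluating the concave, positively homogeneous $\hat v$ on suitable indicator combinations, and the substantive converse is obtained by proving that strong monotonicity together with supermodularity forces every marginal vector $x^\pi$ into $\core^+(v)$, whence $\hat v(c)=\langle x^{\pi(c)},c\rangle=\min_{\pi\in\Pi}\langle x^\pi,c\rangle$ is an infimum of linear forms and therefore concave. The ``uncrossing plus strong monotonicity'' induction you describe for the inequality $x^\pi(S)\ge v(S)$ is exactly the mechanism used.

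One point of exposition to tighten: your sentence ``so $\hat v$ is the support function of $\core^+(v)$ and is automatically positively homogeneous'' is partly circular. Positive homogeneity is fine --- it follows directly from the Monge algorithm (scaling $c$ by $\lambda>0$ scales the recorded $\gamma_p$'s, hence $y$, by $\lambda$). But the identification of $\hat v$ with the support function of $\core^+(v)$ is precisely what concavity of $\hat v$ amounts to; from $\core^+(v)=\{x\mid\langle c,x\rangle\ge\hat v(c)\ \forall c\}$ one only gets that the support function dominates $\hat v$, with equality exactly when $\hat v$ is already concave. This does not damage the argument (you never actually use the support-function claim), but it should not be invoked as a premise. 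Similarly, in the forward direction, getting strong monotonicity from superadditivity of $\hat v$ requires handling the ``remainder'' $\hat v(\1_{F\setminus\bigcup G_\ell})$; this is where the nonnegativity encoded in $\core^+$ enters, and it deserves an explicit line rather than being folded into ``routine.''
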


\subsection{Extension of $v$ on $2^N$}\label{sec:ext}
Let $\cF$ be weakly union-closed, and $v$ be a game on $\cF$. We introduce an
extension of $v$ on $2^N$ as follows:
\[
\overline{v}(S) = \sum_{T\text{ maximal in } \cF(S)}v(T),\quad \forall S\subseteq N.
\]
\begin{remark}\label{rem:cupcap}
\begin{enumerate}
\item This way of extending a game on $2^N$ appears in
  many different works. For communication graphs, \citet{mye77a} used it
  for computing $v_G(S)$, the extension on $2^N$ of a game $v$ on $G$, for any
  $S\subseteq N$, decomposing $S$ into its (maximal) connected components (in
  this context, see also \citet{owe86}, \citet{boowti92}, \citet{pore95}). It
  can be found also in \citet[\S 5.2]{bil00} with $\cF$ a partition
  system, under the name of \emph{$\cF$-restricted game}, and in
   \citet{bior08a}. In general, it is considered in all the literature on
  communication graphs. This extension has been studied by \citet{fagr09}, and
  arises naturally as the output of the Monge algorithm described in
  Section~\ref{sec:fagr} (see (ii) and (iii) below).
\item Even if $v$ is monotone, $\overline{v}$ need not be monotone. If $v$ is
  monotone, it is not the smallest
extension of $v$ (for this replace $\sum$ by $\max$ in the above equation). If
$\cF$ is union-closed, then $\overline{v}$ is the smallest extension and preserves
monotonicity of $v$ \citep{fagr09}.
\item $\overline{v}$ is given by the Monge algorithm, i.e., $\overline{v}(S)=\hat{v}(\1_S)$ for
  all $S\in 2^N$.
\item The M\"obius transform (see Section~\ref{sec:setsys}) of $\overline{v}$
  vanishes for all $S$ not in $\cF$ (easy fact, remarked by \citet{owe86}). More
  precisely:
\[
m^{\overline{v}}(S) = \begin{cases}
  m^v(S), & \text{for all } S\in\cF\\
  0, & \text{otherwise}
  \end{cases}
\]
where $m^v$ is the M\"obius transform of $v$ on $\cF$. 
\item For any game $v$, we always have
  $\core^+(v)=\core^+(\overline{v})$. Indeed, the inclusion of
  $\core^+(\overline{v})$ in $\core^+(v)$ is obvious. Conversely, assume that
  $x\in \core^+(v)$ and take any $F\not\in \cF$. Then
  $\overline{v}(F)=\sum_{T\text{ maximal in }\cF(F)}v(T)$. We have $x(T)\geq
  v(T)$ for all $T$ maximal in $\cF(F)$. Therefore, since these $T$'s are
  disjoint and $x$ is nonnegative, we find $x(F)\geq
  \overline{v}(F)$. Adapting the previous argument,
  $\core(v)=\core(\overline{v})$ holds provided all singletons belongs to
  $\cF$. Then the maximal sets in   $\cF(F)$ form a partition of $F$ ($\cF$ is a
  partition system).
\end{enumerate}
\end{remark}
\citet{fagr09} have proved the following.
\begin{theorem}
Assume $\cF$ is union-closed, and $v$ is a game on $\cF$. Then $v$ is
supermodular on $\cF$ (in the sense of Section~\ref{sec:fagr}) if and only if
$\overline{v}$ is supermodular on $2^N$.
\end{theorem}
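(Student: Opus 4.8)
The plan is to prove the two implications separately, using the characterization of supermodularity in terms of the $D_i$-type sets and the fact that $\overline{v}$ has M\"obius transform supported on $\cF$ (Remark~\ref{rem:cupcap}(iv)). First I would recall that since $\cF$ is union-closed, it is in particular weakly union-closed, so $\overline{v}$ is well-defined, and moreover (by Remark~\ref{rem:cupcap}(ii)) $\overline{v}$ is the smallest monotone extension when $v$ is monotone; but here the cleanest tool is the M\"obius-transform identity $m^{\overline{v}}(S)=m^v(S)$ for $S\in\cF$ and $0$ otherwise. Supermodularity of a game $w$ on $2^N$ is equivalent to nonnegativity of its ``second differences'' $w(S\cup ij)-w(S\cup i)-w(S\cup j)+w(S)\geq 0$ for all $S$ and $i,j\notin S$, and this can be re-expressed through the M\"obius transform as $\sum_{T\subseteq S}m^w(T\cup ij)\geq 0$ for all such $S,i,j$. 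The strategy is to translate both ``$v$ supermodular on $\cF$'' and ``$\overline{v}$ supermodular on $2^N$'' into statements purely about $m^v$ on $\cF$, and show they coincide.

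Concretely, for the ``only if'' direction I would start from $v$ supermodular on $\cF$ in the sense of Section~\ref{sec:fagr}: for all intersecting $F,F'\in\cF$, $v(F\cup F')+v(F\cap F')\geq v(F)+v(F')$, where $v(F\cap F')$ denotes $\sum\{v(G)\mid G\in\cF(F\cap F')\text{ maximal}\}$. Since $\cF$ is union-closed, $F\cup F'\in\cF$, and I would want to expand everything via M\"obius inversion. The key observation is that $\overline{v}(S)=\sum_{T\in\cF,\,T\subseteq S}m^v(T)$ for every $S\subseteq 2^N$, because $m^{\overline{v}}$ vanishes off $\cF$; so the second difference of $\overline{v}$ at $(S;i,j)$ equals $\sum\{m^v(T)\mid T\in\cF,\ T\subseteq S\cup ij,\ i\in T,\ j\in T\}$. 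I would then need to show this sum is nonnegative exactly when $v$ is supermodular on $\cF$. One natural route: pick $F,F'\in\cF$ both containing $i$ and $j$ and intersecting, set $S=(F\cup F')\setminus\{i,j\}$ — but more carefully, one uses that the join-irreducible structure of the union-closed family lets one realize any ``minimal'' feasible set containing two given elements, and the supermodular inequality on the right pair encodes precisely the positivity of the relevant M\"obius sum. The cleanest packaging is probably to cite the equivalence (valid on any lattice, in particular $2^N$) between supermodularity and nonnegativity of all ``interval'' M\"obius sums $\sum_{T\in[\![A,B]\!]}m(T)\geq 0$ over intervals $[\![A,B]\!]$ with $B\setminus A$ of size $2$, and then note that for $\overline{v}$ these intervals only see feasible $T$, reducing to the same condition restricted to $\cF$.

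For the converse (``if'' direction), assuming $\overline{v}$ is supermodular on $2^N$, I would restrict the supermodular inequality $\overline{v}(A\cup B)+\overline{v}(A\cap B)\geq\overline{v}(A)+\overline{v}(B)$ to $A=F$, $B=F'$ with $F,F'\in\cF$ intersecting. Then $\overline{v}(F)=v(F)$ and $\overline{v}(F')=v(F')$ since $F,F'$ are themselves feasible; $\overline{v}(F\cup F')=v(F\cup F')$ because $\cF$ is union-closed so $F\cup F'\in\cF$; and $\overline{v}(F\cap F')=\sum_{T\text{ maximal in }\cF(F\cap F')}v(T)$ is exactly the quantity denoted $v(F\cap F')$ in the Section~\ref{sec:fagr} definition (this uses weak union-closure to know the maximal feasible subsets of $F\cap F'$ are pairwise disjoint). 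Substituting gives precisely the supermodularity of $v$ on $\cF$, so this direction is essentially a matter of unwinding definitions.

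The main obstacle, and where I would spend the most care, is the ``only if'' direction — specifically, bridging between the lattice-theoretic supermodularity of $\overline{v}$ on all of $2^N$ (which quantifies over all pairs $A,B\subseteq N$, including infeasible ones) and the weaker-looking condition on $\cF$ (only intersecting feasible pairs, with the $F\cap F'$ surrogate). The risk is that supermodularity on $\cF$ might not obviously force the second-difference condition at arbitrary $S\subseteq N$. I expect the resolution to go through the M\"obius transform: because $m^{\overline{v}}$ is supported on $\cF$, the problematic second differences at infeasible $S$ automatically collapse to sums of $m^v$ over feasible sets, and these are controlled by the feasible supermodular inequalities via the standard interval-sum characterization. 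Making that reduction precise — in particular verifying that every relevant nonneg\-ativity-of-M\"obius-sum condition is generated by the $v(F\cup F')+v(F\cap F')\geq v(F)+v(F')$ inequalities over intersecting feasible pairs, using union-closure — is the crux, and I would lean on \citet{fagr09} for the technical details rather than reproving the combinatorial identities here.
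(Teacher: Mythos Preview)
The paper is a survey and gives no proof of this theorem at all; it merely attributes the result to \citet{fagr09}. So there is no ``paper's own proof'' to compare against beyond that citation.

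That said, a few comments on your proposal. Your ``if'' direction is correct and complete: restricting the $2^N$-supermodularity of $\overline{v}$ to intersecting $F,F'\in\cF$ and unwinding the definitions (using union-closure so that $F\cup F'\in\cF$, and noting that $\overline{v}(F\cap F')$ is exactly the surrogate $v(F\cap F')$ of Section~\ref{sec:fagr}) gives the desired inequality immediately.

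For the ``only if'' direction you may be overcomplicating things slightly. Because $\cF$ is genuinely \emph{union}-closed (not just weakly), $\cF(S)$ has a \emph{unique} maximal element $\hat S:=\bigcup\cF(S)\in\cF$ for every $S\subseteq N$, so $\overline{v}(S)=v(\hat S)$. This closure operator $S\mapsto\hat S$ satisfies $\widehat{A\cup B}=\hat A\cup\hat B$ whenever $\hat A,\hat B$ intersect (since $\hat A\cup\hat B\in\cF$ and any $F\in\cF$ with $F\subseteq A\cup B$ can be split and absorbed), which lets one reduce the $2^N$-supermodularity of $\overline{v}$ rather directly to the $\cF$-supermodularity of $v$, without going through the full M\"obius second-difference machinery. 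Your M\"obius route is not wrong, but the step you flag as the crux --- showing that nonnegativity of the sums $\sum\{m^v(T)\mid T\in\cF,\ i,j\in T\subseteq S\cup ij\}$ is \emph{generated} by the intersecting-pair inequalities --- is genuinely nontrivial and you do not carry it out; you defer to \citet{fagr09}, which is exactly what the paper does. So in effect your proposal is: one easy direction done, one hard direction cited --- which is more than the paper offers, but still not a self-contained proof.
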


\section{Games on communication graphs}\label{sec:graph}

\subsection{General definitions}
Consider a (undirected) graph $G=(N,E)$, where the vertices are players, and $E$ is
  the set of links. A link between $i,j$ exists if these players can communicate
  or are friends.  Two players are \emph{connected} if there exists a path
  between them.  A \emph{connected coalition} is a subset of $N$ where any two
  players are connected. The set of connected coalitions is denoted by
  $\mathcal{C}_E(N)$.  Maximal connected coalitions of $G$ are called
  \emph{connected components of $G$}, and they partition $N$. The set of
  connected components of $G$ is denoted by $N/E$.  Any coalition $S\subseteq
  N$, even if not connected, can be partitioned into maximal connected
  coalitions (i.e., connected components of the subgraph induced by $S$). The set of
  connected components of $S$ is denoted by $S/E$. This is the framework defined
  by \citet{mye77}.
\begin{remark}
\begin{enumerate}
\item As said in Section~\ref{sec:aug}, set collections induced by communication
graphs are exactly augmenting systems containing all singletons. If the graph is
connected, then they are regular set systems containing all singletons (the
converse is false). Recall also from Section~\ref{sec:weucl} the
characterization of van den Brink, and that these set collections are weakly
union-closed.
\item A generalization of communication graphs is done through conference
  structures of Myerson, or equivalently through hypergraphs (see
  Section~\ref{sec:weucl}).
\end{enumerate}
\end{remark}

A \emph{game on the graph $G=(N,E)$} is a TU-game on $\mathcal{C}_E(N)$ (i.e.,
  it is a game on the collection of feasible coalitions $\cF=\cC_E(N)$). From
  $v$ we define the \emph{extended game} $v_G$ on $2^N$ as follows (see
  Section~\ref{sec:ext}; called \emph{point game} by \citet{boowti92})
\[
v_G(S) = \sum_{T\in S/E}v(T), \forall S\subseteq N.
\]

Since a communication graph may contain several connected components, and
recalling Remark~\ref{rem:augm} (ii), a natural adaptation for the definition of
the core is as follows:
\[
\core(v) := \{x\in\mathbb{R}^n\mid x(C)=v(C),
    \forall C\in N/E, \text{ and }x(S)\geq v(S), \forall
    S\in\mathcal{C}_E(N)\}.
\]
This definition was considered, among others, by
\citet{dem94,dem04}. 
As it is easy to show,  $\core(v)=\core(v_G)$,  which proves that when
nonempty the core is a polytope.
\begin{remark}
Note that if the graph is connected, we recover the definition of the previous
sections, hence all general properties given in Section~\ref{sec:gene} apply.
Concerning the positive core, results in Section~\ref{sec:fagr} apply under the
same condition. Using again Remark~\ref{rem:augm} (ii), the above definition of
the core amounts to take the intersection of all cores on the subsystems
induced by the connected components of $G$.
\end{remark}

We consider below the main families of communication graph, most useful in applications.

\subsection{Communication line-graphs}
Let us assume that the players are ordered according to the natural ordering
$1,\ldots,n$, and consider the set of edges connecting two adjacent players:
$E_0=\{(i,i+1),i=1,\ldots,n-1\}$. Then $G=(N,E)$ is a \emph{line-graph} if $E\subseteq
E_0$, i.e., only some adjacent players can communicate. For convenience, we
introduce the notation $[i,j]:=\{i,i+1,\ldots,j\}$ for $i<j$ in $N$. 

These line-graphs often arise in applications, e.g., water distribution problem
along a river \citep{amsp02}, and auctions situations \citep{grmari90}, and have
been studied by \citet{brlava07}. They show that a sufficient condition for the
nonemptiness of the core is \emph{linear convexity}:
\[
v([i,j]) - v([i+1,j]) - v([i,j-1]) + v([i+1,j-1])\geq 0
\]
for all $[i,j]\in\cC(E)_N$.

\citet{bri09} has characterized communication line-graphs in terms of the
associated set system as follows.
\begin{theorem}
A collection $\cF\subseteq 2^N$ is the set of connected coalitions of a
line-graph if and only if $\cF\ni\emptyset$, $\cF$ is normal (i.e.,
$\bigcup\cF=N$), weakly union-closed, satisfies 2-accessibility (see
Remark~\ref{rem:wuc} (ii)) and \emph{path union stability}.
\end{theorem}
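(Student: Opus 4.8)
The plan is to prove the two implications separately; the reverse (sufficiency) direction carries essentially all of the work. For necessity, suppose $\cF=\cC_E(N)$ for a line-graph $G=(N,E)$, i.e.\ $E\subseteq E_0$. Then the connected coalitions are exactly $\emptyset$ together with the intervals $[i,j]$ ($i\le j$) such that $\{k,k+1\}\in E$ for every $i\le k<j$, and I would simply read off the five properties from this explicit description: $\emptyset\in\cF$ and $\{i\}=[i,i]\in\cF$ for all $i$, so $\bigcup\cF=N$; the union of two overlapping intervals of this form is again such an interval, giving weak union-closure; $[i,j]\setminus i=[i+1,j]\in\cF$ and $[i,j]\setminus j=[i,j-1]\in\cF$ give 2-accessibility; and path union stability is checked against the same description. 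This last verification is the only one that genuinely uses that $G$ is a line-graph rather than an arbitrary communication graph: it records exactly the ``no branching, no cycle'' shape of sub-paths of $[1,n]$.

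For sufficiency, assume $\cF$ satisfies the five properties. Define $E:=\{\{i,j\}\mid\{i,j\}\in\cF\}$, the set of two-element members of $\cF$, and put $G=(N,E)$. The first step is to show $\cC_E(N)=\cF$: since $\cF$ contains $\emptyset$, is normal, weakly union-closed and 2-accessible, this is precisely the characterization of communication-graph collections due to \citet{bri09} recalled in Remark~\ref{rem:wuc}(ii); I would invoke it and check that the graph it produces coincides with the $G$ defined above (the edges of a communication graph are exactly its connected two-element coalitions). In particular $\cF$ then contains all singletons, and the maximal elements of $\cF$ — pairwise disjoint by weak union-closure — are the connected components $N/E$ of $G$.

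The crux is the remaining step: showing that $G$ admits a drawing as a line-graph, i.e.\ that some bijection $N\to\{1,\dots,n\}$ carries $E$ into $E_0$. Equivalently, $G$ must be a disjoint union of simple paths, which is to say every vertex of $G$ has degree at most $2$ and $G$ is acyclic. I expect both conclusions to follow from path union stability: a vertex $i$ of degree $\ge 3$ yields three connected coalitions around $i$ (the edges through $i$, or slightly enlarged connected sets) whose union pattern is forbidden, and a cycle likewise produces connected coalitions whose unions violate the property. Once $G$ is known to be a disjoint union of paths $P_1,\dots,P_r$ (its components, hence the blocks of $N/E$), I would order the vertices of each $P_t$ consecutively along the path and concatenate these orders into the desired bijection; under it $E\subseteq E_0$, so $G$ is a line-graph, and $\cC_E(N)=\cF$ has already been established, completing the proof.

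The main obstacle is the translation of path union stability into the purely graph-theoretic statement ``no vertex of degree $\ge 3$ and no cycle''. I would isolate this as a lemma (``a connected graph whose collection of connected coalitions is path-union-stable is a simple path'') and prove it by a short case analysis on small connected coalitions around a putative branch vertex or around a cycle. Beyond that, some care is needed with degenerate components (isolated vertices, single edges) and with verifying that the concatenated vertex order creates no spurious edges across components, but these are routine once the lemma is in hand; the rest of the argument is either a direct verification on intervals or an appeal to the already-known characterization of communication graphs.
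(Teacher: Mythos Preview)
The paper does not contain a proof of this theorem: it is a survey, and the result is simply stated with attribution to \citet{bri09}. There is therefore no argument in the paper against which to compare your proposal.

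Your outline is nonetheless sound and follows what one would expect to be the intended route. The necessity direction is a direct verification on intervals, and for sufficiency the two-step plan---first invoke the general characterization of communication-graph collections (Remark~\ref{rem:wuc}(ii)) to obtain a graph $G$ with $\cC_E(N)=\cF$, then use path union stability to force each component of $G$ to be a simple path---is natural and goes through. The only place requiring care is the case analysis in your lemma. It is cleanest to exclude cycles first: an induced cycle $v_1\text{--}\cdots\text{--}v_k\text{--}v_1$ with $k\ge 3$ gives two overlapping paths in $\cF$, namely $\{v_1,\ldots,v_{k-1}\}$ and $\{v_2,\ldots,v_k\}$, whose union is the whole cycle, in which all $k$ vertices are extreme. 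Once $G$ is a forest, any branch vertex $v$ of degree $\ge 3$ has three pairwise non-adjacent neighbours $a,b,c$, and then $\{v,a,b\}$ and $\{v,c\}$ are paths in $\cF$ whose union $\{v,a,b,c\}$ has the three extreme players $a,b,c$; so the star argument applies without further subcases.
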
 
To explain the last property, we need some definitions. Let $\emptyset\neq
S\in\cF$ and $i\in S$. Then $i$ is an \emph{extreme player} in $S$ if
$S\setminus i\in \cF$. Now, $S$ is a \emph{path} in $\cF$ if it has exactly two
extreme players. The name comes from the fact that a path in $\cF$ corresponds
to a path in the graph (although the converse is false). Path union stability
means that the union of two nondisjoint paths in $\cF$ is still a path in
$\cF$. 

\subsection{Cycle-free communication graphs}
A graph is \emph{cyle-free} if it contains no cycle, in the usual sense of graph
theory.  \citet{browwe92} have characterized this property by what they call
\emph{strong balancedness}: the collection $\cF$ of connected coalitions is
strongly balanced if every balanced collection contains a partition of $N$.

Another characterization is due to \citet{bri09}.
\begin{theorem}
A collection $\cF\subseteq 2^N$ is the set of connected coalitions of a
cycle-free graph if and only if $\cF\ni\emptyset$, $\cF$ is normal (i.e.,
$\bigcup\cF=N$), weakly union-closed, satisfies 2-accessibility (see
Remark~\ref{rem:wuc} (ii)) and \emph{weak path union stability}.
\end{theorem}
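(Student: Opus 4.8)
The plan is to establish the two implications by mimicking, and slightly weakening, the argument used for the line-graph characterization above. For \emph{necessity}, suppose $\cF=\cC_E(N)$ for a cycle-free graph $G=(N,E)$ and verify the five properties in turn. That $\emptyset\in\cF$ is the usual convention, normality holds because every singleton is a connected coalition, and weak union-closure is the elementary fact, already noted in Section~\ref{sec:weucl} and Section~\ref{sec:graph}, that the union of two intersecting connected coalitions is connected. For the last two properties I would use that $G[S]$ is a tree for every $S\in\cF$: a tree on at least two vertices has at least two leaves and deleting a leaf preserves connectedness, giving 2-accessibility; and two subtrees of a tree always overlap in a subtree, so two paths of $\cF$ overlapping in the ``end-to-end'' pattern demanded by weak path union stability amalgamate into a single path. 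This last verification is the exact analogue of, and is implied by, the corresponding step in the proof of the line-graph theorem.

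For \emph{sufficiency}, I would realize $\cF$ by the graph whose edges are the feasible pairs,
\[
E:=\{\{i,j\}\mid i\neq j,\ \{i,j\}\in\cF\},\qquad G:=(N,E),
\]
and argue in two stages. The first stage, $\cC_E(N)=\cF$, needs only the properties other than weak path union stability, which by Remark~\ref{rem:wuc}(ii) are exactly van den Brink's characterization of the collections arising as the connected coalitions of \emph{some} graph; it then suffices to check that this particular $G$ works. Here I would first note that all singletons lie in $\cF$ (take a minimal-size member of $\cF$ containing a given $i$ and use 2-accessibility to contradict minimality if its size exceeds $1$), then show by induction on $|S|$ that $S\in\cF$ implies $S$ is connected in $G$ (for $|S|\geq 2$ write $S=(S\setminus a)\cup(S\setminus b)$ with $S\setminus a,S\setminus b\in\cF$ of smaller size and nonempty intersection, via 2-accessibility), and conversely that every $G$-connected set lies in $\cF$, by building it up one vertex at a time along edges and applying weak union-closure to the current set and the feasible pair $\{i,j\}$.

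The second, decisive stage is to show $G$ is cycle-free. Supposing it is not, take a \emph{shortest} cycle $C=(i_1,\dots,i_m)$, $m\geq 3$; being shortest it has no chord, so the only edges of $G$ inside $V(C)$ are the cycle edges. Put $S:=\{i_1,\dots,i_m\}$. Then $S\in\cF$ since $S$ is $G$-connected, and deleting any single vertex of a chordless cycle leaves a path, so $S\setminus i\in\cF$ for every $i\in S$ and $S$ has $m\geq 3$ extreme players, hence is \emph{not} a path of $\cF$. On the other hand $P_1:=S\setminus i_1$ and $P_2:=S\setminus i_2$ induce paths in $G$, so they lie in $\cF$, they are paths of $\cF$ (each having exactly its two graph-endpoints as extreme players, by chordlessness), and $P_1\cap P_2=\{i_3,\dots,i_m\}$ is nonempty and overlaps $P_1$ and $P_2$ precisely end-to-end; weak path union stability then forces $P_1\cup P_2=S$ to be a path of $\cF$, a contradiction. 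Hence $G$ is cycle-free and $\cF=\cC_E(N)$.

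The main obstacle, exactly as in the line-graph case, lies in the ``path'' bookkeeping: one must pin down the precise form of the weak path union stability axiom, check in the necessity direction that it really does hold in every forest (the delicate point being the case where the overlap of the two paths shrinks to a single common endpoint, e.g.\ at a branch vertex), and check in the sufficiency direction that the pair $(P_1,P_2)$ extracted from a shortest cycle is genuinely in the configuration the axiom constrains --- both when $m=3$, where $P_1\cap P_2$ is a single vertex, and when $m\geq 4$, where it is a longer path. Everything else is a convention, an elementary connectivity fact, or a direct appeal to the already-established characterization of arbitrary communication graphs.
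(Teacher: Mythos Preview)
The paper itself gives no proof of this result; it is simply quoted from \citet{bri09}, so there is nothing to compare your approach against. On the merits, your sufficiency argument has a real gap. For a shortest (hence chordless) cycle $i_1i_2\cdots i_m i_1$ with $m\geq 4$, the paths $P_1=S\setminus i_1$ and $P_2=S\setminus i_2$ have extreme-player sets $\{i_2,i_m\}$ and $\{i_1,i_3\}$ respectively, and these are \emph{disjoint}. Since weak path union stability, as the survey defines it, constrains only pairs of nondisjoint paths with a \emph{common} extreme player, your pair simply does not trigger the axiom and no contradiction follows. This is easily repaired: take instead $P_1=\{i_1,i_2\}$ and $P_2=S\setminus i_1=\{i_2,\dots,i_m\}$; both are $\cF$-paths, they share the extreme player $i_2$, they intersect, and their union is $S$, which has $m\geq 3$ extreme players and is therefore not a path.

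The necessity direction is more troubling, and the worry you flag in your last paragraph is in fact fatal under the paper's literal definition. On the $Y$-tree with center $o$ and leaves $a,b,c$, the sets $P_1=\{a,o,b\}$ and $P_2=\{b,o,c\}$ are $\cF$-paths (each has exactly two extreme players), they intersect in $\{o,b\}$, and they share the extreme player $b$; yet $P_1\cup P_2=\{a,b,c,o\}$ has three extreme players $a,b,c$ and is not a path. Thus weak path union stability, read as the survey states it, \emph{fails} for this cycle-free graph, so the ``only if'' direction cannot hold as written. This strongly suggests the survey's one-line paraphrase of van den Brink's axiom is imprecise (the original presumably imposes an additional restriction on how the two paths may overlap); you would need to consult \citet{bri09} to pin down the exact hypothesis before either direction can be carried through cleanly.
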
 
Weak path union stability means that path union stability is required only for
those pairs of paths having a common extreme player.

An important particular case of cycle-free communication graph is the case of
connected graphs. Then the graph is called a \emph{tree}. Games on trees have
been studied by many authors, among them \citet{dem94,dem04}, \citet{helata08},
\citet{khm09}, \citet{babereso08} and \citet{bereso09}. However, most of these
works are more concerned with single-valued solution (as the average tree
solution of \citet{helata08}) than the core (see however \citet{bereso12}).

\subsection{Cycle-complete communication graphs}
A communication graph is \emph{cycle-complete}  if for each cycle of the graph,
the subgraph induced by the players in that cycle is complete (i.e., each player
is connected to every player in the cycle). 

\citet{nobo91} have studied this kind of communication graph. They have shown
that if the game $v$ is convex (assuming $v$ is defined on $2^N$, unlike our
assumption), then $v_G$ is also convex.

\citet{bri09} has characterized cycle-complete communication graphs as follows.
\begin{theorem}
A collection $\cF\subseteq 2^N$ is the set of connected coalitions of a
cycle-free graph if and only if $\cF\ni\emptyset$, $\cF$ is normal (i.e.,
$\bigcup\cF=N$), weakly union-closed, satisfies 2-accessibility (see
Remark~\ref{rem:wuc} (ii)) and the \emph{path property}.
\end{theorem}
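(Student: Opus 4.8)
The plan is to use the standard two-part template for such characterizations, leaning on van den Brink's base result for arbitrary communication graphs recalled in Remark~\ref{rem:wuc}(ii): a collection $\cF$ equals the set of connected coalitions $\mathcal{C}_E(N)$ of \emph{some} communication graph exactly when $\emptyset\in\cF$ and $\cF$ is normal, weakly union-closed and 2-accessible. So four of the five conditions already pin down the family of \emph{all} communication graphs, and the path property is the extra condition whose job is to cut this family down to the cycle-free ones. I would therefore prove the two implications separately, each being a translation between a graph-theoretic statement and the corresponding statement about $\cF$.

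For the ``only if'' direction, assume $\cF=\mathcal{C}_E(N)$ for a cycle-free graph $G=(N,E)$. The first four conditions hold automatically, since every connected-coalition collection contains $\emptyset$ and is normal, weakly union-closed and 2-accessible (Remark~\ref{rem:wuc}(ii), Remark~\ref{rem:augm}(iv)). It then remains to verify that a forest satisfies the path property. Here I would use the dictionary between $\cF$ and $G$ set up in the line-graph subsection: an \emph{extreme player} of $S\in\cF$ is an $i\in S$ with $S\setminus i\in\cF$, and $S$ is a \emph{path in $\cF$} when it has exactly two extreme players. In a forest, a connected coalition is a subtree, its extreme players are precisely the leaves of that subtree, and a path in $\cF$ is exactly a graph-theoretic path. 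Acyclicity yields uniqueness of paths between vertices, so the required union/intersection behaviour of paths can be checked directly, because the overlap of two subtrees of a forest is again a subtree whose leaf structure is governed by the unique paths involved.

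For the ``if'' direction, assume $\cF$ satisfies all five conditions. First I would reconstruct a graph: by the base characterization, $\cF=\mathcal{C}_E(N)$ for the canonical graph $G$ whose edges are the feasible pairs, $E:=\{\{i,j\}\mid\{i,j\}\in\cF\}$, with normality, weak union-closure and 2-accessibility guaranteeing that the connected coalitions of this $G$ are exactly $\cF$. The real content is then to show that the path property forces $G$ to be cycle-free, which I would do by contraposition: if $G$ contained a cycle, I would exhibit a connected coalition (hence a member of $\cF$) violating the path property, exploiting that a cycle yields two distinct internal paths between a pair of vertices and therefore a coalition that is a path in $\cF$ in more than one incompatible way, or whose extreme-player count is forced to be wrong. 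Translating ``$G$ has a cycle'' into such a failure is the step I expect to be the main obstacle, since it requires controlling, purely in terms of $\cF$, which vertices of a coalition are extreme (the leaf/cut-vertex dichotomy of the induced subgraph) and then selecting from a hypothetical cycle the specific sub-coalitions on which the path property is tested. Once the correspondence between extreme players and this dichotomy is pinned down through 2-accessibility, the contradiction with the path property should follow, completing the characterization.
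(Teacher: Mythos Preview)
Two preliminary remarks. First, the paper does not actually prove this theorem: it is stated as a result of \citet{bri09}, so there is no in-paper argument to compare against. Second, the printed statement contains a typo. The theorem appears in the subsection on \emph{cycle-complete} communication graphs, is introduced by ``\citet{bri09} has characterized cycle-complete communication graphs as follows'', and is followed by the remark that the path property may be replaced by closure under intersection. The intended statement is therefore about cycle-\emph{complete} graphs; the cycle-free case was already handled in the preceding subsection via weak path union stability.

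Your plan takes the printed statement literally and aims to characterize forests by the path property. The ``only if'' direction is sound: in a forest, an $\cF$-path is a graph path, and any two vertices are the extreme players of at most one such path. The ``if'' direction, however, cannot be completed, because the literal statement is false. Take $G=K_3$: the associated system is $\cF=2^{\{1,2,3\}}$, which contains $\emptyset$, is normal, weakly union-closed and 2-accessible, and satisfies the path property (the only $\cF$-paths are the three pairs $\{i,j\}$, since $\{1,2,3\}$ has three extreme players), yet $K_3$ is not cycle-free. Your contraposition step --- ``if $G$ contained a cycle, exhibit a coalition violating the path property'' --- breaks exactly here: a \emph{chordless} cycle of length at least four does yield two distinct $\cF$-paths with the same extreme pair, but a triangulated cycle does not. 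That gap between ``has a cycle'' and ``has a long induced cycle'' is precisely the gap between cycle-free and cycle-complete.

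For the intended (cycle-complete) statement your overall architecture is correct: reconstruct $G$ from the feasible pairs using the base characterization, then argue by contraposition. The substantive step becomes: if $G$ is not cycle-complete, it contains an induced chordless cycle $i_1i_2\cdots i_ki_1$ with $k\geq 4$; then the two arcs $\{i_1,i_2,\ldots,i_{k-1}\}$ and $\{i_{k-1},i_k,i_1\}$ are distinct $\cF$-paths, both with extreme players $i_1$ and $i_{k-1}$, contradicting the path property.
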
 
$\cF$ has the path property if for every pair of players $i,j$, there is at most
one path having $i,j$ as extremal players. Alternatively, the path property can
be replaced by closure under intersection.

\section{Acknowledgment}
The author is indebted in particular to Ulrich Faigle, Jean Derks, and Ren\'e van
den Brink for fruitful discussions, and for giving him an incentive to write a
survey on this topic.

\bibliographystyle{plainnat}

\bibliography{../BIB/fuzzy,../BIB/grabisch,../BIB/general}

\end{document}